\documentclass[a4paper,11pt]{book}
\usepackage[T1]{fontenc}
\usepackage[utf8]{inputenc}
\usepackage{lmodern}
\usepackage{comment}
\usepackage{standalone, gensymb, tensor, amsmath,amsfonts,amssymb,amsthm,bbm,bm, braket}
\usepackage[T1]{fontenc}
\usepackage{hyperref}
\usepackage[inline]{enumitem}
\usepackage{scalerel, physics}
\usepackage{wasysym}
\usepackage{comment}
\usepackage{enumerate}
\usepackage{graphicx}
\usepackage{mathtools}
\usepackage{soul}
\usepackage{xparse}
\usepackage{dsfont}
\usepackage{ifthen}
\usepackage{tensor}
\usepackage{tikz}
\usepackage{pgfplots}
\usepackage{tikz-network}
\usepackage{simplewick} 
\usepackage{mathrsfs}

\usetikzlibrary{patterns,decorations.pathreplacing}
\theoremstyle{break}        

\theoremstyle{break}

\usepackage{color}
\definecolor{OliveGreen}{RGB}{85,107,47}
\definecolor{NavyBlue}{RGB}{0,0,128}
\usepackage{tensor}
\usepackage{xifthen}
\usepackage{xargs}

\definecolor{blue1}{RGB}{0, 0, 255}
\definecolor{blue2}{RGB}{0, 150, 255}
\definecolor{blue3}{RGB}{0, 71, 171}
\definecolor{blue4}{RGB}{100, 149, 237}
\definecolor{blue5}{RGB}{93, 63, 211}
\definecolor{blue6}{RGB}{101,147,245}
\definecolor{blue7}{RGB}{176,223,229}
\definecolor{blue8}{RGB}{0,0,128}
\definecolor{blue9}{RGB}{0,108,255}
\definecolor{blue10}{RGB}{101,147,245}
\definecolor{blue11}{RGB}{115,194,251}
\definecolor{blue12}{RGB}{87,160,211}
\definecolor{blue13}{RGB}{137,207,240}
\definecolor{blue14}{RGB}{29,41,81}
\definecolor{blue15}{RGB}{14,77,146}
\definecolor{blue16}{RGB}{15,82,186}

\definecolor{red1}{RGB}{238, 75, 43}
\definecolor{red2}{RGB}{233, 116, 81}
\definecolor{red3}{RGB}{222, 49, 99}
\definecolor{red4}{RGB}{250, 160, 160}
\definecolor{red5}{RGB}{236, 88, 0}
\definecolor{red6}{RGB}{232,102,102}
\definecolor{red7}{RGB}{202,52,51}
\definecolor{red8}{RGB}{205,92,92}
\definecolor{red9}{RGB}{178,34,34}
\definecolor{red10}{RGB}{164,90,82}
\definecolor{red11}{RGB}{255,8,0}
\definecolor{red12}{RGB}{202,52,51}
\definecolor{red13}{RGB}{66,13,9}
\definecolor{red14}{RGB}{141,2,31}
\definecolor{red15}{RGB}{250,128,114}
\definecolor{red16}{RGB}{237,41,57}

\definecolor{yellow1}{RGB}{254,220,86}
\definecolor{yellow2}{RGB}{255,229,180}
\definecolor{yellow3}{RGB}{238,220,130}
\definecolor{yellow4}{RGB}{253,165,15}
\definecolor{yellow5}{RGB}{255,195,11}
\definecolor{yellow6}{RGB}{218,165,32}
\definecolor{yellow7}{RGB}{255,211,0}
\definecolor{yellow8}{RGB}{248,222,126}
\definecolor{yellow9}{RGB}{245,245,220}
\definecolor{yellow10}{RGB}{248,228,115}

\definecolor{grey1}{RGB}{98,98,98}
\definecolor{grey2}{RGB}{211,211,211}
\definecolor{grey3}{RGB}{192,192,192}
\definecolor{grey4}{RGB}{169,169,169}
\definecolor{grey5}{RGB}{246,246,246}
\definecolor{grey6}{RGB}{32,32,32}
\definecolor{grey7}{RGB}{64,64,64}
\definecolor{grey8}{RGB}{96,96,96}
\definecolor{grey9}{RGB}{128,128,128}
\definecolor{grey10}{RGB}{160,160,160}
\definecolor{grey11}{RGB}{224,224,224}
\definecolor{grey12}{RGB}{180,180,180}

\definecolor{green1}{RGB}{80, 180, 152}

\definecolor{orange}{RGB}{255, 116, 23}
\definecolor{orange2}{RGB}{244, 174, 114}

\newcommandx{\fineq}[5][1=-.8ex,2=1,3=1,5=0]{
	\begin{tikzpicture}[baseline={([yshift=#1]current  bounding  box.center)}, scale = #2, every node/.style={scale = #3},rotate around={#5:(0,0)},every node/.style={transform shape}]
		#4
	\end{tikzpicture}
}

\newcommandx{\tikzdiagup}{
	\tikz {\draw[thick] (0,0)--(0.15,0.15); \draw (0,0) rectangle (0.15,0.15);}
}

\newcommandx{\gatecross}[1][1=0.5]{
	\pgfmathparse{#1/2.0}
	\let\x\pgfmathresult
	\draw[thick] (-\x,-\x) -- (\x,\x);
	\draw[thick] (\x,-\x) -- (-\x,\x);
}
\newcommandx{\gatesqu}[2][1=0.25,2=]{
	\pgfmathparse{#1/2.0}
	\let\x\pgfmathresult
	\ifthenelse{\equal{#2}{}}{
		\draw[thick, fill=white, rounded corners=2pt] (-\x,\x) rectangle (\x,-\x);
	}{
		\draw[thick, fill=#2, rounded corners=2pt] (-\x,\x) rectangle (\x,-\x);
	}
}

\newcommandx{\gatemark}[2][1=0.075,2=tr]{
	\pgfmathparse{#1}
	\let\l\pgfmathresult
	\ifthenelse{\equal{#2}{topleft}}{
		\draw[thick] (0,\l) -- ++(-\l,0) --++ (0,-\l);
	}{}
	\ifthenelse{\equal{#2}{topright}}{
		\draw[thick] (0,\l) -- ++(\l,0) --++ (0,-\l);
	}{}
	
	\ifthenelse{\equal{#2}{bottomleft}}{
		\draw[thick] (0,-\l) -- ++(-\l,0) --++ (0,\l);
	}{}
	\ifthenelse{\equal{#2}{bottomright}}{
		\draw[thick] (0,-\l) -- ++(\l,0) --++ (0,\l);
	}{}
	\ifthenelse{\equal{#2}{right}}{
		\draw[thick] (-\l/2,\l/2) -- ++(\l,0) --++ (0,-\l);
	}{}
	\ifthenelse{\equal{#2}{left}}{
		\draw[thick] (\l/2,\l/2) -- ++(-\l,0) --++ (0,-\l);
	}{}
	\ifthenelse{\equal{#2}{rightf}}{
		\draw[thick] (-\l/2,-\l/2) -- ++(\l,0) --++ (0,\l);
	}{}
	\ifthenelse{\equal{#2}{leftf}}{
		\draw[thick] (\l/2,-\l/2) -- ++(-\l,0) --++ (0,\l);
	}{}
	
}

\newcommandx{\squaregate}[3][1=0,2=0,3=white]
{
	\begin{scope}[shift={(#1,#2)},rounded corners= 2pt]
		\draw[thick,fill=#3] (-.13,-.13) rectangle (.15,.15);
	\end{scope}
}

\newcommandx{\roundgate}[6][1=0,2=0,3=1,4=topright,5=white,6=-1]{
	\pgfmathparse{#3}
	\let\l\pgfmathresult
	\begin{scope}[shift={(#1,#2)}]
		\gatecross[\l]
			\pgfmathparse{\l/2.0}
		\let\s\pgfmathresult
		\gatesqu[\s][#5]
		\pgfmathparse{\l*0.15}
		\let\m\pgfmathresult
	\ifthenelse{\equal{#6}{-1}}{		\gatemark[\m][#4]
	}{	\node at ({0},{0}) {\scalebox{1}{$#6$}};}
\end{scope}
}

\newcommandx{\wcirc}[2]{\begin{scope}
		\draw[fill=white] (#1,#2) circle (0.15);	\end{scope}} 
\newcommandx{\wcircc}[2]{\begin{scope}
		\draw[fill=white] (#1,#2) circle (0.13);	\end{scope}} 
\newcommandx{\wsqr}[2]{\begin{scope}
		\draw[fill=white,shift={(#1,#2)}] (-.13,.13) rectangle (.13,-.13);	\end{scope}} 
\newcommandx{\wsqrr}[2]{\begin{scope}
		\draw[fill=white,shift={(#1,#2)}] (-.11,.11) rectangle (.11,-.11);	\end{scope}}
\newcommandx{\bcirc}[2]{\begin{scope}
		\draw[fill=black] (#1,#2) circle (0.15);	\end{scope}} 
\newcommandx{\thetastate}[4][1=0,2=0,3=1,4=]{
	\pgfmathparse{#3/2}
	\let\l\pgfmathresult
	\pgfmathparse{\l*0.15}
	\let\m\pgfmathresult
	\begin{scope}[shift={(#1,#2)}]
		\draw[thick] (0,0)--(\l,\l);
		\draw[thick] (0,0)--(-\l,\l);
		\ifthenelse{\equal{#4}{}}{
			\draw[fill=white] (0,0) circle (0.15);
		}{
			\draw[thick, fill=#4] (0,0) circle (0.15);
		}
	\end{scope}
}

\newcommandx{\thetastateflipped}[4][1=0,2=0,3=1,4=]{
	\pgfmathparse{#3/2}
	\let\l\pgfmathresult
	\pgfmathparse{\l*0.15}
	\let\m\pgfmathresult
	\begin{scope}[shift={(#1,#2)}]
		\draw[thick] (0,0)--(\l,-\l);
		\draw[thick] (0,0)--(-\l,-\l);
		\ifthenelse{\equal{#4}{}}{
			\draw[fill=white] (0,0) circle (0.15);
		}{
			\draw[thick, fill=#4] (0,0) circle (0.15);
		}
	\end{scope}
}

\newcommandx{\vertgate}[5][1=0,2=0,3=4,4=orange,5=topright]
{
	\begin{scope}[shift={(#1,#2)}]
		\ifthenelse{\equal{#3}{1}}{
			\roundgate[0][0][1][#5][#4]
		}{
			\foreach \n[evaluate=\n as \y using {2*\n-2}] in {1,...,#3}{
				\roundgate[0][\y][1][#5][#4]
			}
		}
	\end{scope}
}

\newcommandx{\circgate}[4][1=0,2=0,3=white,4]
{

         \let\m\pgfmathresult
	\begin{scope}[shift={(#1,#2)}]
	\draw[thick,fill=#3] (0,0) circle (.125);
	\gatemark[.125*\m][#4]
	\end{scope}
	
}

\newcommandx{\tsfmatV}[8][1=0,2=0,3=l,4=4,5=tr,6=init,7=orange,8=topright]{
	\begin{scope}[shift={(#1,#2)}]
		\ifthenelse{\equal{#3}{l}}{
			\pgfmathsetmacro{\flag}{0}
		}{
			\pgfmathsetmacro{\flag}{1}
		}
		
		\foreach \y[evaluate=\y as \x using {mod(\y+\flag,2)}] in {1,...,#4}{
			\roundgate[\x][\y][1][#8][#7]
		}
		\ifthenelse{\equal{#5}{tr}}{
			\foreach \y[evaluate=\y as \x using {mod(\y+\flag,2)}] in {#4}{
				\draw [fill=white] (\x-0.5,\y+0.5) circle (0.15);
				\draw [fill=white] (\x+0.5,\y+0.5) circle (0.15);
			}
		}{}
		\ifthenelse{\equal{#6}{init}}{
			\thetastate[\flag][0][1][#7]
		}{}
	\end{scope}
}

\newcommandx{\leftriangle}[5][1=0,2=0,3=4,4=orange,5=topright]{
	\begin{scope}[shift={(#1,#2)}]
		\pgfmathsetmacro{\t}{#3}
		\pgfmathsetmacro{\steps}{ceil(\t/2)}
		\foreach \i[evaluate=\i as \x using -\t+2*\i-1, evaluate=\i as \ylim using \t-2*\i+2] in {1,...,\steps}{
			\foreach \y[evaluate=\y as \thisx using {\x+\y-1}] in {1,...,\ylim}{
				\roundgate[\thisx][\y][1][#5][#4]
			}
		}
	\end{scope}
}

\newcommandx{\rightriangle}[5][1=0,2=0,3=4,4=orange,5=topright]{
	\begin{scope}[shift={(#1,#2)}]
		\pgfmathsetmacro{\t}{#3}
		\pgfmathsetmacro{\steps}{ceil(\t/2)}
		\foreach \i[evaluate=\i as \x using -\t+2*\i-1, evaluate=\i as \ylim using \t-2*\i+2] in {1,...,\steps}{
			\foreach \y[evaluate=\y as \thisx using {-\x-\y+1}] in {1,...,\ylim}{
				\roundgate[\thisx][\y][1][#5][#4]
			}
		}
	\end{scope}
}

\newcommandx{\eigenVL}[8][1=0,2=0,3=l,4=5,5=tr,6=init,7=orange,8=topright]{
	\begin{scope}[shift={(#1,#2)}]
		\pgfmathsetmacro{\t}{#4}
		\leftriangle[0][0][\t][#7][#8]
		
		\ifthenelse{\equal{#6}{init}}{
			\drawinitstate[0][0][l][\t][#7]
		}{}
		
		\ifthenelse{\equal{#5}{tr}}{
			\draw[fill=white] \foreach \x in {0,...,\t} {(\x-0.5-\t,0.5+\x) circle (0.15)};
			\ifthenelse{\equal{#3}{r}}{
				\draw[fill=white] (0.5,\t+0.5) circle (0.15);
			}{}
		}{}
		\ifthenelse{\equal{#5}{parttr}}{
			\draw[fill=white] \foreach \x in {0,...,\t} {(\x-0.5-\t,0.5+\x) circle (0.15)};
		}{}
	\end{scope}
}

\newcommandx{\eigenVR}[8][1=0,2=0,3=l,4=5,5=tr,6=init,7=orange,8=topright]{
	\begin{scope}[shift={(#1,#2)}]
		\pgfmathsetmacro{\t}{#4}
		\rightriangle[0][0][\t][#7][#8]
		
		\ifthenelse{\equal{#6}{init}}{
			\drawinitstate[0][0][r][\t][#7]
		}{}
		
		\ifthenelse{\equal{#5}{tr}}{
			\draw[fill=white] \foreach \x in {0,...,\t}{(-\x+0.5+\t,0.5+\x) circle (0.15)};
			\ifthenelse{\equal{#3}{l}}{
				\draw[fill=white] (-0.5,\t+0.5) circle (0.15);
			}{}
		}{}
	\end{scope}
}

\newcommandx{\tra}[2][1]{\underset{#1}{\text{tr}}\left[#2\right]}

\newcommandx{\tsfmatDgate}[7][1=0,2=0,3=l,4=4,5=tr,6=orange,7=topright]
{
	\begin{scope}[shift={(#1,#2)}]
		\ifthenelse{\equal{#3}{l}}{
			\pgfmathsetmacro{\flag}{-1}
		}{
			\pgfmathsetmacro{\flag}{1}
		}
		\pgfmathsetmacro{\t}{#4}
		\foreach \i[evaluate=\i as \x using {\flag*\i}, evaluate=\i as \y using \i] in {1,...,\t}{
			\roundgate[\x][\y][1][#7][#6]
		}
		
		\ifthenelse{\equal{#5}{tr}}{
			\foreach \i[evaluate=\i as \x using {\flag*\i}, evaluate=\i as \y using \i] in {\t}{
				\draw [fill=white] (\x-0.5,\y+0.5) circle (0.15);
				\draw [fill=white] (\x+0.5,\y+0.5) circle (0.15);
			}  
		}{}
	\end{scope}
	
}

\newcommandx{\tsfmatD}[8][1=0,2=0,3=l,4=4,5=tr,6=init,7=orange,8=topright]{
	\begin{scope}[shift={(#1,#2)}]
		\ifthenelse{\equal{#6}{init}}{
			\thetastate[0][0][1][#7]
		}{}
		
		\ifthenelse{\equal{#3}{l}}{
			\pgfmathsetmacro{\flag}{-1}
		}{
			\pgfmathsetmacro{\flag}{1}
		}
		
		\pgfmathsetmacro{\t}{#4}
		\foreach \i[evaluate=\i as \x using {\flag*\i}, evaluate=\i as \y using \i] in {1,...,\t}{
			\roundgate[\x][\y][1][#8][#7]
		}
		
		\ifthenelse{\equal{#5}{tr}}{
			\foreach \i[evaluate=\i as \x using {\flag*\i}, evaluate=\i as \y using \i] in {\t}{
				\draw [fill=white] (\x-0.5,\y+0.5) circle (0.15);
				\draw [fill=white] (\x+0.5,\y+0.5) circle (0.15);
			}  
		}
		\ifthenelse{\equal{#5}{parttr}}{
			\foreach \i[evaluate=\i as \x using {\flag*\i}, evaluate=\i as \y using \i] in {\t}{
				\draw [fill=white] (\x+0.5*\flag,\y+0.5) circle (0.15);
			}  
		}
		{}
	\end{scope}
}

\newcommandx{\drawinitstate}[5][1=0,2=0,3=l,4=4,5=orange]{
	\pgfmathsetmacro{\t}{#4}
	\begin{scope}[shift={(#1,#2)}]
		\pgfmathsetmacro{\steps}{ceil((\t-1)/2)}
		\ifthenelse{\equal{#3}{l}}{
			\foreach \i[evaluate=\i as \x using -\t+2*\i] in {0,...,\steps}{
				\thetastate[\x][0][1][#5]
			}
		}{
			\foreach \i[evaluate=\i as \x using -\t+2*\i] in {0,...,\steps}{      
				\thetastate[-\x][0][1][#5]
			}
		}
	\end{scope}
}

\newcommandx{\drawinitstateflipped}[5][1=0,2=0,3=l,4=4,5=orange]{
	\pgfmathsetmacro{\t}{#4}
	\begin{scope}[shift={(#1,#2)}]
		\pgfmathsetmacro{\steps}{ceil((\t-1)/2)}
		\ifthenelse{\equal{#3}{l}}{
			\foreach \i[evaluate=\i as \x using -\t+2*\i] in {0,...,\steps}{
				\thetastateflipped[\x][0][1][#5]
			}
		}{
			\foreach \i[evaluate=\i as \x using -\t+2*\i] in {0,...,\steps}{      
				\thetastateflipped[-\x][0][1][#5]
			}
		}
	\end{scope}
}

\newcommandx{\eigenDL}[6][1=0,2=0,3=l,4=4,5=orange,6=topright]{
	\begin{scope}[shift={(#1,#2)}]
		\pgfmathsetmacro{\t}{#4}
		\ifthenelse{\equal{#3}{l}}{
			\eigenVL[0][0][l][\t][tr][init][#5][#6]
			\pgfmathsetmacro{\t}{#4-1}
			\rightriangle[1][0][\t][#5][#6]
			\drawinitstate[1][0][r][\t][#5]
		}{
			\begin{scope}[shift={(-0.5,0.5)}]
				\foreach \i[evaluate=\i as \x using \i, evaluate=\i as \y using \i] in {0,...,\t}{      
					\draw (\x,\y)--++(0.5,0);
					\draw[fill=white] (\x,\y) circle (0.15);
				}
			\end{scope}
		}
	\end{scope}
}

\newcommandx{\eigenDR}[6][1=0,2=0,3=l,4=4,5=orange,6=topright]{
	\begin{scope}[shift={(#1,#2)}]
		\pgfmathsetmacro{\t}{#4}
		\ifthenelse{\equal{#3}{r}}{
			\eigenVR[0][0][r][\t][tr][init][#5][#6]
			\pgfmathsetmacro{\t}{#4-1}
			\leftriangle[-1][0][\t][#5][#6]
			\drawinitstate[-1][0][l][\t][#5]
		}{
			\begin{scope}[shift={(0.5,0.5)}]
				\foreach \i[evaluate=\i as \x using \i, evaluate=\i as \y using \t-\i] in {0,...,\t}{      
					\draw (\x,\y)--++(0.5,0);
					\draw[fill=white] (\x+0.5,\y) circle (0.15);
				}
			\end{scope}
		}
	\end{scope}
}

\newcommandx{\idonpurity}[2][1=0,2=0]
{
	\begin{scope}[shift={(#1,#2)}]
		\draw[thick] (-0.5,0)--++(-0.1,0.1)--++(0,0.2)--++(0.1,-0.1);
		\draw[thick] (-0.5,0.4)--++(-0.1,0.1)--++(0,0.2)--++(0.1,-0.1);
		\draw[thick] (0.5,0)--++(0.1,0.1)--++(0,0.2)--++(-0.1,-0.1);
		\draw[thick] (0.5,0.4)--++(0.1,0.1)--++(0,0.2)--++(-0.1,-0.1);
	\end{scope}
}

\newcommandx{\swaponpurity}[2][1=0,2=0]
{
	\begin{scope}[shift={(#1,#2)}]
		\draw[thick] (-0.5,0)--++(-0.2,0.2)--++(0,0.6)--++(0.2,-0.2);
		\draw[thick] (-0.5,0.2)--++(-0.075,0.075)--++(0,0.2)--++(0.075,-0.075);
		\draw[thick] (+0.5,0)--++(+0.2,0.2)--++(0,0.6)--++(-0.2,-0.2);
		\draw[thick] (+0.5,0.2)--++(+0.075,0.075)--++(0,0.2)--++(-0.075,-0.075);
	\end{scope}
}

\newcommandx{\hook}[4][1=0,2=0,3=t,4=l]{
	\begin{scope}[shift={(#1,#2)}]
		\ifthenelse{\equal{#3}{t}}{
			\ifthenelse{\equal{#4}{l}}{\draw[thick] (0.5,-0.5) arc (45:-90:0.15);}{\draw[thick] (0.5,-0.5) arc (45:270:0.15);}
		}{\ifthenelse{\equal{#4}{l}}{\draw[ thick] (0.5,-0.5) arc (-45:90:0.15);}{\draw[ thick] (0.5,-0.5) arc (315:90:0.15);}
		}
	\end{scope}
}

\newcommandx{\hhook}[4][1=0,2=0,3=t,4=l]{
	\begin{scope}[shift={(#1,#2)}]
		\ifthenelse{\equal{#3}{t}}{
			\ifthenelse{\equal{#4}{l}}{\draw[thick] (0.5,-0.5) arc (-45:175:0.15);}{\draw[thick] (0.5,-0.5) arc (225:0:0.15);}
		}{\ifthenelse{\equal{#4}{l}}{\draw[ thick] (0.5,-0.5) arc (-45:180:-0.15);}{\draw[ thick] (0.5,-0.5) arc (45:-180:0.15);}
		}
	\end{scope}
}

\newcommandx{\Pproj}[3][3=$P_\Lambda$]{
\begin{scope}[shift={(#1-.5,#2-1)}]
\draw[thick,fill=white] (0,0)rectangle (1,2);
\draw[thick] (0,1.5)--(-.5,1.5);
\draw[thick] (1,1.5)--(1.5,1.5);
\draw[thick] (0,.5)--(-.5,.5);
\draw[thick] (1,.5)--(1.5,.5);
\node[scale=2] at (.5,1) {#3};
\end{scope}}

\definecolor{FcolU}{rgb}{0.71,0.78,0.91}
\definecolor{colLines}{rgb}{0.31,0.31,0.31}
\definecolor{colVMPSLines}{rgb}{0.11,0.11,0.11}
\definecolor{IcolUc}{rgb}{0.71,0.41,0.42}
\definecolor{IcolU}{rgb}{0.71,0.8,0.76}
\definecolor{IcolVMPSc}{rgb}{0.73,0.69,0.7}
\definecolor{IcolVMPS}{rgb}{0.81,0.77,0.78}
\definecolor{colObs}{rgb}{1.,1.,1.}

\newcommandx{\eightlegs}[2][1=0,2=0]{
	\begin{scope}[shift={(#1,#2)}]
		\foreach \x in {1,...,8}{
			\draw (\x, 0)--++(0,0.25);
			\draw[fill] (\x,0) circle (0.05);
		}
		\foreach \x in {1,3}{
			\pgfmathsetmacro\result{2*\x-1} 
			\node () at (\result,-0.5) {$i_{\x}$};
			\pgfmathsetmacro\result{2*\x}
			\node () at (\result,-0.5) {$j_{\x}$};	
		}
		\foreach \x in {2,4}{
			\pgfmathsetmacro\result{2*\x} 
			\node () at (\result,-0.5) {$i_{\x}$};
			\pgfmathsetmacro\result{2*\x-1}
			\node () at (\result,-0.5) {$j_{\x}$};	
		}
	\end{scope}
}

\newcommandx{\MPSinitialstate}[6][1=0,2=0,3=orange,4=topright,5=-1,6=1]{
\begin{scope}[shift={(#1,#2)},rounded corners=1.5pt]
	\draw[black,thick,fill=#3] 
	(-0.25*#6,.25*#6)--++(.5*#6,0)--++(0,-.3*#6)--++(-.5*#6,0)--cycle;
	\draw[thick] (-.25*#6,.25*#6)--++(-.25*#6,.25*#6);
	\draw[thick] (.25*#6,.25*#6)--++(.25*#6,.25*#6);
	\draw[very thick] (-.5*#6,.-.05*#6)--++(1*#6,0);
\ifthenelse{\equal{#5}{-1}}{
	\ifthenelse{\equal{#4}{topright}}{\draw[thick,rounded corners=0.3]
	(-.1*#6,.15*#6)--++(.2*#6,0)--++(0,-0.1*#6); }{}
	\ifthenelse{\equal{#4}{topleft}}{\draw[thick,rounded corners=0.3]
	(.1*#6,.15*#6)--++(-.2*#6,0)--++(0,-0.1*#6); }{}
	\ifthenelse{\equal{#4}{bottomleft}}{\draw[thick,rounded corners=0.3]
	(.1*#6,-.15*#6)--++(-.2*#6,0)--++(0,0.1*#6); }{}	\ifthenelse{\equal{#4}{bottomright}}{\draw[thick,rounded corners=0.3]
	(-.1*#6,-.15*#6)--++(.2*#6,0)--++(0,0.1*#6); }{}}{			\node at ({0},{0.085*#6}) {\scalebox{1.}{{$#5$}}};}
\end{scope}
}

\newcommandx{\Cmatrix}[6][1=0,2=0,3=2,4=orange,5=,6=topright]{
	\pgfmathsetmacro\result{#3-1} 
	\begin{scope}[shift={(#1,#2)}]
		\foreach \i in {0,...,\result}
		{\foreach \j in {0,...,\i}
			{\roundgate[\i+\j][\i-\j][1][#6][#4]}
		}
		\ifthenelse{\equal{#5}{init}}{
			\foreach \i in {0,...,#3}
			{
				\MPSinitialstate[-1+2*\i][-1][#4]
			}
		}{}
	\end{scope}
}

\newcommand{\msqr}{\fineq[-0.6ex][0.8][1]{\sqrstate[0][0][]}}

\newcommand{\mcirc}{\fineq[-0.6ex][0.8][1]{\cstate[0][0][]}}
\newcommand{\mtrig}{\fineq[-0.6ex][0.8][1]{\trigstate[0][0][]}}
\newcommand{\mcross}{\fineq[-0.6ex][0.8][1]{\crossstate[0][0][]}}
\renewcommand{\bcirc}{\fineq[-0.6ex][0.8][1]{\cstate[0][0][][black]}}

\newcommandx{\cstate}[4][1=0,2=0,3= ,4=white]{
	\begin{scope}[shift={(0,0)}]
				\draw[fill=#4,thick] (#1,#2) circle (0.13);
				\node[scale=1.1] at (#1,#2) {$#3$};
\end{scope}
}
\newcommandx{\sqrstate}[4][1=0,2=0,3= ,4=white]{
	\begin{scope}[shift={(#1,#2)}]
		\draw[thick,fill=#4] (-0.12,-0.12) rectangle (0.12,0.12) ;
		\node[scale=1.1] at (0,0) {$#3$};
	\end{scope}
}
\newcommandx{\cstatep}[4][1=0,2=0,3= ,4=white,]{
	\begin{scope}[shift={(0,0)}]
				\draw[fill=#4,thick] (#1,#2) circle (0.15);
				\node[scale=1.1] at (#1,#2) {\scalebox{.85}{$#3$}};
\end{scope}
}
\newcommandx{\trigstatel}[4][1=0,2=0,3= ,4=white]{
	\begin{scope}[shift={(#1,#2)}]
	\begin{scope}[rotate around={-15:(0,0)}]
		\draw[thick,fill=#4] (-0.12,-0.07) -- (0.12,-0.07) -- (0,0.13) --cycle ;
		\node[scale=1.1] at (0,0) {$#3$};
	\end{scope}
	\end{scope}
}

\newcommandx{\trigstater}[4][1=0,2=0,3= ,4=white]{
	\begin{scope}[shift={(#1,#2)}]
	\begin{scope}[rotate around={15:(0,0)}]
		\draw[thick,fill=#4] (-0.12,-0.07) -- (0.12,-0.07) -- (0,0.13) --cycle ;
		\node[scale=1.1] at (0,0) {$#3$};
	\end{scope}
	\end{scope}
}

\newcommandx{\trigstate}[4][1=0,2=0,3= ,4=white]{
	\begin{scope}[shift={(#1,#2)}]
		\draw[thick,fill=#4] (-0.12,-0.07) -- (0.12,-0.07) -- (0,0.13) --cycle ;
		\node[scale=1.1] at (0,0) {$#3$};
	\end{scope}
}

\newcommandx{\crossstate}[4][1=0,2=0,3= ,4=white]{
	\begin{scope}[shift={(#1,#2)}]
		\draw[thick] (-0.12,-0.12) -- (0.12,0.12);
		\draw[thick] (0.12,-0.12) -- (-0.12,0.12);
		\node[scale=1.1] at (0,0) {$#3$};
	\end{scope}
}

\newcommandx{\pairproduct}[2][1=0,2=0]
{\begin{scope}[shift={(#1 ,#2)}]
\draw[thick] (-.5,.5) arc(-135:-45:1/1.414);
\sqrstate[0][.5-.1414][][black]	
\end{scope}
}
\newcommandx{\bellpair}[2][1=0,2=0]
{\begin{scope}[shift={(#1 ,#2)}]
		\draw[thick] (-.5,.5) arc(-135:-45:1/1.414);
	\end{scope}
}
\newcommandx{\charge}[3][1=0,2=0,3=black]
{
	\ifthenelse{\equal{#3}{blue}}{\def \chargecolor{grey6}
	}{
	\ifthenelse{\equal{#3}{red}}{\def \chargecolor{grey6}}{\def \chargecolor {#3}}}
\begin{scope}[shift={(#1 ,#2)}]
	\draw[ fill=\chargecolor] circle (0.13);        
\end{scope}
}
\newcommandx{\trianglediag}[7][1=0,2=0,3=1,4=orange,5= ,6=-1,7=topright]
{\begin{scope}[shift={(#1 ,#2)}]
	\foreach \i in {0,...,#3}
	{	\foreach \j in {0,...,\i}
		{	\roundgate[-\j+2*\i][\j][1][topright][#4][#6]
		}
	}
	\foreach \i in {-1,...,#3}
	{\ifthenelse{\equal{#5}{bellpair}}{\bellpair[\i*2+1][-1]}{\ifthenelse{\equal{#5}{pairproduct}}{\pairproduct[\i*2+1][-1]}{\MPSinitialstate[\i*2+1][-1][#4][#7][#6]}}}
\end{scope}
}
\newcommandx{\projectorleg}[4][1=0,2=0,3=R,4=left]
{
	\begin{scope}[shift={(#1 ,#2)}]
		{\ifthenelse{\equal{#3}{R}}{		\draw[thick] (-.25,-.25)--++(.5,.5);}{\ifthenelse{\equal{#3}{L}}{		\draw[thick] (-.25,.25)--++(.5,-.5);}{\draw[thick] (-.25,0)--++(.5,0);}}
		}
	\draw[thick, fill=white] circle (0.13);
	\ifthenelse{\equal{#4}{right}}{
		\draw[thick] (.0,.07)--++(.07,0)--++(0,-.07);
		\node[scale=0.5] at (0,-.02) {$\alpha$};
	}{}
	\ifthenelse{\equal{#4}{left}}{	
	\draw[thick] (0,.07)--++(-.07,0)--++(0,-.07);
	\node[scale=0.5] at (0,-.03) {$\beta$};
	}{}
	\end{scope}
}
\usetikzlibrary{patterns,decorations.pathreplacing}

\usepackage{hyperref}
\usepackage{graphicx}
\usepackage{physics}
\usepackage{amsmath}
\usepackage{amsfonts}
\usepackage{amssymb}
\usepackage{doi}
\usepackage{apacite}
\usepackage[british]{babel}
\usepackage{pgfplots}
\usepackage{pgfplotstable}
\usepackage{color,ifthen,amsthm,amsmath,amsxtra,amsfonts,fancyhdr}
\usepackage{dsfont,graphicx,bm,tikz,scalerel,wasysym,bbm,amsthm,empheq,amssymb,changepage}
\newcommand{\be}{\begin{equation}}
\newcommand{\ee}{\end{equation}}

\newtheorem*{theorem}{Theorem}
\newtheorem{lemma}{Lemma}
\newtheorem{definition}{Definition}

\makeatletter
\renewcommand{\@chapapp}{}

\makeatother

\title{\Huge \textbf{Non-Equilibrium Quantum Many-Body Physics with Quantum Circuits}  
}
\author{\textsc{Bruno Bertini}\thanks{\url{b.bertini@bham.ac.uk}}}

\begin{document}

\frontmatter
\maketitle


\tableofcontents

\mainmatter

\chapter*{Preface}

These are the notes for the 4.5-hour course with the same title that I delivered in August 2025 at the Les Houches summer school \href{https://www.lptms.universite-paris-saclay.fr/leshouches2025/}{``Exact Solvability and Quantum Information''}. In these notes I pedagogically introduce the setting of brickwork quantum circuits and show that it provides a useful framework to study non-equilibrium quantum many-body dynamics in the presence of local interactions. I first show that brickwork quantum circuits evolve quantum correlations in a way that is fundamentally similar to local Hamiltonians, and then present examples of brickwork quantum circuits where, surprisingly, one can compute exactly several relevant dynamical and spectral properties in the presence of non-trivial interactions.\\ 

\noindent If you find an error please report to \url{b.bertini@bham.ac.uk}. 

\section*{Recommended material}

Most of the material discussed in this notes is reviewed in the recent article
\begin{itemize}
\item ``Exactly solvable many-body dynamics from space-time duality''~\cite{bertini2025exactly}
\end{itemize}

Other relevant reviews on quantum circuits and their use in quantum many-body physics --- more focussed on the case of random circuits --- are 
\begin{itemize}
\item ``Entanglement Dynamics in Hybrid Quantum Circuits''~\cite{potter2022entanglement}.  
\item ``Random Quantum Circuits''~\cite{fisher2023random}. 
\end{itemize}
For a general review on quantum quenches I suggest 
\begin{itemize}
\item ``Quench dynamics and relaxation in isolated integrable quantum spin chains''~\cite{essler2016quench}. 
\end{itemize}

Other references, relevant to the specific topics I will touch upon, will be given in the course of the discussion.

\section*{Acknowledgements} 

I would like to thank all the great researchers with whom I had the pleasure to collaborate on the topics discussed in these notes. A particular mention goes to Pavel Kos, Lorenzo Piroli, Katja Klobas, Alessandro Foligno, and especially Toma\v z Prosen, from whom I learnt most of the material presented here. I am also very grateful to the organisers of the school ``Exact Solvability and Quantum Information'' for inviting me to lecture on these exciting topics. Finally, I would like to thank the Royal Society for supporting my research through a University Research Fellowship (No.\ 201101).

\chapter{}

\section{Introduction}


When studying non-equilibrium quantum physics our goal is to understand\slash characterise\slash describe the dynamics of quantum many-body systems. One might ask what is there to understand since we already know the laws of quantum mechanics. 
The answer, of course, is a resounding ``everything!''. The quantum mechanical description of a many-body system is so complicated and inefficient that it does not offer any conceptual or practical understanding: we need an ``emergent'' (in the sense of~\cite{anderson1972more}) macroscopic theory to really make sense of the physics of quantum systems at mascroscopic scales. For systems at equilibrium an example of such an emergent theory is classical thermodynamics --- which we can derive from the microscopic laws via equilibrium statistical mechanics. For systems out of equilibrium, however, such an emergent theory is currently unknown. 

The goal of these lectures is to discuss a setting --- brickwork quantum circuits ---  where quantum many-body dynamics can be studied efficiently and its main properties can be thoroughly characterised. In analogy with (few body) classical systems, we will also try to define whether the dynamics of a quantum many-body system are ``integrable'' or ``chaotic''.

\subsection{Concrete Setting: Quantum Quench}

A concrete protocol that we can use to generate quantum many-body dynamics is the so-called ``Quantum Quench''~\cite{calabrese2006time}. It can be implemented as follows: 
\begin{itemize}
\item[(i)] Prepare the system in the ground state $\ket{\Psi_0}$;
\item[(ii)] Suddenly change (quench) a parameter in the Hamiltonian (e.g.\ switch on an interaction or an external field) $H_0 \mapsto H$;
\item[(iii)] Let the system evolve without any other external intervention, i.e., $\ket{\Psi_t}=e^{-i H t} \ket{\Psi_0}$. 
\end{itemize}
In fact, one can generally simplify the protocol by just considering a state $\ket{\Psi_0}$ and evolving it with a Hamiltonian $H$ of which it is not an eigenstate. In general, however, to observe non-trivial phenomenology one has to require that $\ket{\Psi_0}$ has non-zero overlap with a number of eigenstates of the Hamiltonian that is that is exponentially large in the volume. Moreover we will generally consider states $\ket{\Psi_0}$ with low entanglement. These properties are automatically verified in a Quantum Quench (i)--(iii) where both $H_0$ and $H$ feature local (sufficiently short-ranged) interactions. 

The questions that one can ask in this situation can be grouped in two main classes: those concerning the \emph{finite-time behaviour} of the system or those concerning the \emph{long-time behaviour}. Examples of questions in the first class are\\

Q: Are there some ``universal features'' that are common in the evolution of different systems?' If so what controls them? Are there some new out-of-equilibrium phenomena occurring at finite times?\\ 

\noindent Instead, examples of questions in the second class are:\\
 
Q: Is there a form of relaxation/stationary behaviour emerging despite the fact that the dynamics is purely unitary? If so what are its mechanisms? Is it possible to develop a statistical mechanical description of the putative steady state? \\

All these questions are \emph{extremely difficult} to answer because we have no general theory to describe quantum many-body systems in the presence of non-trivial interactions. For instance, \emph{no low-energy description is applicable} and \emph{perturbative approaches break down at large enough times}. Also our computational approaches are severely limited: exact approaches scale exponentially with the number of particles in the system (see, e.g.,~\cite{feynman2018simulating}'s take on the problem), while even our best approaches to treat numerically large one-dimensional systems (based on \emph{tensor networks}, cf.~Frank Verstraete's lectures~\cite{cuiper2026leshouches}) are hampered by the rapid growth of entanglement (almost unavoidable after a quantum quench). 

So, what can we do? We need to come up with new ideas. In the next section I will discuss a setting that, it turns out, allows us to make interesting progress.  

\section{Brickwork quantum circuits}

Let us begin by focussing on a one dimensional setting (we will see that the discussion can be directly generalised to higher dimensions). Specifically, we consider $2L$ ``qudits'', i.e., quantum systems with $d\geq 2$ states (qubits for $d=2$), arranged along a one-dimensional line and evolved by discrete applications of the unitary operator  
\be
\mathbb{U} =\mathbb{U}_e\mathbb{U}_o,\qquad \mathbb{U}_e=\bigotimes_{x=0}^{L-1} U^{(x)}_{x,x+1/2},\,\,\mathbb{U}_o=\bigotimes_{x=1}^{L} U^{(x-1/2)}_{x-1/2,x}\,.
\label{eq:floquetoperator}
\ee
Here we labelled the chain's sites using half integers (in $\mathbb Z_{2L}/2$) and took periodic boundary conditions by setting $L\equiv 0$. The operator $U^{(x)}_{y,z}\in U(d^{2L})$ acts as $U^{(x)}\in U(d^2)$ on the qudits at position $y$ and $z$ and as the identity on all the others. We see that $\mathbb{U}_{e/o}$ couples nearest neighbours (either integer sites to half-odd-integer ones (e) or half-odd-integer to integer ones (o)) with a coupling that can depend on the position ($U^{(x)}$ depends explicitly on $x$). 

More explicitly, by  ``evolved by discrete applications of $\mathbb{U}$'' I mean that $\ket{\Psi_t}$ --- the state of the system after $t\in \mathbb N$ steps of evolution --- is given by 
\be
\ket{\Psi_t} = \mathbb{U} \ket{\Psi_{t-1}}\,. 
\ee

Using the language of quantum computation, a system evolving according to a $\mathbb{U}$ of the form in Eq.~\eqref{eq:floquetoperator} is often referred to as \emph{brickwork quantum circuit} (BQC) (the generic term ``quantum circuit'' is used for any system of qudits with discrete time evolution and the specific way according to which the gates are applied is called ``architecture'' --- in this language we can say that Eq.~\eqref{eq:floquetoperator} has a brickwork architecture). In the same vein one can refer to $U^{(x)}$ as the ``local gate'', to $2L$ as the ``width'' of the circuit, and $2t$ as its ``depth''. On the other hand, one can think of $\mathbb{U}$ as the operator implementing one period of evolution --- Floquet operator --- in a periodically driven system in continuous time. In general one can consider BQCs where $U^{(x)}$ has a larger support (i.e., acts on more than two qubits) but for these lectures we will exclusively focus on nearest-neighbour cases.

Before continuing with the discussion of the basic physical properties of this discrete-time evolution, it is useful to represent it in a more intuitive, graphical form.

\subsection{Diagrammatic representation}	
\label{sec:diagrams}

States and observables in quantum circuits can be represented diagrammatically using a graphical notation very similar to the one used in tensor network theory (again cf.~Frank's lectures). In particular, we have three main rules:\\

\noindent (a) Operators are represented as shapes with legs indicating the qudits they act upon. For instance 
\begin{align}
&U^{(x)} & &\longmapsto & &\fineq[-0.8ex][1][1]{
		\tsfmatV[0][-0.5][r][1][][][red6]
}\\
&(U^{(x)})^\dag & &\longmapsto & &\fineq[-0.8ex][1][1]{
		\tsfmatV[0][-0.5][r][1][][][blue6][bottomright]
}\\
&\mathds{1} & &\longmapsto & &\fineq[-0.8ex][1][1]{
                 \node at (-0.35,0) {};
		\draw[ thick] (0,.5) -- (0,-.5);
}\\
& \ket{\psi}\in \mathbb C^d  & &\longmapsto & &\fineq[-0.8ex][1][1]{
                 \node at (-0.35,0) {};
		\draw[ thick] (0,.5) -- (0,0);
		\draw[thick, fill = black]  (0,0) circle (2pt);
}
\end{align}
When representing different operators of the same kind at different positions, e.g., $U^{(y)}\neq U^{(x)}$, I will use different shades of the same colour. 

\noindent (b) Joining legs corresponds to summing over indices. For instance  
\be
\fineq[-0.8ex][1][1]{
\tsfmatV[0][1.25][r][1][][][red6][topright]
\tsfmatV[0][0][r][1][][][blue6][bottomright]
\draw[ thick] (-0.5,1.5) -- (-0.5,1.75);
\draw[ thick] (0.5,1.5) -- (0.5,1.75);}
 = 
\fineq[-0.8ex][1][1]{
\tsfmatV[0][1.25][r][1][][][blue6][bottomright]
\tsfmatV[0][0][r][1][][][red6][topright]
\draw[ thick] (-0.5,1.5) -- (-0.5,1.75);
\draw[ thick] (0.5,1.5) -- (0.5,1.75);}=
\fineq[-0.8ex][1][1]{	
\draw[ thick] (-0.5,.5) -- (-0.5,2.75);
\draw[ thick] (0.5,.5) -- (0.5,2.75);}\,, 
\label{eq:unitarity}
\ee
represents the unitarity of  $U^{(x)}$. \\

\noindent (c) Time-evolution goes bottom to top.   
\be
\ket{\Psi_t} = \mathbb{U}^t\ket{\Psi_0}=\fineq[-0.8ex][.75][1]{	
		\foreach \j in {0,...,3}{
		\foreach \i in {0,...,3}
		{
	         \roundgate[2*\i][2*\j][1][topright][red6][-1]
	         \roundgate[2*\i+1][1+2*\j][1][topright][red6][-1]}}
	         \foreach \i in {0,...,3}{
	         \draw[thick, fill = black]  (2*\i-0.5,-0.5) circle (2pt);
	         \draw[thick, fill = black]  (2*\i+0.5,-0.5) circle (2pt);}
	         \draw [decorate, thick, decoration = {brace}]   (6.5,-.75)--++(-7,.0);
                  \node[scale=1.25] at (3,-1.25) {``width''};
                   \draw [decorate, thick, decoration = {brace}]   (8,7.5)--++(0,-8);
                  \node[scale=1.25] at (9.5,3.5) {``depth''};
	         }\!\!\!\!\!.
\label{eq:timeevolution_graphical}
\ee
Where we took $(L=4, t=3)$ and, for simplicity, we represented the case where $U^{(x)}=U$ for all $x$ and 
\be
\ket{\Psi_0} = \bigotimes_{j=1}^{2L} \ket{\psi}. 
\ee 

\subsection{Basic physical properties of BQC}

Using the diagrammatic representation introduced above we can easily see that: 
\begin{itemize}
\item[1.]  We see from Eq.~\eqref{eq:timeevolution_graphical} that the evolution in a BQC really looks as a quantum version of a boolean circuit, where the boxes --- local gates --- implement unitary transformations. Also, the local gates are applied in brickwork pattern, motivating its name. 
\item[2.]  Quantum circuits have a \emph{strict} causal light cone for the propagation of correlations. For instance, considering a local operator $\mathcal O_{L/2}$ and evolving it in time in the Heisenberg picture we have 
\be
\hspace{-1cm}\mathcal O_{L/2}(t) =\fineq[-0.8ex][.75][1]{
\foreach \j in {0,...,2}{
\foreach \i in {0,...,5}{
\roundgate[2*\i][2*\j][1][bottomright][blue6][-1]
\roundgate[2*\i+1][1+2*\j][1][bottomright][blue6][-1]}}
\foreach \j in {0,...,2}{
\foreach \i in {0,...,5}{
\roundgate[2*\i][-2*\j-1][1][topright][red6][-1]
\roundgate[2*\i+1][-1-2*\j-1][1][topright][red6][-1]}}
 \draw[thick, fill = black]  (5.5,-.5) circle (2pt);
 \draw[fill=yellow5, opacity=0.3] (5.5,-0.5) -- (-0.5,-6.5) -- (12.5,-6.5) -- (6.5,-0.5) -- (12.5, 5.5) -- (-0.5,5.5) -- (5.5,-0.5);
\node[scale=1.25] at (5,-0.5) {$\mathcal O$};
}\hspace{-1cm}. 
\label{eq:localoperatorQC}
\ee 
All gates out of the yellow light cones can be simplified using the unitarity conditions in Eq.~\eqref{eq:unitarity}. All BQC described by the Floquet operator in Eq.~\eqref{eq:floquetoperator} have a speed of light $v_{\rm max}=1$ in our units.  
\item[3.] Everything discussed above can be immediately generalised to dimension $D\geq 1$ by replacing 2d shapes with $(D+1)$d shapes. For example, squares $\mapsto$ hypercubes. 
\end{itemize}

\section{Relation between BQC and Hamiltonian systems}

The dynamics of BQC are closely related to those of Hamiltonian systems with local interactions in that both these classes of systems can propagate correlations at most \emph{ballistically}. To see this, consider a local Hamiltonian with nearest-neighbour interactions~\footnote{I am again considering the one-dimensional setting for simplicity but all statements directly carry over to any dimension.}
\be
\label{eq:ham}
H = \sum_{x\in \mathbb Z_{2L}/2} h_{x,x+1/2}, \qquad (L\equiv 0), 
\ee 
and also assume 
\be
\bar h \equiv \max_x \|h_{x,x+1/2}\| = C < \infty,
\ee
where $\|\cdot\|$ is the operator norm and $C$ does not depend on $L$ (this is the case, e.g., for systems of fermions or spins, not for bosons). In this case one characterise the spreading of correlations by means of the famous Lieb Robinson bound~\cite{lieb1972finite}. We can express it as follows 
\begin{theorem}[Lieb and Robinson (1972)]
\label{th:LR}
Let $\mathcal O_X$ be an operator acting non-trivially in the block $X$ and let $A$ and $B$ be two blocks at distance $\ell$. Then 
\be
\label{eq:LRbound}
\| [\mathcal O_A(t),\mathcal O_B(0)]\| \leq D \exp[-\frac{(\ell- v t)}{\xi}],
\ee
where $D$, $v$, and $\xi$ are system-dependent constants ($D$ also depends on $A$, $B$, and the operators).  
\end{theorem}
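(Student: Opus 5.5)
The plan is the standard series-expansion proof of the Lieb--Robinson bound for the nearest-neighbour Hamiltonian \eqref{eq:ham} with $\bar h<\infty$. We work in the Heisenberg picture, $\mathcal O_A(t)=e^{iHt}\mathcal O_A e^{-iHt}$, and write $\ell$ for the number of bonds separating $A$ from $B$. For any region $X$ we define
\be
C_B(X,t)=\sup_{\mathcal O_X,\;\|\mathcal O_X\|=1}\|[\mathcal O_X(t),\mathcal O_B]\|.
\ee
The aim is a closed recursive integral inequality for $C_B$, which we then iterate.

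\textbf{Step 1 (differential inequality).} Split $H=H_A+H_{\partial A}+H_{\rm rest}$, where $H_{\partial A}$ contains the (at most two) bond terms straddling the boundary of $A$. Pass to the interaction picture with respect to $H_A+H_{\rm rest}$. This evolution factorises across the cut, so it preserves $\|\mathcal O_A\|$ and keeps $\mathcal O_A$ supported on $A$. Differentiating $f(t)=[\mathcal O_A(t),\mathcal O_B]$ then gives $f'(t)=i[f(t),\,\cdot\,]$ plus source terms of the form $i[[h_{b}(t),\mathcal O_A(t)],\mathcal O_B]$ for the boundary bonds $b$. The first piece is a norm-preserving (unitary) flow, so we can integrate and use the triangle inequality together with $\|[h_b,\mathcal O_A]\|\le 2\bar h\|\mathcal O_A\|$. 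This yields
\be
C_B(A,t)\le C_B(A,0)+2\bar h\sum_{b\in\partial A}\int_0^{|t|}ds\, C_B(b,s).
\ee
The same inequality holds with $A$ replaced by any bond region $b$. Here $C_B(b,0)$ vanishes unless $b$ overlaps $B$, in which case it is at most $2\|\mathcal O_B\|$.

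\textbf{Step 2 (iteration).} Iterating the inequality $n$ times produces a sum over nearest-neighbour walks of bonds starting at $\partial A$, weighted by $(2\bar h|t|)^n/n!$. In one dimension the number of such walks with $n$ steps is at most $2^n\cdot 2$. Only walks that reach $B$ contribute to a nonzero initial term, and reaching $B$ requires $n\ge \ell$ steps. Hence
\be
\|[\mathcal O_A(t),\mathcal O_B]\|\le 4\|\mathcal O_A\|\|\mathcal O_B\|\sum_{n\ge\ell}\frac{(4\bar h|t|)^n}{n!}\le 4\|\mathcal O_A\|\|\mathcal O_B\|\,\frac{(4\bar h |t|)^\ell}{\ell!}e^{4\bar h|t|}.
\ee

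\textbf{Step 3 (exponential form).} Use $\ell!\ge(\ell/e)^\ell$, or more simply the bound $\sum_{n\ge\ell}x^n/n!\le e^{x}e^{-\ell}\cdot e^{e x}$, obtained from $\sum_n x^n e^{(n-\ell)}/n!$. This gives $\le D\,e^{-(\ell-vt)}$ with $\xi=1$, $v=4e\bar h+4\bar h$, and $D\propto\|\mathcal O_A\|\|\mathcal O_B\|\min(|A|,|B|)$, which is the form \eqref{eq:LRbound}. The constants can be rebalanced by a Chernoff-type choice of $e^{\mu(n-\ell)}$ to obtain general $\xi$.

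I expect the main obstacle to be Step 1: making rigorous that the "bulk" part of the evolution drops out in norm, so that only the boundary terms feed the recursion, while keeping the support bookkeeping correct at each iteration. The combinatorial counting in Step 2 is routine in one dimension. In $D$ dimensions it only changes the constant $2$ to the lattice coordination number, which is why the bound carries over with a modified velocity.
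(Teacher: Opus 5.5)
The paper does not prove this statement. It quotes it from Lieb and Robinson (1972) as background and afterwards only uses its consequence \eqref{eq:bravyi}, so there is no in-text argument to match. Your proposal is the standard modern commutator-recursion proof (Nachtergaele--Sims / Hastings--Koma type), specialised to the nearest-neighbour chain \eqref{eq:ham} with $\bar h<\infty$, and it works. The recursion you state in Step~1 is true, and the walk counting and tail estimate are fine. Your constants depend only on $\bar h$ and the operator norms, so they are uniform in $L$, which is what the paper's use of the bound requires. Keeping only boundary bonds also gives you a prefactor independent of $|A|$ (proportional to $|\partial A|$ in higher dimensions), which is sharper than the all-bonds recursion.

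However, your written justification of Step~1 does not produce the inequality you state. Set $\tau_t(\cdot)=e^{iHt}\cdot e^{-iHt}$. If you differentiate $[\tau_t(\mathcal O_A),\mathcal O_B]$ with $\mathcal O_A$ fixed, the interior bonds do not drop out. The Jacobi identity leaves a source $-[\tau_t(\mathcal O_A),[\tau_t(h_b),\mathcal O_B]]$ for every bond $b$ meeting $A$, which is the all-bonds recursion with $D\propto|A|$.

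To keep only $\partial A$, proceed as follows.
\begin{enumerate}
\item Let $\tau^0_t$ be generated by $H_A+H_{\rm rest}$. Fix $\mathcal O$ supported on $A$, and set $X(t)=\tau_t(\tau^0_{-t}(\mathcal O))$ and $g(t)=[X(t),\mathcal O_B]$.
\item Then $\dot X=i[K(t),X]$ with $K(t)=\sum_{b\in\partial A}\tau_t(h_b)$. The Jacobi identity gives $\dot g=i[K(t),g]-i\sum_{b\in\partial A}[X(t),[\tau_t(h_b),\mathcal O_B]]$.
\item The first term is your norm-preserving piece. The source is bounded using $\|[\tau_s(h_b),\mathcal O_B]\|\le\bar h\,C_B(b,s)$, not $\|[h_b,\mathcal O_A]\|\le 2\bar h\|\mathcal O_A\|$.
\item Finally substitute $\mathcal O=\tau^0_t(\mathcal O_A)$ at the final time. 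This is legitimate precisely because $\tau^0_t$ preserves support in $A$ and the norm, and it yields your inequality.
\end{enumerate}
If you do use $\|[h_b,\tau^0_{-s}(\mathcal O)]\|\le 2\bar h\|\mathcal O\|$ on the integrated form, you get the growing-region recursion with $C_B(A\cup b,s)$ in place of $C_B(b,s)$ instead. In one dimension that also closes with the same counting and the same final bound.

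Minor points:
\begin{enumerate}
\item There are exactly $2^n$ walks, not $2^{n+1}$. So your own derivation gives $D=2\|\mathcal O_A\|\|\mathcal O_B\|$, with no $\min(|A|,|B|)$ factor; that factor belongs to the all-bonds recursion.
\item $\sum_{n\ge\ell}x^n/n!\le e^{-\ell}e^{ex}$ already gives $v=4e\bar h$ with $\xi=1$; the extra factor $e^{x}$ is superfluous.
\item On the periodic chain, $\ell$ should be the shorter-arc distance.
\item In the paper's half-integer labelling your bond count is twice the distance, which only rescales $v$ and $\xi$.
\end{enumerate}
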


In words, the above result states that the commutator of the two operators $\mathcal O_A(t)$ and $\mathcal O_B(0)$ is exponentially suppressed for times smaller than $\ell/v$, with $\ell$ being the distance between the regions they act upon at time 0 and $v$ an intrinsic maximal velocity for the spreading of correlations in the system~\footnote{This statement has later been generalised both to systems interacting over a longer range and to unbounded $\bar h$ (see, e.g.,~\cite{kuwahara2021lieb} for a recent account of the possible extensions) but for the purposes of these lectures the original statement above is sufficient.}. This does indeed show that --- up to exponentially small corrections --- correlations propagate at most ballistically in the Hamiltonian system characterised by Eq.~\eqref{eq:ham}. To make more stringent the analogy with BQC, it is useful to make a restatement of the Lieb-Robinson bound due to Bravyi, Hastings, and Vestraete~\cite{bravyi2006lieb}. Namely, one can show that Eq.~\eqref{eq:LRbound} implies 
\be
\label{eq:bravyi}
\| \mathcal O_A(t)-\mathcal O^{(\ell)}_A(t) \| \leq D' \exp[-\frac{(\ell- v t)}{\xi}],
\ee   
where $D'$ is again a constant, we defined the ``truncated operator''  
\be
\mathcal O^{(\ell)}_A(t) = \frac{\tr_S[\mathcal O_A(t)] \otimes \mathds{1}_S}{\tr_S[\mathds{1}_S]}\,,
\ee
and $S$ is a region whose distance from $A$ is at least $\ell$, see Fig.~\ref{fig:regions}. 

\begin{figure}
\centering
    \begin{tikzpicture}[baseline={([yshift=-0.6ex]current bounding box.center)},scale=0.65]
     \draw[thick,black,dashed] (-8,0) -- (8,0);
     \draw[thick,black] (-7,0) -- (7,0);
      \draw[thick,black] (3.5,0) -- (3.5,1);
      \draw[thick,black] (-3.5,0) -- (-3.5,1);
      \draw[thick,black] (1.5,0) -- (1.5,1);
      \draw[thick,black] (-1.5,0) -- (-1.5,1);
     \node at (4.5,.5) {$S$};
     \node at (-4.5,.5) {$S$};
     \node at (0,.5) {$A$};
     \node[scale = 1] at (-2.5,1)  {$\ell$};
     \node[scale = 1] at (2.5,1)  {$\ell$};
     \draw[thick,black,<->] (-3.5,0.5) -- (-1.5,0.5);
     \draw[thick,black,<->] (3.5,0.5) -- (1.5,0.5);
     \draw[fill=gray, opacity = 0.5] (1.5,0) -- (-1.5,0) -- (-1.5,1) -- (1.5,1) -- cycle;
     \draw[fill=gray, opacity = 0.5] (3.5,0) -- (6.5,0) -- (6.5,1) -- (3.5,1) -- cycle;
     \draw[fill=gray, opacity = 0.5] (-3.5,0) -- (-6.5,0) -- (-6.5,1) -- (-3.5,1) -- cycle;
  \end{tikzpicture}
  \caption{Setting in the one dimensional case.}
  \label{fig:regions}
\end{figure}
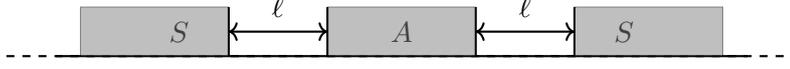

The meaning of Eq.~\eqref{eq:bravyi} is that at any time $t$, the time evolving operator $\mathcal O_A(t)$ can be approximated arbitrary well by an operator acting non trivially only within the effective light cone $[-v' t + a, b+v' t]$, where we took $v'>v$ and $A=[a,b]$. Specialising this statement to the case where $A$ is the single point $x=0$ this means that one can approximate $\mathcal O_0(t)$ with an operator acting non-trivially only in $[-v' t,v' t]$. This is very similar to Eq.~\eqref{eq:localoperatorQC}! The difference is that Eq.~\eqref{eq:localoperatorQC} is an exact statement while Eq.~\eqref{eq:bravyi} involves an (exponentially small in time) error, see Fig.~\ref{fig:LRvsBQC}

To make this intuition more quantitative I will now show how the time-evolution operator generated by the local Hamiltonian $H$ can be written in terms of an appropriate BQC. I will present two methods ordered by simplicity. 

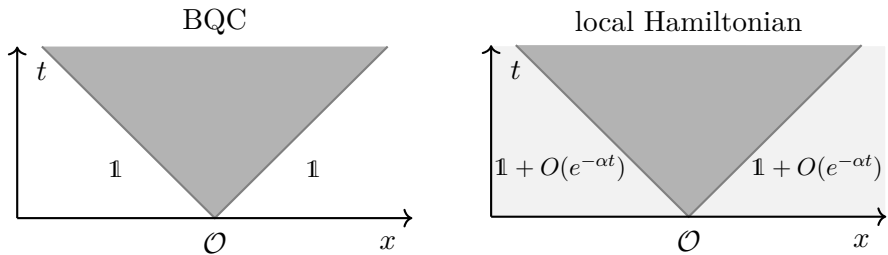
\begin{figure} [b]
    \begin{tikzpicture}[baseline={([yshift=-0.6ex]current bounding box.center)},scale=0.65]
     \draw[thick,black,->] (-4,0) -- (-4,3.5) node at (-3.5,3) {$t$};
     \draw[color=gray, fill = gray!60, thick, domain = -3.5: 3.5]    plot (\x,{abs(\x)});
     \draw[thick,black,->] (-4,0) -- (4,0)  node at (3.5,-.5) {$x$};]     
     \node at (0,-.5) {$\mathcal O$};
     \node[scale = 0.85] at (-2,1)  {$\mathds{1}$};
     \node[scale = 0.85] at (2,1)  {$\mathds{1}$};
     \node[scale = 1] at (0,4)  {BQC};
  \end{tikzpicture}
  \qquad 
  \begin{tikzpicture}[baseline={([yshift=-0.6ex]current bounding box.center)},scale=0.65]
     \draw[color=white, fill = gray!10, thick, domain = -3: 3]   (-4,0) --  (-4,3.5) --  (4,3.5) --  (4,0) -- cycle;
     \draw[thick,black,->] (-4,0) -- (-4,3.5) node at (-3.5,3) {$t$};
     \draw[color=gray, fill = gray!60, thick, domain = -3.5: 3.5]    plot (\x,{abs(\x)});
     \draw[thick,black,->] (-4,0) -- (4,0)  node at (3.5,-.5) {$x$};
     \node at (0,-.5) {$\mathcal O$};
       \node[scale = 0.85] at (-2.6,1)  {$\mathds{1}+O(e^{-\alpha t})$};
     \node[scale = 0.85] at (2.6,1)  {$\mathds{1}+O(e^{-\alpha t})$};
      \node[scale = 1] at (0,4)  {local Hamiltonian};
  \end{tikzpicture}
  \caption{Schematic depiction of the light cone of a local operator in a BQC (left) and local Hamiltonian system (right).}
  \label{fig:LRvsBQC}
\end{figure}

\subsection{Suzuki--Trotter decomposition}

The simplest way to write the time evolution operator of $H$ in terms of a BQC is the ``Suzuki--Trotter'' decomposition~\cite{trotter1959product, suzuki1990fractal}. To express it, we begin by with the following rewriting
\be
\label{eq:suzukitrotter}
\begin{aligned}
e^{-i t H} &= \left( e^{-i \frac{t}{n} \sum_{x=0}^{2L-1} h_{x/2,(x+1)/2}}\right)^n\\
& =  \left( e^{-i \frac{t}{n} \sum_{x=0}^{L-1} h_{x,x+1/2}} e^{-i \frac{t}{n} \sum_{x=1}^{L} h_{x-1/2,x}}\right)^n + \epsilon, 
\end{aligned}
\ee
where $n\in \mathbb N$ is an arbitrary natural number and in the second step we introduced 
\be
\!\!\epsilon \!=\! \left( e^{-i \frac{t}{n} \sum_{x=0}^{2L-1} h_{x/2,(x+1)/2}}\right)^n \!\!\!\!- \left( e^{-i \frac{t}{n} \sum_{x=0}^{L-1} h_{x,x+1/2}} e^{-i \frac{t}{n} \sum_{x=1}^{L} h_{x-1/2,x}}\right)^n\!\!\!\!. 
\ee
Note that the first term on the second line of Eq.~\eqref{eq:suzukitrotter} is precisely the $n$-th power of the BQC in Eq.~\eqref{eq:floquetoperator} written in terms of the local gates 
\be
\label{eq:localgatesST}
U^{(x)} =e^{- i \frac{t}{n} h_{x,x+1/2} },\quad x\in \mathbb Z_{2L}/2. 
\ee
The basic observation now is that the Baker--Campbell--Hausdorff formula guarantees that for large $n$ the operator norm of $\epsilon$ --- the ``error'' --- is $O(1/n)$. Therefore, the evolution generated by a local Hamiltonian can be approximated arbitrary well by a carefully chosen BQC, provided the latter is sufficiently deep, i.e., 
\be
\| e^{-i t H}  - \left(\mathbb{U}_e\mathbb{U}_o\right)^n \|= O\left(\frac{1}{n}\right)\,.
\ee
As shown by~\cite{suzuki1990fractal}, the approximation can be improved by replacing $\mathbb{U}_e\mathbb{U}_o$ in the above expression by a more complicated function of $\mathbb{U}_e$ and $\mathbb{U}_o$. For example, considering 
\be
\mathbb{U}_e\mathbb{U}_o \mapsto \mathbb{U}_e^{1/2}\mathbb{U}_o \mathbb{U}_e^{1/2}, 
\ee
gives $\|\epsilon\|=O(1/n^2)$. 

The problem of this approximation is that, for it to become accurate, one really needs a very deep BQC (made of gates that are infinitesimally close to the identity). More precisely, to achieve a fixed error $\|\epsilon\|$ one needs $n=O(L)$ even when $t=O(L^0)$. Also note that, in this construction, the ``trivial'' light cone of the BQC --- i.e.\ the light cone appearing as a consequence of unitarity as in Eq.~\eqref{eq:localoperatorQC} --- is not related to the Lieb--Robinson light cone of the Hamiltonian system. While the latter is fixed, the former becomes increasingly more ``open'' as $n$ increases, see Fig.~\ref{fig:trottercircuit}. This is because in the special BQC described by the gates in Eq.~\eqref{eq:localgatesST} correlations are effectively restricted within a much smaller light cone (with speed $O(1/n)$). This, and not the trivial one, is the light cone approaching the Lieb--Robinson light cone in the large $n$ limit.

 \begin{figure}
\begin{tikzpicture}[baseline={([yshift=-0.6ex]current bounding box.center)},scale=0.6]
\foreach \j in {0,...,1}{
\foreach \i in {0,...,4}{
\roundgate[2*\i][2*\j][1][topright][red6][-1]}}
\foreach \j in {0,...,0}{
\foreach \i in {0,...,3}{
\roundgate[2*\i+1][1+2*\j][1][topright][red6][-1]}}
	         \foreach \i in {0,...,4}{
	         \draw[thick, fill = black]  (2*\i-0.5,-0.5) circle (2pt);
	         \draw[thick, fill = black]  (2*\i+0.5,-0.5) circle (2pt);}
 \draw[fill=yellow5, opacity=0.3] (4.5,-.5) -- (4.5+3,0+3-.5) -- (4.5-4,0+3-.5) -- (4.5-1,-.5) -- cycle;
 \draw [decorate, thick, decoration = {brace}]   (9,2.5)--++(0,-3);
                  \node[scale=1.25] at (9.5,1.25) {$t$};
\end{tikzpicture} 
\,\,
\begin{tikzpicture}[baseline={([yshift=-0.6ex]current bounding box.center)},xscale=0.6,yscale=0.3]
\foreach \j in {0,...,2}{
\foreach \i in {0,...,4}{
\roundgate[2*\i][2*\j][1][topright][red6][-1]}}
\foreach \j in {0,...,2}{
\foreach \i in {0,...,3}{
\roundgate[2*\i+1][1+2*\j][1][topright][red6][-1]}}
	         \foreach \i in {0,...,4}{
	         \draw[thick, fill = black]  (2*\i-0.5,-0.5) circle (2pt);
	         \draw[thick, fill = black]  (2*\i+0.5,-0.5) circle (2pt);}
	         \draw[fill=yellow5, opacity=0.3] (4.5,-.5) -- (4.5+4,0+4-.5) -- (4.5+4,0+5-.5) -- (4.5-5,0+5-.5) -- (4.5-5,0+4-.5) -- (4.5-1,-.5) -- cycle;
	          \draw [decorate, thick, decoration = {brace}]   (9,5)--++(0,-6);
                  \node[scale=1.25] at (9.5,2.5) {$t$};
\end{tikzpicture}
\caption{The light cone opens up when increasing $n$ for fixed $t$.}
\label{fig:trottercircuit}
\end{figure}
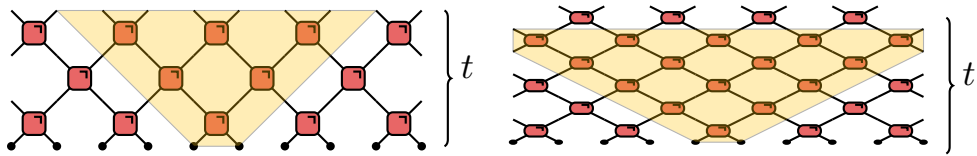

\subsection{Osborne Construction}

A more refined construction is the one proposed by~\cite{osborne2006efficient} for 1D systems (see also~\cite{haah2023quantum} for a generalisation to higher dimensions). In essence, the main idea is to proceed in three steps
\begin{itemize}
\item[a.] ``Block together'' $\Omega$ sites 
\be
\fineq[-0.8ex][.75][1]{
\foreach \i in {0,...,5}{ 
 \draw[thick, fill = black]  (\i,0) circle (1pt);}
\draw[->, thick] (6,0) -- (7,0); 
\draw[|->, thick] (6,-1.5) -- (7,-1.5); 
 \draw[thick, fill = black]  (8,0) circle (4pt);
  \draw [decorate, thick, decoration = {brace}]   (5,-.25)--++(-5,.0);
                  \node[scale=1.25] at (2.5,-.75) {$\Omega$};
\node[scale=1.35] at (2.5,-1.5) {$\mathbb C^d \otimes \cdots \otimes \mathbb C^d$};           
\node[scale=1.35] at (8.25,-1.5) {$\mathbb C^{d^\Omega}$};           
}
\ee
\item[b.] Define operators $W^{(\ell)}\in U(d^{2\Omega})$ acting as local gates on the blocked sites. 
\item[c.] Write the time evolution operator for a time $\tau$ as
\be
e^{- i \tau H} = \bigotimes_{x=0}^{L/\Omega-1} W^{(x)}_{x,x+1/2}\bigotimes_{x=1}^{L/\Omega} W^{(x-1/2)}_{x-1/2,x}+\epsilon. 
\ee
\end{itemize}
Using the Lieb--Robinson bound one can show that the error can be bounded by 
\be
\|\epsilon\| \leq \frac{a L}{\Omega} e^{b \tau -c \Omega}, \qquad a,b,c>0. 
\ee
This means that considering $\Omega = O(\log L)$ we can control the error for any fixed $\tau$. Therefore we can write 
\be
e^{- i t H} = \left(e^{- i \tau H}\right)^{t/\tau} =  \left(\bigotimes_{x=0}^{L/\Omega-1} W^{(x)}_{x,x+1/2}\bigotimes_{x=1}^{L/\Omega} W^{(x-1/2)}_{x-1/2,x}\right)^{t/\tau}+\epsilon,
\ee
where the error is bounded by 
\be
\|\epsilon\| \leq \exp\left(\frac{a t L}{\Omega \tau} e^{b \tau -c \Omega}\right)-1, \qquad a,b,c>0. 
\ee
Namely, it can be controlled for any fixed $\tau$ without the need of taking the limit $\tau\to0$. This means that in this case the local gates do not need to be close to the identity and the light cone of the quantum circuit approaches the Lieb-Robinson light cone.

\paragraph{General strategy of these lectures.} In this section we discussed how the dynamics of BQCs are fundamentally similar to those of local Hamiltonians in the way they spread correlations. We also showed two explicit constructions to obtain the latter from the former. In the rest of the lectures, however, we will study BQCs as independent quantum dynamical systems --- with general local gates in $U(d^2)$ --- without thinking about them as approximations of continuous time Hamiltonian dynamics.

\section{Dynamics in BQC: a first look}

Let us now begin to study the dynamics generated by BQCs --- we will look at a very simple example that will help clarifying what we mean by relaxation in systems evolving unitarily.

Consider the ``Floquet XX model'': the BQC described by the evolution operator in Eq.~\eqref{eq:floquetoperator} with $d=2$ and local gates   
\be
\label{eq:trotterXX}
U^{(x)} = e^{i \frac{\omega}{4} (XX+YY)},\qquad \forall x, 
\ee 
where I use $X, Y$, and $Z$ to denote the Pauli matrices, and initialise the circuit in the ``N\'eel state''  
\be
\label{eq:neel}
\ket{\Psi_0}=\bigotimes_{x=1}^{L}\ket{01},
\ee
with $\ket{0}$ and $\ket{1}$ the two eigenstates of $Z$ (corresponding respectively to eigenvalues $+1$ and $-1$). 

Let us now ask whether this BQC shows any form of relaxation. Consider first the case $L=1$ (two qubits) --- in this case we can immediately find the state at time $t$ to be
\be
\ket{\Psi_t}=\frac{e^{i \omega t }\ket{+}+e^{-i \omega t}\ket{-}}{\sqrt 2}, 
\ee
where we set $\ket{\pm}=(\ket{01}\pm\ket{10})/\sqrt 2$. This means that the expectation value of  any observable $\mathcal O$ can be written as 
\be
\!\!\!\expval{\mathcal O}{\Psi_t} \!=\! \frac{1}{2}\! \left\{ \!\!\expval{\mathcal O}{+}\!+\!\! \expval{\mathcal O}{-}\!+\!e^{i 2 \omega t} \mel{-}{\mathcal O}{+} \!+\! e^{- i 2 \omega t}\mel{+}{\mathcal O}{-}\!\right\}\!.
\ee
Therefore, either the expectation value is time independent or it oscillates indefinitely: no form of relaxation can be observed. The oscillations are due to the off-diagonal terms. 

Let us now move to consider $L>1$. Crucially, we can still find a closed-form expression for the expectation values of relevant observables. This is because for our special choice of local gates (cf.~Eq.~\eqref{eq:trotterXX}) the BQC evolution operator can be mapped into a quadratic form of free fermions and the N\'eel state in Eq.~\eqref{eq:neel} is Gaussian with respect to them (see, e.g., Sec.~D of the supplemental material of~\cite{vernier2023integrable} for details). For instance, after a simple (but lengthy) calculation we obtain 
\be
\label{eq:expvalZL}
\!\!\!\expval{Z_x}{\Psi_t} 
\!=\! \frac{(-1)^x}{2 L} \!\sum_{n=0}^{L} \!\left\{g\!\left(\frac{2\pi}{L} n, \omega\right)\!\!+\!\!f\!\left(\frac{2\pi}{L} n, \omega\right) \!\cos(\!2t \varepsilon\!\left(\frac{2\pi}{L} n,\omega\right)\!\!)\!\!\right\}\!,
\ee
where we introduced 
\be
\begin{aligned}
f(k,\omega)&= \cos(\Phi(k,\omega))^2, \qquad
g(k,\omega)=\sin(\Phi(k,\omega))^2,\\
\Phi(k,\omega) &= \arctan(\tan(\omega/2) \sin(k))+\pi\theta_{\rm H}\left(k-\frac{\pi}{2}\right),\\
\varepsilon(k,\omega) &= \arctan\left(\frac{\sqrt{\sin ^4\left(\frac{{\omega}}{2}\right) \sin ^2(2 k)+\sin ^2({\omega}) \cos ^2(k)}}{\cos ^2\left(\frac{{\omega}}{2}\right)-\sin ^2\left(\frac{{\omega}}{2}\right) \cos (2 k)}\right)\!\!.
\end{aligned}
\ee
The expression in Eq.~\eqref{eq:expvalZL} does still not relax at large times --- its $t\to\infty$ does not exist ---, however, if we first consider its ``thermodynamic limit'' ($L\to \infty$) we obtain
\be
\label{eq:expvalZTL}
\!\!\!\lim_{L\to\infty}\expval{Z_x}{\Psi_t} 
\!=\! \frac{(-1)^x}{4\pi} \int_0^\pi {\rm d}k \left\{g(k, \omega)+f(k, \omega)\cos(2t \varepsilon(k,\omega))\right\}\!.
\ee
This expression attains a well defined limit for $t \to \infty$! 

This is a general fact. In order to attain a well defined infinite-time limit one needs a perfect destructive interference between the phases of the off diagonal terms. This can only occur in the limit where the spectrum becomes continuous \emph{and} if the observable has non zero overlap with almost all the eigenstates within a given energy window. This means that local in-space observables will typically relax while non-local ones will {\color{blue}often} not.  A simple example is 
\be
\mathcal O = \ketbra{n}{m}+\ketbra{m}{n}, 
\ee
where $\ket{n}$ and $\ket{m}$ are two eigenstates such that $E_n-E_m = O(L^0)$. In this case one has
\be
\expval{\mathcal O}{\Psi_t} = 2 \Re[\braket{\Psi_0}{n} \braket{\Psi_0}{m}^* e^{i (E_n-E_m)t}].
\ee
This is the same general phenomenology that one observes after quenches in short-range interacting Hamiltonian systems~\cite{essler2016quench}.

\chapter{}

In the last lecture we showed that BQC encode the physics of local interactions and show relaxation in the thermodynamic limit ($L\to\infty$). Namely, one generally has 
\be
\label{eq:localrelaxation}
\lim_{t\to\infty}\lim_{L\to \infty}\tr[\rho_A(t)\mathcal O_A] \to \lim_{L\to \infty} \tr[\rho_{\rm st}\mathcal O_A],
\ee
where $A$ is a finite region and $\rho_A(t)=\tr_{\bar A}[\ketbra{\Psi_t}]$ is its reduced state. Analogously to local Hamiltonian systems~\cite{essler2016quench}, the stationary state $\rho_{\rm st}$ can be described \`a la Gibbs, based on the number of local conserved operators (or ``conservation laws'') in the system. The latter are a special kind of conserved operators written as 
\be
\label{eq:quasilocal}
Q = \sum_x q_x,
\ee
with $q_x$ acting non-trivially only around $x$, and are the only ones effectively constraining the dynamics of local subsystems~\footnote{In fact, it has been shown that also operators written as Eq.~\eqref{eq:quasilocal} but where the support $q_x$ decays exponentially away from $x$  --- quasi local --- need to be taken into account in order for Eq.~\eqref{eq:localrelaxation} to hold~\cite{ilievski2016quasi}.}.

As opposed to Hamiltonian systems that have always at least $H$ as local conservation laws, generic BQCs have no local conservation laws. Therefore we expect them to relax to the infinite temperature state
\be
\label{eq:infinitetemperature}
\rho_{\rm st} = \frac{\mathds{1}}{\tr[\mathds{1}]},
\ee
which is typically disdained by physicists as featureless and ``boring''. This is not always the case of course: you saw the lectures of Fabian Essler and Bal\'azs Pozsgay about \emph{integrable models}, where the number of local conservation laws is $\propto L$ and the manifold of possible states is much richer~\cite{essler2016quench}. Importantly, integrable systems exist also in BQC form (a simple example is the Floquet XX model that we encountered in the previous lecture). Here, however, I want to focus on the \emph{generic case} --- where $\rho_{\rm st}$ is given by Eq.~\eqref{eq:infinitetemperature} --- and ask what happens for \emph{finite times}. Specifically, I will discuss some general features of the dynamics of many-body systems that can be quantitatively understood in BQCs.

\section{Finite-time evolution}

When trying to describe ``general features'' of the many-body dynamics the first question that we are facing is ``where should we look?'', i.e., what observable should we study to identify them? This is a difficult question as it is clear that by looking at the dynamics of the expectation value of a specific observable it will always be hard to disentangle the specific features of the observable from the general properties of the system. One way to solve this problem would be to study the dynamics of all possible local observables, however, it would be very impractical. A more convenient way to proceed is to probe the evolution of $\rho_A(t)$ itself, e.g., characterising the evolution of its spectrum by computing its von-Neumann entropy
\be
\label{eq:entanglemententropy}
S_A(t) = -\tr[\rho_A(t)\log \rho_A(t)].
\ee
This quantity has also an important quantum information theoretical meaning: Whenever the state of the full system is pure --- which is the one of interest for us --- $S_A(t)$ measures the entanglement between $A$ and the rest of the system. For this reason, it is typically called the \emph{entanglement entropy}.

\begin{figure}
\centering
    \begin{tikzpicture}[baseline={([yshift=-0.6ex]current bounding box.center)},scale=1.5]
    \draw[dashed, thick]  (2.5, 0) -- (2.5, 2.5); 
     \draw[thick,black,->] (0,0) -- (0,3.5) node at (-.35,3) {$S_A(t)$};
     \draw[thick, domain = 0: 2.5]    plot (\x,{abs(\x)});
      \draw[thick, domain = 2.5: 6]    plot (\x,2.5);
     \draw[thick,black,->] (-0.25,0) -- (7,0)  node at (6.75,-.25) {$t$};] 
     \node at (2.5,-.25) {$\propto |A|$};
     \node at (4.5,3) {thermodynamic entropy of $\rho_{{\rm st}, A}$};
     \draw[thick,black,->] (4.5,2.85) -- (4.5,2.55);  
  \end{tikzpicture}
  \caption{Leading-order evolution of the entanglement entropy in generic systems with local interactions.}
  \label{fig:entropyevolution}
\end{figure}
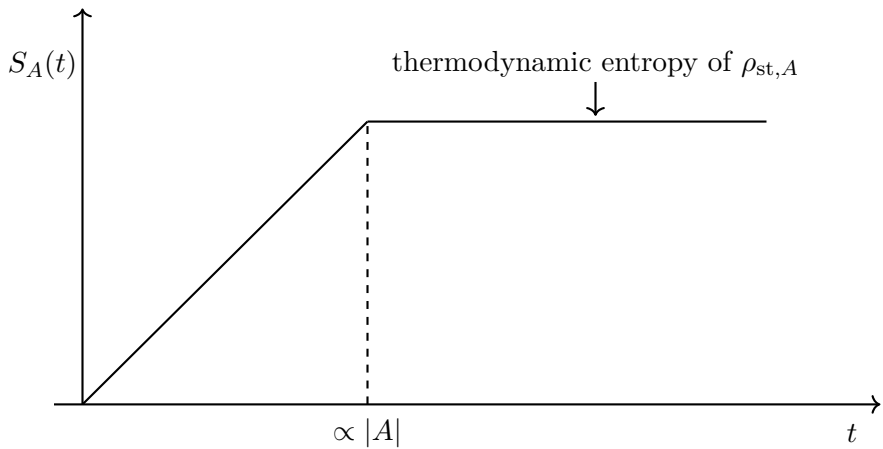

It turns out that $S_A(t)$ has a strikingly similar evolution in essentially all systems characterised by local interactions, regardless of dimensionality, number of conservation laws, the discrete or continuous nature of space or time, or any other specific properties of the system such as masses or coupling strengths (perhaps cases of strong disorder causing real-space localisation or similar extreme situations) --- borrowing a term from the theory of critical phenomena we can say that the dynamics of $S_A(t)$ is \emph{universal}. The typical evolution of the entanglement entropy following a quantum quench is sketched in Fig.~\ref{fig:entropyevolution}: there is an initial linear growth characterised by a system-dependent slope (or ``entanglement velocity'') followed by a relatively sharp relaxation to a value that is extensive in the size of the subsystem and corresponds to the thermodynamic entropy of the stationary state $\rho_{{\rm st}}$ reduced to the subsystem~\footnote{The first account of such ``universal'' behaviour was given in the context of $(1 +1)$-dimensional conformal field theory~\cite{calabrese2005evolution}, where it was explained postulating that the entanglement is carried though the system by pairs of ``correlated quasiparticles'' produced at each point in space by the non-equilibrium initial state at the beginning of the evolution. This semiclassical description became known as the ``quasiparticle picture'' and has later been extended to characterise the dynamics of entanglement in other models with stable quasi-particles such as free~\cite{fagotti2008evolution} and interacting integrable models~\cite{alba2017entanglement}. A few years later, however, the same behaviour was also found in systems with no quasiparticles such as holographic conformal field theories~\cite{liu2014entanglement}, and generic interacting systems~\cite{kim2013ballistic, laeuchli2008spreading}.}. Therefore, one can give a compact description of the equilibration process of a subsystem as the one turning entanglement entropy into thermodynamic entropy (cf.~~\cite{calabrese2020entanglement}).

This universal phenomenon inevitably attracted the attention of many theoretical physicists with particular attention being devoted to the linear growth of entanglement observed at early times. Indeed, {\color{blue} linear-in-time behaviours are typically associated with ballistic transport which is clearly not expected to take place in generic systems}. However, computing $S_A(t)$ in the presence of interactions is far from easy and the progress was stagnant. The achievement of exact results in BQSs was instrumental to provide a better characterisation of this phenomenon and (partially) explain it~\cite{nahum2017quantum, zhou2020entanglement, bertini2022growth} (see also Romain Vasseur's lectures~\cite{vasseur2026leshouches}). 

Before illustrating how these exact results can be obtained, however, it is useful to simplify the object of interest. Indeed, one feature that makes the calculation of $S_A(t)$ difficult is the presence in Eq.~\eqref{eq:entanglemententropy} of the logarithm of the reduced density matrix. To avoid this issue one can employ a ``replica trick'', i.e., obtain $S_A(t)$ as the limit of quantities that are easier to compute. Specifically, we introduce the so called  ``R\'enyi entropies'' given by 
\be
\label{eq:renyientropies}
S^{(n)}_A(t) = \frac{1}{1-n}\log\tr[\rho^n_A(t)],
\ee
and fulfilling 
\be
\lim_{n\to 1} S^{(n)}_A(t)  = S_A(t). 
\ee
R\'enyi entropies are easier to compute as they only involve logarithms of real numbers rather than of an operator. Moreover, they also turn out to behave ``universally'' and in fact their evolution is typically very similar to the one of $S_A(t)$ (there are however interesting exceptions, see, e.g.~\cite{rakovszky2019sub, huang2020dynamics}). 

In the following I will discuss how to compute $S^{(n)}_A(t)$ in the particularly simple case of $n\in \mathbb N$ and $n\ge2$. Whenever we can obtain the analytic dependence on $n$ the results for these quantities can be used to compute $S_A(t)$ via analytic continuation~\footnote{In fact, due to the discreteness of space, the results for $\{S^{(n)}_A(t)\}_{n\in \mathbb N >1}$ characterise the full spectrum of $\rho_A(t)$. This means that, if one can compute them exactly, R\'enyi entropies with integer index give access to $S_A(t)$ without the analytic continuation.}. 

\subsection{R\'enyi entropies in BQCs}
\label{sec:folding}

Employing the diagrammatic representation discussed in the previous lecture we can represent the reduced density matrix of a block of qudits $A$~\footnote{we denote the number of qudits in $A$ by $2|A|$ because our lattice spacing is $1/2$} as follows
\begin{align}
\hspace{-1cm} \rho_A(t) = 
\fineq[-0.8ex][.6][1]{
\foreach \j in {3,...,4}{	
\foreach \i in {-1,...,5}{
\roundgate[2*\i][2*\j+1.5][1][topright][red6][-1]
\roundgate[2*\i+1][2*\j+.5][1][topright][red6][-1]}}
\foreach \j in {3,...,4}{	
\foreach \i in {-1,...,5}{
\roundgate[2*\i][2*\j+.5-5][1][bottomright][blue6][-1]
\roundgate[2*\i+1][2*\j+1.5-5][1][bottomright][blue6][-1]}}
\draw[thick, fill=gray, rounded corners =0.5pt] (-1.8,5) rectangle (11.8,5.2);
\draw[thick, fill=white, rounded corners =0.5pt] (-1.8,6) rectangle (11.8,5.8);
\draw [decorate, thick, decoration = {brace}]   (2.25,10.125)--++(3.5,.0);
\node[scale=1.75] at (4,10.6) {$2 |A|$};
\foreach \i in {0,...,4}{	
\draw[thick] (-2.5+\i,10) -- (-2.5+\i,10.5+\i*0.25) -- (-3-\i*0.25,10.5+\i*0.25) -- (-3-\i*0.25, .5-\i*0.25) -- (-2.5+\i,.5-\i*0.25) -- (-2.5+\i, 1);}
\foreach \i in {0,...,4}{	
\draw[thick] (10.5-\i,10) -- (10.5-\i,10.5+\i*0.25) -- (12+\i*0.25,10.5+\i*0.25) -- (12+\i*0.25, .5-\i*0.25) -- (10.5-\i,.5-\i*0.25) -- (10.5-\i, 1);}
}, \hspace{-0.5cm}
\end{align}
where we made no assumption on the structure of the initial state representing it as a white box with $2L$ legs (the gray box denotes its complex conjugate). Therefore, the trace of its $n$-th power can be written as  
\begin{align}
\hspace{-.35cm} \tr[\rho^n_A(t)] = 
\fineq[-0.8ex][.5][1]{
\foreach \i in {2,...,3}{
\draw[thick] (12+2*\i+2-2.5-12,10) arc (225:0:0.15);
\draw[thick] (8+2*\i+2-11.5,10) arc (-45:175:0.15);
}
\foreach \i in {2,...,3}{
\draw[thick] (12+2*\i+2-2.5-12,-15) arc (-225:0:0.15);
\draw[thick] (8+2*\i+2-11.5,-15) arc (45:-175:0.15);
}
\foreach \j in {3,...,4}{	
\foreach \i in {-1,...,5}{
\roundgate[2*\i][2*\j+1.5][1][topright][red6][-1]
\roundgate[2*\i+1][2*\j+.5][1][topright][red6][-1]}}
\foreach \j in {3,...,4}{	
\foreach \i in {-1,...,5}{
\roundgate[2*\i][2*\j+.5-5][1][bottomright][blue6][-1]
\roundgate[2*\i+1][2*\j+1.5-5][1][bottomright][blue6][-1]}}
\draw[thick, fill=gray, rounded corners =0.5pt] (-1.8,5) rectangle (11.8,5.2);
\draw[thick, fill=white, rounded corners =0.5pt] (-1.8,6) rectangle (11.8,5.8);
\draw [decorate, thick, decoration = {brace}]   (13.5,5)--++(0,-15);
\node[scale=1.75] at (14,-2.5) {$n$};
\foreach \i in {0,...,4}{	
\draw[thick] (-2.5+\i,10) -- (-2.5+\i,10.5+\i*0.25) -- (-3.5-\i*0.25,10.5+\i*0.25) -- (-3.5-\i*0.25, .5-\i*0.25) -- (-2.5+\i,.5-\i*0.25) -- (-2.5+\i, 1);}
\foreach \i in {0,...,4}{	
\draw[thick] (10.5-\i,10) -- (10.5-\i,10.5+\i*0.25) -- (12+\i*0.25,10.5+\i*0.25) -- (12+\i*0.25, .5-\i*0.25) -- (10.5-\i,.5-\i*0.25) -- (10.5-\i, 1);}
\foreach \i in {2,...,7}{	
\draw[thick] (\i-0.5,1) -- (\i-0.5,-.5);
\draw[thick] (\i-0.5,-4.5) -- (\i-0.5,-6);
\node[scale=1.75] at (\i-0.5,-2.5) {$\vdots$};}
\def\shifty{-16}
\foreach \j in {3,...,4}{	
\foreach \i in {-1,...,5}{
\roundgate[2*\i][2*\j+1.5+\shifty][1][topright][red6][-1]
\roundgate[2*\i+1][2*\j+.5+\shifty][1][topright][red6][-1]}}
\foreach \j in {3,...,4}{	
\foreach \i in {-1,...,5}{
\roundgate[2*\i][2*\j+.5-5+\shifty][1][bottomright][blue6][-1]
\roundgate[2*\i+1][2*\j+1.5-5+\shifty][1][bottomright][blue6][-1]}}
\draw[thick, fill=gray, rounded corners =0.5pt] (-1.8,5+\shifty) rectangle (11.8,5.2+\shifty);
\draw[thick, fill=white, rounded corners =0.5pt] (-1.8,6+\shifty) rectangle (11.8,5.8+\shifty);
\foreach \i in {0,...,4}{	
\draw[thick] (-2.5+\i,10+\shifty) -- (-2.5+\i,10.5+\i*0.25+\shifty) -- (-3.5-\i*0.25,10.5+\i*0.25+\shifty) -- (-3.5-\i*0.25, .5-\i*0.25+\shifty) -- (-2.5+\i,.5-\i*0.25+\shifty) -- (-2.5+\i, 1+\shifty);}
\foreach \i in {0,...,4}{	
\draw[thick] (10.5-\i,10+\shifty) -- (10.5-\i,10.5+\i*0.25+\shifty) -- (12+\i*0.25,10.5+\i*0.25+\shifty) -- (12+\i*0.25, .5-\i*0.25+\shifty) -- (10.5-\i,.5-\i*0.25+\shifty) -- (10.5-\i, 1+\shifty);}
}. 
\label{eq:tracerhon}
\end{align}
This representation can be made more compact by folding the $2n$ replicas of the time evolution on top of each other in such a way that each copy of the backward evolution (blue) is underneath the nearest forward one (red). We then introduce a symbol for stacks of $n$ local gates alternated with its complex conjugate as follows
\begin{align}
\left(U\otimes U^*\right)^{\otimes n}=\fineq{
\foreach \i in {0,...,-2}
{\roundgate[\i*0.3][\i*0.15][1][topright][blue6]
\roundgate[\i*0.3-.15][\i*0.15-.075][1][topright][red6]
}
\draw [decorate, decoration = {brace}]   (-1.35,0.1)--++(1*0.9,.5*0.9);
\node[scale=1.25] at (-1.135,.5) {${}_{2n}$};
}=\fineq{\roundgate[0][0][1][topright][orange][n]}.
\label{eq:foldedgatepicture}
\end{align}
Analogously, we introduce a symbol for piles of $n$ initial states alternated with its complex conjugate
\be
\hspace{-.35cm} \left(\ket{\Psi_0} \otimes \ket{\Psi_0}^*\right)^{\otimes n} = \fineq{
\foreach \j in {0,...,-2}{
\foreach \i in {1,...,3}{
\draw[thick] (\i-0.5+\j*0.3,0+\j*0.15) -- (\i-0.25+\j*0.3,.25+\j*0.15);
\draw[thick] (\i-0.5+\j*0.3-.15,0+\j*0.15-.075) -- (\i-0.25+\j*0.3-.15,.25+\j*0.15-.075);
\draw[thick] (\i+0.25+\j*0.3,0+\j*0.15) -- (\i+\j*0.3,.25+\j*0.15);
\draw[thick] (\i+0.25+\j*0.3-.15,0+\j*0.15-.075) -- (\i+\j*0.3-.15,.25+\j*0.15-.075);}
\draw[thick, fill=gray, rounded corners =0.5pt] (.35+\j*0.3,.1+\j*0.15) rectangle (3.35+\j*0.3,-.1+\j*0.15);
\draw[thick, fill=white, rounded corners =0.5pt] (.35+\j*0.3-.15,.1+\j*0.15-.075) rectangle (3.35+\j*0.3-.15,-.1+\j*0.15-.075);}
\draw [decorate, decoration = {brace}]   (-.5,-0.2)--++(1*0.9,.5*0.9);
\node[scale=1.25] at (-.75,0) {${}_{2n}$};
}=\fineq{
\foreach \i in {1,...,3}{
\draw[thick] (\i-0.5,0) -- (\i-0.25,.25);
\draw[thick] (\i+0.25,0) -- (\i,.25);}
\draw[fill=black, rounded corners =0.5pt] (.35,.1) rectangle (3.35,-.1);}.
\ee
Finally, we introduce special symbols to represent special states in the replicated space that correspond to contractions between the replicas. Specifically, labelling the $2n$ replicas as $(1,1^*, 2, 2^*,\ldots, n, n^*)$ we define~\footnote{The number of replicas of the space where these states are defined is clear for the context. I will not report it on the states for brevity.} 
\begin{align}
\ket{\mcirc} &= \sum_{i_1,\ldots,i_n =1}^d \ket{i_1,i_{1},i_2,i_{2},\ldots, i_n, i_{n}} \equiv \fineq{\draw[thick] (0,0)--++(0,-.5);\cstate[0][0]}, \\
\ket{\msqr} &= \sum_{i_1,\ldots,i_n =1}^d \ket{i_1,i_{2},i_{2},i_{3},\ldots, i_n, i_{1}} \equiv \fineq{\draw[thick] (0,0)--++(0,-.5);\sqrstate[0][0]}. 
\end{align}
In this way Eq.~\eqref{eq:tracerhon} is represented as 
\begin{align}
\label{eq:tracerhonfolded}
\hspace{-1cm} \tr[\rho^n_A(t)] \!=\!\!
\fineq[-0.8ex][.7][1]{
\foreach \j in {3,...,4}{	
\foreach \i in {-1,...,5}{
\roundgate[2*\i][2*\j+1.5][1][bottomright][orange][n]
\roundgate[2*\i+1][2*\j+.5][1][bottomright][orange][n]}}
\draw[fill=black, rounded corners =0.5pt] (-1.8,6) rectangle (11.8,5.8);
\foreach \i in {7,..., 11}{
\cstate[\i-0.5][10]}
\foreach \i in {-2,..., 2}{
\cstate[\i-0.5][10]}
\foreach \i in {3,..., 6}{
\sqrstate[\i-0.5][10]}
}\!\!\!\!. \hspace{-1cm}
\end{align}
It is also useful to recall that, in this notation, the unitarity condition for the local gate (cf.~Eq.~\eqref{eq:unitarity}) is expressed as 
\begin{align}
	\!\!\!\!\!\!\!\!\!\fineq[-0.8ex][.9][1]{
	\roundgate[1][1][1][topright][orange][n]
	\cstatep[1.5][1.5][]
	\cstatep[.5][1.5][]
}&=
\fineq[-0.8ex][.9][1]{
	\draw[thick] (.5,1.5)--++(0,-1);
	\draw[thick] (1.5,1.5)--++(0,-1);
	\cstatep[1.5][1.5][] 
	\cstatep[.5][1.5][] },\quad\!\!
\fineq[-0.8ex][.9][1]{
	\roundgate[1][1][1][topright][orange][n]
	\cstatep[1.5][.5][] 
	\cstatep[.5][.5][]
	 }=\fineq[-0.8ex][.9][1]{
	\draw[thick] (.5,1.5)--++(0,-1);
	\draw[thick] (1.5,1.5)--++(0,-1);
	\cstatep[.5][.5][] 
	\cstatep[1.5][.5][]
	 },\label{eq:unitaritynfoldeddiagramcirc}\\
	 \\
	\!\!\!\!\!\!\!\!\!\fineq[-0.8ex][.9][1]{
	\roundgate[1][1][1][topright][orange][n]
	\sqrstate[1.5][1.5]
	\sqrstate[.5][1.5]
}&=
\fineq[-0.8ex][.9][1]{
	\draw[thick] (.5,1.5)--++(0,-1);
	\draw[thick] (1.5,1.5)--++(0,-1);
	\sqrstate[1.5][1.5]
	\sqrstate[.5][1.5]},\quad\!\!
\fineq[-0.8ex][.9][1]{
	\roundgate[1][1][1][topright][orange][n]
	\sqrstate[1.5][.5] 
	\sqrstate[.5][.5]
	 }=\fineq[-0.8ex][.9][1]{
	\draw[thick] (.5,1.5)--++(0,-1);
	\draw[thick] (1.5,1.5)--++(0,-1);
	\sqrstate[.5][.5]
	\sqrstate[1.5][.5]
	 }.\label{eq:unitaritynfoldeddiagramsqr}
\end{align}
In fact, we have many more conditions coming from unitarity. For any ``pairing state''  indexed by a permutation $\sigma$ of $n$ elements as
\begin{align}\label{eq:permutation_states}
\ket{\sigma}= \sum_{i_1,\ldots,i_n =1}^d \ket{i_1,i_{\sigma(1)},\ldots, i_n, i_{\sigma(n)}} \equiv\fineq[-0.4ex]{\draw[thick] (0,0)--++(0,-.5);
\cstatep[0][0][\sigma]}\,,
\end{align}
we have 
\begin{align}
	\!\!\!\!\!\!\!\!\!\fineq[-0.8ex][.9][1]{
	\roundgate[1][1][1][topright][orange][n]
	\cstatep[1.5][1.5][\sigma]
	\cstatep[.5][1.5][\sigma]
}&=
\fineq[-0.8ex][.9][1]{
	\draw[thick] (.5,1.5)--++(0,-1);
	\draw[thick] (1.5,1.5)--++(0,-1);
	\cstatep[1.5][1.5][\sigma] 
	\cstatep[.5][1.5][\sigma] },\quad\!\!
\fineq[-0.8ex][.9][1]{
	\roundgate[1][1][1][topright][orange][n]
	\cstatep[1.5][.5][\sigma] 
	\cstatep[.5][.5][\sigma]
	 }=\fineq[-0.8ex][.9][1]{
	\draw[thick] (.5,1.5)--++(0,-1);
	\draw[thick] (1.5,1.5)--++(0,-1);
	\cstatep[.5][.5][\sigma] 
	\cstatep[1.5][.5][\sigma]
	 }.\label{eq:unitaritynfoldeddiagramperm}
\end{align}
The states $\ket{\mcirc}$ and $\ket{\msqr}$ are just $2$ of these $n!$ states. 

Going back to Eq.~\eqref{eq:tracerhonfolded} and using multiple times the condition Eq.~\eqref{eq:unitaritynfoldeddiagramsqr} we find 
\be
\label{eq:tracerhonsimplified}
\hspace{-1cm} \tr[\rho^n_A(t)] \!=\hspace{-0.75cm}
\fineq[-0.8ex][.7][1]{
\draw[thick] (6.5,-1.25)--++(0,-.25);
\draw[thick] (-6.5,-1.25)--++(0,-.25);
\cstate[6.5][-1.25]
\cstate[-6.5][-1.25]
\foreach \j [evaluate=\j as \jplus using {\j+1}] in {1,...,2}
{	
\foreach \i in {-\j,...,\j}
{
\roundgate[2*\i][4-2*\j][1][bottomright][orange][n]
}
\foreach \i  [evaluate=\i as \ieval using {\i+.5}] in {-\jplus,...,\j}
{
\roundgate[2*\ieval][4-2*\j-1][1][bottomright][orange][n]
}
}
\foreach \i in {0,...,3}{
\cstate[\i-5.5][\i-.5]
\cstate[-\i+5.5][\i-.5]
}
\foreach \i in {-1,...,2}{
\sqrstate[\i-.5][2.5]
}

\draw[fill=black, rounded corners =0.5pt] (-1.8-5,-1.7) rectangle (11.8-5,-1.5);	
\draw [decorate, thick, decoration = {brace}]   (-1.5,2.75)--++(3,.0);
\node[scale=1.25] at (.5,3.25) {$2 |A|$};		
\draw [decorate, thick, decoration = {brace}]   (2.65,2.75)--++(3,-3);
\node[scale=1.25] at (4.5,1.5) {$2 t$};
}.
\ee
This diagram is a tensor network whose size scales linearly with $t$ and $|A|$ (and the local Hilbert space is $d^n$ dimensional): contracting it for generic choices of the unitary gates is very hard. As I discuss in the next section, however, there are some interesting choices of gates for which this is possible. Plugging back into Eq.~\eqref{eq:renyientropies} this gives the desired R\'enyi entropies.

\section{Circuits with solvable entanglement dynamics}

In last decade two different strategies have emerged to find solvable (but generically non-integrable) instances of BQCs:
\begin{itemize}
\item[(i)] Averaging over random gates.
\item[(ii)] Imposing particular conditions (or symmetries) on the gates.  
\end{itemize}
These two strategies are somewhat antithetical --- one is based upon reducing the constraints on (``the structure of'') the dynamics to a minimum and the other upon applying more constraints. Nevertheless, as we shall see, they both lead to interesting solvable examples.

\subsection{Random unitary circuits}

The idea of considering quantum circuits made of random unitary gates has appeared several times in the quantum information theoretic literature~(see, e.g.,~\cite{znidaric2008exact, harrow2009random}), however, to the best of my knowledge the first reference considering them as a bona fide quantum many-body system has been~\cite{nahum2017quantum}. In essence, the logic underlying this choice is the same that motivates random matrix theory (see, e.g.,~\cite{mehta2004random}): when the properties of an individual system are out of reach one can make progress by averaging over ensembles of similar systems. From this point of view one can think of random unitary circuits as a special kind of random matrix theory encoding spatial locality. Here I will only discuss the microscopic calculation of Eq.~\eqref{eq:tracerhonsimplified} in these systems, while a more extensive discussion is provided in Romain Vasseur's lectures (see also the reviews~\cite{potter2022entanglement, fisher2023random}).  

The simplest random unitary circuit with brickwork architecture is obtained considering the evolution operator in Eq.~\eqref{eq:floquetoperator} where, at each time, each $\{U^{(x)}\}_{x\in\mathbb Z_{2L}/2}$ is drawn independently from $U(d^2)$ with a probability proportional to the Haar measure. The focus is then on computing averaged quantities. For example in the case of $\tr[\rho^n_A(t)]$ we have 
\be
\tr[\rho^n_A(t)] \longrightarrow \mathbb E[\tr[\rho^n_A(t)]] = \int_{U(d^2)} \tr[\rho^n_A(t)] \prod_{x=0}^{2L-1} \prod_{\tau=1}^{2t} {\rm d}\mu_{\rm H}(U^{(x/2,\tau/2)}),
\label{eq:globalaverage}
\ee 
where ${\rm d}\mu_{\rm H}$ is the normalised Haar measure. Because of the independence of random unitaries at different space-time points $\mathbb E[\tr[\rho^n_A(t)]] $ can still be expressed as a tensor network like Eq.~\eqref{eq:tracerhonsimplified} where the local gate is changed as 
\begin{align}
\label{eq:weingarten}
\fineq{\roundgate[0][0][1][topright][orange][n]} \longrightarrow  \fineq{\roundgate[0][0][1][topright][gray][n]} &= \int_{U(d^2)}   \left(U\otimes U^*\right)^{\otimes n} {\rm d}\mu_{\rm H}(U)\notag\\
&= \sum_{i,j=1}^{n!} [{\rm Wg}]_{i,j} \ketbra{\sigma_i\sigma_i}{\sigma_j\sigma_j},
\end{align}
where in the second step we used that the integral over the unitary group can be computed exactly (see, e.g.,~\cite{collins2022weingarten}): the states appearing in there are the ``pairing states'' in Eq.~\eqref{eq:permutation_states} and the matrix of the coefficients, $[{\rm Wg}]_{i,j}$, is known as Weingarten matrix (see Romain's lecture for more details). 

Note that, after averaging, the effective local dimension passes from $d^{2n}$ to $n!$~\footnote{At least for $d\geq n$, otherwise the pairing states start to be linearly dependent.}. This means that for $n=2$ the local space is effectively $2$-dimensional and the treatment becomes much easier. Indeed, Eq.~\eqref{eq:weingarten} implies 
\be
\label{eq:basiclocalrel}
\fineq[-0.8ex][.9][1]{
	\roundgate[1][1][1][topright][gray][2]
	\cstatep[.5][1.5][]
	\sqrstate[1.5][1.5]
} =
\frac{d}{d^2+1}
\fineq[-0.8ex][.9][1]{
	\draw[thick] (.5,1.5)--++(0,-1);
	\draw[thick] (1.5,1.5)--++(0,-1);
	\sqrstate[1.5][1.5] 
	\sqrstate[.5][1.5]
}
+
\frac{d}{d^2+1}
\fineq[-0.8ex][.9][1]{
	\draw[thick] (.5,1.5)--++(0,-1);
	\draw[thick] (1.5,1.5)--++(0,-1);
	\cstatep[1.5][1.5][] 
	\cstatep[.5][1.5][] 
},
\ee
where I used the specific values of the Weingarten matrix to find the coefficients~\footnote{The states appearing in the expansion, however, are set by Eq.~\eqref{eq:weingarten}.}. This equation can be used to directly contract the diagram starting from the top: it ensures that the state on the $m$-th horizontal cut ($m=0,\ldots, 2t$) is a linear combination of ``domain wall states'' of the form 
\be
\ket{\ell_1,\ell_2,\ell_3}=\ket*{\overbrace{\mcirc  \ldots \mcirc }^{\ell_1} \overbrace{\msqr \ldots \msqr}^{\ell_2} \overbrace{\mcirc \ldots \mcirc}^{\ell_3}}, \qquad \ell_1 + \ell_2+ \ell_3 = |A| + 2m, 
\ee 
where the coefficients can be easily determined recursively (see, e.g.,~\cite{klobas2024translation}). The treatment further simplifies for initial states in product form and $|A|>2t$: in this case Eq.~\eqref{eq:weingarten} directly gives
\be
\mathbb E[\tr[\rho^2_A(t)]] = \left(\frac{2d}{d^2+1}\right)^4 \mathbb E[\tr[\rho^2_A(t-{1}/{2})]].
\ee
Noting that $\mathbb E[\tr[\rho^2_A(0)]]=1$ we then have 
\be
\mathbb E[\tr[\rho^2_A(t)]] =  \left(\frac{2d}{d^2+1}\right)^{4 t}\,.
\ee
Similar exact calculations are possible for other quantities involving two replicas of the time-evolution, e.g., out-of-time-order correlators~\cite{potter2022entanglement, fisher2023random}. 

For $n>2$ the analogue of Eq.~\eqref{eq:weingarten} produces more terms: this complicates the treatment severely enough to prevent an exact contraction of the diagram. As shown in by~\cite{zhou2019emergent}, however, can still analyse it in the limit of large $d$ and obtain 
\be
\mathbb E[\tr[\rho^n_A(t)]] \simeq \exp[- 4 t (n-1) \log d + O((\log d)^0)].
\ee
Random unitary circuits can be made ``less random'' by restricting the integration measure in Eq.~\eqref{eq:globalaverage}. For instance, one can consider ``Floquet random unitaries''~\cite{chan2018solution} where $\{U^{(x)}\}_{x\in\mathbb Z_{2L}/2}$ are chosen at random for $t=1$ but then are kept fixed in time, or even ``space-time translational invariant''~\cite{chan2022many} random circuits where one chooses one unitary matrix $U\in U(d^2)$ at random and then sets $U^{(x)}=U$ keeping this choice for all times. Doing this, however, makes these systems increasingly less solvable and exact results are restricted to the large-$d$ limit. 

\subsection{Dual-unitary circuits}
\label{sec:DUentanglement}

A possible motivation for approach (ii) --- imposing more constraints --- comes from the symmetry of BQCs under the exchange of space and time. The basic observation is that, since BQCs involve local interactions and evolve in discrete steps, the exchange of space and time in a BQC produces another system with the structure of a BQC. For instance, considering the diagram in Eq.~\eqref{eq:tracerhonfolded} and exchanging the roles of space and time, i.e., rotating it by 90 degrees, we obtain a diagram representing the evolution of a BQC (albeit with strange boundary conditions)  --- this is made more transparent by the choice of representing the gates' legs ``diagonally'' that I made in these lectures. Generically, however, the BQC-like system obtained by such a ``swap'' of space and time is not described by unitary gates. Therefore the symmetry is ``broken'': space-like and time-like dynamics are not equivalent. This suggests that the case in which this symmetry is instead preserved should be \emph{special} and perhaps \emph{solvable}. At the same time, this symmetry does not have an obvious relation with integrability. We are then left with the fascinating prospect of finding a family of solvable non-integrable models --- this is certainly a good enough reason for considering this case. 

Let us begin by defining more specifically what we mean by space-time swap. For a given local gate $U$ we define its space-time swapped counterpart, $\tilde U$, through the following reshuffling of matrix elements 
\be
\mel{ij}{\tilde U}{k\ell} = \fineq[-0.8ex][1][1]{
		\tsfmatV[0][-0.5][r][1][][][red6]
		\node at (-0.5,-0.2) {$i$};
		\node at (0.5,-0.2) {$k$};
		\node at (-0.5,1.2) {$j$};
		\node at (0.5,1.25) {$\ell$};} = \mel{j\ell}{U}{ik}\,.
\ee
The meaning of $\tilde U$ is now clear: while the four-legged tensor in the above equation acts like $U$ going from bottom to top, it acts as $\tilde U$ going from left to right. Requiring unitarity in both the temporal (upwards) and spatial (rightwards) directions leads to the following constraints
\be
U U^\dag = \mathds{1} = \tilde U \tilde U^\dag, 
\ee
which in the compact diagrammatic representation of the previous section are expressed as 
\begin{align}
	\!\!\!\!\!\!\!\!\!\fineq[-0.8ex][1][1]{
	\roundgate[1][1][1][topright][orange][n]
	\cstatep[1.5][1.5][\sigma]
	\cstatep[.5][1.5][\sigma]
}&=
\fineq[-0.8ex][1][1]{
	\draw[thick] (.5,1.5)--++(0,-1);
	\draw[thick] (1.5,1.5)--++(0,-1);
	\cstatep[1.5][1.5][\sigma] 
	\cstatep[.5][1.5][\sigma] },\quad\!\!
\fineq[-0.8ex][1][1]{
	\roundgate[1][1][1][topright][orange][n]
	\cstatep[1.5][.5][\sigma] 
	\cstatep[.5][.5][\sigma]
	 }=\fineq[-0.8ex][1][1]{
	\draw[thick] (.5,1.5)--++(0,-1);
	\draw[thick] (1.5,1.5)--++(0,-1);
	\cstatep[.5][.5][\sigma] 
	\cstatep[1.5][.5][\sigma]
	 },\label{eq:unitaritynfoldeddiagram}\\
\!\!\!\!\!\!\!\!\!\fineq[-0.8ex][1][1]{
	\roundgate[1][1][1][topright][orange][n]
	\cstatep[1.5][1.5][\sigma] 
	\cstatep[1.5][.5][\sigma]
}&=
\fineq[-0.8ex][1][1]{
	\draw[thick] (1.5,1.5)--++(-1,0);
	\draw[thick] (1.5,.5)--++(-1,0);
	\cstatep[1.5][1.5][\sigma] 
	\cstatep[1.5][.5][\sigma]
	 },\qquad\!\!
\fineq[-0.8ex][1][1]{
	\roundgate[1][1][1][topright][orange][n]		
	\cstatep[.5][1.5][\sigma]
	\cstatep[.5][.5][\sigma]
	 }=\fineq[-0.8ex][1][1]{
	\draw[thick] (1.5,.5)--++(-1,0);
	\draw[thick] (1.5,1.5)--++(-1,0);
	\cstatep[.5][1.5][\sigma]
	\cstatep[.5][.5][\sigma] 
	}.
	\label{eq:spaceunitaritynfoldeddiagram}
\end{align}
These constraints --- which turn out to have non-trivial solutions for all $d\geq 2$ --- define the family of so-called \emph{dual-unitary} gates~\cite{bertini2019exact, gopalakrishnan2019unitary}. An interesting subfamily is given by 
\be
\label{eq:parameterisation}
U = (u_1 \otimes u_2) \cdot S \cdot e^{i J S_3\otimes S_3} \cdot (u_3 \otimes u_4),\qquad u_j\in U(d), 
\ee
where $S$ is the SWAP gate and $S_3$ is the $d$-dimensional representation of the third generator of $SU(2)$. This family is exhaustive in $d=2$ while for $d>2$ larger (but non-exhaustive) families are known~\cite{bertini2025exactly}.  

The BQCs built of dual-unitary gates --- called dual unitary circuits --- are generically strongly interacting and generate non-integrable dynamics (although they do have integrable points), nevertheless, many of their non-equilibrium properties can be characterised exactly. In this and the next lecture I will present a few of these exact results, however, for a comprehensive account I direct you to~\cite{bertini2025exactly}. The review also discusses how the dual-unitarity condition can be systematically relaxed while retaining some solvability --- this research direction is still ongoing and I will not discuss it here. 

Considering again the evolution of R\'enyi entropies, let us specialise the treatment to a dual-unitary circuit prepared in an initial state that is compatible with the spatial unitarity condition, i.e., that provides a boundary condition preserving unitarity. For example, consider a product of Bell pairs of the form 
\be
\label{eq:Bellpairstate}
\ket{\Psi_0}=\bigotimes_{x=1}^L \ket{\psi_0}, \qquad \qquad   \ket{\psi_0} \equiv \sum_{i=1}^d \frac{\ket{i}_x\otimes \ket{i}_{x+1/2}}{\sqrt d}. 
\ee
Representing it diagrammatically 
\be
\ket{\Psi_0}= \frac{1}{d^{L/2}} \fineq[1.6ex][0.8][1]{
\foreach \i in {0, ..., 3}{\bellpair[2*\i][0]}
\draw [decorate, thick, decoration = {brace}]   (6,0.1)--++(-6,.0);
\node[scale=1.25] at (3,-.25) {$L$};	
},
\ee
we see that from the point of view of the space evolution the state in Eq.~\eqref{eq:Bellpairstate} implements (unitarity conserving) open boundary conditions at $t=0$. Writing Eq.~\eqref{eq:tracerhonsimplified} for this state we have 
\be
\label{eq:tracerhonsimplified}
\hspace{-1cm} \tr[\rho^n_A(t)] \!=  \frac{1}{d^{n(|A|+2t+1)}} \hspace{-1.25cm}
\fineq[-0.8ex][.7][1]{
\draw[thick] (6.5,-1.25)--++(0,-.25);
\draw[thick] (-6.5,-1.25)--++(0,-.25);
\cstate[6.5][-1.25]
\cstate[-6.5][-1.25]
\foreach \j [evaluate=\j as \jplus using {\j+1}] in {1,...,2}
{	
\foreach \i in {-\j,...,\j}
{
\roundgate[2*\i][4-2*\j][1][bottomright][orange][n]
}
\foreach \i  [evaluate=\i as \ieval using {\i+.5}] in {-\jplus,...,\j}
{
\roundgate[2*\ieval][4-2*\j-1][1][bottomright][orange][n]
}
}
\foreach \i in {0,...,3}{
\cstate[\i-5.5][\i-.5]
\cstate[-\i+5.5][\i-.5]
}
\foreach \i in {-1,...,2}{
\sqrstate[\i-.5][2.5]
}
\foreach \i in {-3, ..., 3}{\bellpair[2*\i][-2]}
\draw [decorate, thick, decoration = {brace}]   (-1.5,2.75)--++(3,.0);
\node[scale=1.25] at (.5,3.25) {$2 |A|$};		
\draw [decorate, thick, decoration = {brace}]   (2.65,2.75)--++(3,-3);
\node[scale=1.25] at (4.5,1.5) {$2 t$};
} \hspace{-1cm}. \hspace{-1.25cm}
\ee
By repeated application of Eqs.~\eqref{eq:unitaritynfoldeddiagram} and \eqref{eq:spaceunitaritynfoldeddiagram} we then immediately find  
\begin{align}
\tr[\rho^n_A(t)] &= \frac{ \braket{\msqr}{\mcirc}^{\min(4t+2,2|A|)} \braket{\mcirc}{\mcirc}^{\max(0,|A|-2t-1)} \braket{\msqr}{\msqr}^{\max(0,2t+1-|A|)}}{d^{n (|A|+2t+1)}}\notag\\
&= d^{-2(n-1) (\min(2t+1,|A|))}\,,
\label{eq:npurity}
\end{align}
where in the second step we used $\braket{\mcirc}{\mcirc}=\braket{\msqr}{\msqr}=d^n$ and $\braket{\msqr}{\mcirc}=d$. Specialising this equation to $n=2$, we see that in dual unitary circuits the purity decays faster than in random unitary circuits because 
\be
 \left(\frac{2d}{d^2+1}\right)> \frac{1}{d}\,.
\ee
Moreover, plugging Eq.~\eqref{eq:npurity} into Eq.~\eqref{eq:renyientropies} we find 
\be
\label{eq:resultrenyi}
S^{(n)}_A(t)  = 2 \min(2t+1,|A|) \log d\,,\qquad  \forall n=2,3,\ldots,
\ee
and in fact this result can be extended to any $n\in\mathbb R^+$~\cite{bertini2025exactly}. The entanglement growth described by this equation, i.e., $2\log d$ per time step, is the maximal possible growth achievable in a BQC (this can be seen using the minimal cut bound~\cite{bertini2019entanglement}). In fact, \cite{zhou2022maximal} have shown the converse implication: if a gate generates the maximal increase of entanglement then it is dual-unitary. This is a manifestation of the special nature of dual-unitary circuits: because of their space-time swap symmetry correlations (and entanglement) propagate maximally fast in these systems.  

The exact calculation above can be repeated for all states preserving the unitarity of the space evolution, which have been introduced and classified by~\cite{piroli2020exact}. In particular, these states can always be written as ``two-site'' matrix-product states of the form  
\begin{align}
\!\!\ket{\Psi_0(\mathcal{M})} & = \frac{1}{d^{L/2}} \!\!\! \sum_{i_1,i_2,\ldots=1}^d \!\!\!\!\tra{\mathcal{M}^{i_1,i_2}\mathcal{M}^{i_3,i_4}\!\!\!\!\ldots}\!\!\ket{i_1,i_2,\ldots, i_{2L}}\notag\\
& =\frac{1}{d^{L/2}} \fineq[-0.8ex][1][1]{
\foreach \i in {0, ..., 2}{
\MPSinitialstate[2*\i][0][red6][topright]}
\draw[very thick] (0,-0.05) -- (4,-0.05);
\draw[very thick, dashed] (-1,-0.05) -- (5,-0.05);
},
\label{eq:twositeMPS}
\end{align}
where the matrix 
\be
\mel{ij}{\Gamma}{k\ell}= \fineq[-0.8ex][1][1]{
\MPSinitialstate[0][0][red6][topright]
\node at (-0.65,-0.2) {$i$};
\node at (0.65,-0.2) {$k$};
\node at (-0.65,.65) {$j$};
\node at (0.65,.65) {$\ell$};}
\ee
is unitary. Moreover, \cite{foligno2023growth} have shown that --- even when the initial state is not of this form --- for almost all dual-unitary circuits the entanglement growth eventually approaches the maximal rate at large times.

\chapter{}

In the previous lecture we saw that BQCs can be used to characterise universal properties of quantum many-body dynamics, in some cases even exactly. In this lecture we will discuss how to use BQCs to distinguish different kinds of quantum many-body dynamics based on their \emph{complexity}. 

The basic question is how to define complexity or \emph{chaos} in the quantum realm and whether one can establish in this setting something analogous to the ergodic hierarchy occurring in classical systems. This question is famously hard because of the inherent differences between quantum and classical mechanics. For instance, in quantum mechanics there are no trajectories --- ruling out the intuitive notion of chaos in terms of sensitivity to the initial conditions that we all heard about in pop science. In extreme summary, the many different approaches proposed to address this problem in the last fifty years can be organised in two main groups  
\begin{itemize}
\item[1.] ``Spectral'' (historical): Chaos is indirectly linked to some features of the system's spectrum (set of eigenvalues of the Hamiltonian\slash evolution operator): chaos $\iff$ spectrum similar to that of a random matrix. 
\item[2.] ``Dynamical'' (modern): Chaos is directly linked to features of the system's dynamics. The most utilised ones are the ability of the dynamics to \emph{scramble} quantum information (make local (in space) information increasingly more non-local) or to the \emph{computational complexity} of their classical simulation. 
\end{itemize}

Remarkably, BQCs have allowed us to make key progress in both these directions. In this final lecture I will discuss two examples, showing that in the case of dual-unitary circuits one can even obtain exact results. 

\section{Dynamical chaos via local operator entanglement}

Let us begin by discussing the dynamical approach. I this lecture I will consider a ``practical'' definition of dynamical quantum chaos based the hardness of classical simulations. 

As we saw in the previous lecture the entanglement of quantum states grows very quickly after quantum quenches in almost all scenarios. Therefore, it does not provide a good criterion to characterise the complexity of the dynamics. One can, however, ask what happens if instead of states we evolve operators, i.e., we adopt the Heisenberg picture. For example, let us consider a local operator $\mathcal O_{L/2}$ acting non-trivially only at position $x=L/2$ and evolve it in time 
\be
\mathcal O_{L/2}(t) = \mathbb U^{-t} \mathcal O_{L/2} \mathbb U^t,
\ee
and imagine to expand it in a basis of operators ``in product form'' containing $ \mathcal O$ --- for example for $d=2$ one can imagine to take $\mathcal O=X$ and expand in the Pauli basis. If the dynamics is particularly simple --- for example a BQC composed only of SWAP gates --- this expansion will only contain one term. Conversely, for more complicated dynamics one can expect that the number of terms in the expansion will grow very fast with time as the support of the operator grows. \cite{prosen2007is, prosen2007operator} proposed to quantify this process using an appropriate entanglement measure.

To define it let us look at this operator as a state of a larger (doubled) system --- formally this is achieved by an operator-to-state mapping acting on an operator basis as   
\be
\ketbra{n_1,\ldots,n_{2L}}{m_1,\ldots,m_{2L}} \longmapsto \ket{n_1,m_1,\ldots,n_{2L},m_{2L}}, 
\ee
where $\{\ket{n}\}_{n=0}^{d}$ is a given basis of $\mathbb C^{d}$ and the symbols for state placed next to each other represent tensor products $\ket{n,m}=\ket{n}\otimes \ket{m}$. In fact this mapping is nothing but the folding mapping discussed in the previous lecture (cf.~Sec.~\ref{sec:folding}) specialised to the one-replica case ($n=1$). Under this mapping the local operator of interest transforms as 
\be
\mathcal O_{L/2}(t) \longmapsto \ket*{\mathcal O_{L/2}(t)} = (\mathcal O_{L/2}(t) \otimes \mathds{1}) \ket{\mcirc}^{\otimes L},
\ee
where the state $\ket{\mcirc}^{\otimes L}$ is known as ``Choi state'' in quantum information theory. Diagrammatically we have 
\be
\hspace{-.5cm}\mathcal O_{L/2}(t) = \fineq[-0.8ex][.6][1]{
\foreach \j in {0,...,3}{
\foreach \i in {0,...,2}{
\roundgate[2*\i][2*\j][1][bottomright][blue6][-1]
\roundgate[2*\i+1][1+2*\j][1][bottomright][blue6][-1]}}
\foreach \j in {0,...,3}{
\foreach \i in {0,...,2}{
\roundgate[2*\i][-2*\j-1][1][topright][red6][-1]
\roundgate[2*\i+1][-1-2*\j-1][1][topright][red6][-1]}}
 \draw[thick, fill = black]  (2.5,-.5) circle (2pt);
\node[scale=1.25] at (2.85,-0.5) {$\mathcal O$};
}
\hspace{-.5cm}\longmapsto 
\ket*{\mathcal O_{L/2}(t)} = \hspace{-.125cm} \fineq[-0.8ex][.6][1]{
\foreach \j in {0,...,3}{
\foreach \i in {0,...,2}{
\roundgate[2*\i][2*\j][1][bottomright][orange][1]
\roundgate[2*\i+1][1+2*\j][1][bottomright][orange][1]}}
\foreach \i in {0,...,5}{
\cstate[\i-.5][-.5]}
 \draw[thick, fill = black]  (2.5,-.5) circle (2.5pt);
\node[scale=1.25] at (2.75,-0.75) {$\mathcal O$};
}\,, 
\ee
where we introduced the symbol 
\be
 \fineq[-0.8ex][.65][1]{
\draw[thick] (0,0)--++(0,-.5);
\cstate[0][0]
 \draw[thick, fill = black]  (0,0) circle (2.5pt);
\node[scale=1.25] at (.5,0) {$\mathcal O$};
}=\ket{\mathcal O} = {\mathcal O}\otimes \mathds{1} \ket{\mcirc},
\label{eq:foldedlocaloperator}
\ee
to denote the folded version of $\mathcal O$. 

Now that we have $\ket*{\mathcal O_{L/2}(t)}$, following \cite{prosen2007is, prosen2007operator}, we compute its entanglement over an arbitrary (but contiguous) spatial bipartition $A\cup \bar A$. For instance, proceeding as in the previous lecture the second R\'enyi entropy can be expressed as
\be
S^{(2)}_{\mathcal O, A}(t) = -\log\tr[\rho^2_{\mathcal O, A}(t)],
\ee
where {\color{blue} we introduced the operatorial density matrix
\be
\hspace{-1cm}\rho_{\mathcal O, A}(t) = \fineq[-0.8ex][.6][1]{
\foreach \j in {0,...,3}{
\foreach \i in {0,...,2}{
\roundgate[2*\i][2*\j+2.5][1][bottomright][orange][1]
\roundgate[2*\i+1][1+2*\j+2.5][1][bottomright][orange][1]}}
\foreach \j in {0,...,3}{
\foreach \i in {0,...,2}{
\begin{scope}[rotate around={180:(2*\i,-2*\j-1)}]
\roundgate[2*\i][-2*\j-1][1][bottomright][orange][1]
\end{scope}
\begin{scope}[rotate around={180:(2*\i+1,-1-2*\j-1)}]
\roundgate[2*\i+1][-1-2*\j-1][1][bottomright][orange][1]
\end{scope}}}
\foreach \i in {0,...,5}{
\cstate[\i-.5][2]}
 \draw[thick, fill = black]  (2.5,2) circle (2.5pt);
\node[scale=1.25] at (2.75,1.75) {$\mathcal O$};
\foreach \i in {0,...,5}{
\cstate[\i-.5][-0.5]}
 \draw[thick, fill = black]  (2.5,-0.5) circle (2.5pt);
\node[scale=1.25] at (2.75,-0.2) {$\mathcal O$};
\foreach \i in {0,...,4}{	
\draw[thick] (-2.5+\i+3,10) -- (-2.5+\i+3,10.5+\i*0.25) -- (-3.5-\i*0.25+2.5,10.5+\i*0.25) -- (-3.5-\i*0.25+2.5, -9-\i*0.25) -- (-2.5+\i+3,-9-\i*0.25) -- (-2.5+\i+3, 1-9.5);}
}
\hspace{-1cm}\longmapsto 
\ket*{\rho_{\mathcal O, A}(t)} = \hspace{-.125cm} \fineq[-0.8ex][.6][1]{
\foreach \j in {0,...,3}{
\foreach \i in {0,...,2}{
\roundgate[2*\i][2*\j][1][bottomright][orange][2]
\roundgate[2*\i+1][1+2*\j][1][bottomright][orange][2]}}
\foreach \i in {0,...,5}{
\cstate[\i-.5][-.5]}
 \draw[thick, fill = black]  (2.5,-.5) circle (2.5pt);
\node[scale=1.25] at (2.75,-0.75) {$\mathcal O$};
\foreach \i in {-1,...,2}{
\sqrstate[\i+1.5][7.5]
}
}\,. 
\ee
Folding again, we can represent the operator purity as}
\be
\tr[\rho^2_{\mathcal O, A}(t)] = \frac{1}{d^{4L}}\fineq[-0.8ex][.7][1]{
\foreach \j in {0,...,3}{	
\foreach \i in {-1,...,3}{
\roundgate[2*\i][2*\j+1][1][bottomright][orange][4]
\roundgate[2*\i+1][2*\j][1][bottomright][orange][4]}}
\foreach \i in {-1,1,3}{
\trigstater[\i-0.5][7.5]}
\foreach \i in {-2,0,2}{
\trigstatel[\i-0.5][7.5]}
\foreach \i in {4,..., 7}{
\crossstate[\i-0.5][7.5]}
\foreach \i in {-1,...,8}{
\cstate[\i-.5][-.5]}
 \draw[thick, fill = black]  (1.5,-.5) circle (2.5pt);
\node[scale=1.25] at (1.2,-0.75) {$\mathcal O$};
}\,.
\ee
In the diagram above I introduced two special paring states 
\begin{align}
\ket{\mtrig} &= \sum_{i_1,\ldots,i_4 =1}^d \ket{i_1,i_{2},i_{2},i_{1},i_3,i_{4},i_{4},i_{3}} \equiv \fineq{
\draw[thick] (0,0)--++(0,-0.5);\trigstate[0][0]},\\
\ket{\mcross} & = \sum_{i_1,\ldots,i_4 =1}^d \ket{i_1,i_{2},i_{3},i_{4},i_4,i_{3},i_{2},i_{1}} \equiv \fineq{
\draw[thick] (0,0)--++(0,-0.5);\crossstate[0][0]}, 
\end{align}
and used the diagram in Eq.~\eqref{eq:foldedlocaloperator} to denote the replicated version of the folded local operator (${\mathcal O}\otimes \mathds{1}\mapsto ({\mathcal O}\otimes \mathds{1} \otimes {\mathcal O}^\dag \otimes \mathds{1} )^{\otimes n=2}$). 

The entanglement of $\ket*{\mathcal O_{L/2}(t)}$ is often succinctly referred to as local operator entanglement (LOE). Besides quantifying the complexity of implementing the Heisenberg evolution of local operators with tensor network methods, LOE is indeed believed to distinguish integrable and non-integrable dynamics. All the examples that we could characterise so far have shown (or suggested) that for integrable systems the growth of LOE entropies is sub-linear (bounded by a term $\propto \log t$) while for non-integrable systems is linear. 

As we will now see, BQCs (and especially dual-unitary ones) can be used to provide analytical support to these statements. 

\subsection{Local operator entanglement in dual-unitary circuits}

Let us now embark on a more explicit calculation of local operator entanglement. For simplicity, consider the second R\'enyi entropy of LOE (the discussion, however, generalises directly to all $n\geq 2$). 

We begin by using the unitarity relations (cf.~Eq.~\eqref{eq:unitaritynfoldeddiagram}) fulfilled by the local gates to reduce the diagram for the operator purity to the causal light cone. Conveniently rotating it by 45 degrees we have 
\be
\label{eq:lightcone}
\tr[\rho^2_{\mathcal O, A}(t)]  =\frac{1}{d^{2t}} \fineq[-2.4ex][.7][1]{
\begin{scope}[rotate around={-45:(0,0)}]
\foreach \i in {0,...,4}{
\foreach \j in {2,...,4}{
\roundgate[\i+\j][\i-\j][1][bottomright][orange][4]
}
}
\foreach \i in {0,...,4}{
\trigstatel[\i+1.5][\i-1.5]	
\cstate[\i+4.5][\i-4.5]	
}
\foreach \i in {2,...,4}{
\cstate[\i-.5][-\i-.5]	
\crossstate[\i+4.5][-\i+4.5]
}
\charge[.5+4][-0.5-4][black]
\end{scope}
\draw [decorate, thick, decoration = {brace}]   (-0.35,-2)--++(6.25,0);
\node[scale=1.5] at (2.75,-1.5) {$t+x$};
\draw [decorate, thick, decoration = {brace}]   (6.55,-2.5)--++(0,-3.35);
\node[scale=1.5] at (7.5,-4) {$t-x$};
\node[scale=1.5] at (.5,-6.5){$\mathcal O$};
},
\ee 
where I denoted by $x$ the distance between $L/2$ (which we assumed to be integer) and last site belonging to the subsystem $A$. Note that --- differently from what happened in last lecture's  state entanglement calculation (see Sec.~\ref{sec:DUentanglement}) --- even when specialising the treatment to dual-unitary circuits, and hence acquiring the additional relations in Eq.~\eqref{eq:spaceunitaritynfoldeddiagram}, the diagram cannot be directly contracted. We should have expected this as the dual-unitarity relations are fulfilled both by integrable (e.g.\ the SWAP gate) and non-integrable gates --- see the parameterisation in Eq.~\eqref{eq:parameterisation} --- while we believe the operator purity to behave differently in the two cases.

To make progress, one can note that the diagram above can be expressed as follows
\begin{align}
\tr[\rho^2_{\mathcal O, A}(t)]  &= \mel*{\underbrace{\mcirc \ldots \mcirc}_{x_-} }{R_{\bcirc,x_-} (R_{x_-})^{x_+-1}}{\underbrace{\mcross \ldots \mcross}_{x_-}} \\
&= \mel*{\bcirc\underbrace{\mcirc \ldots \mcirc}_{x_+-1}}{(L_{x_+})^{x_-}}{\underbrace{\mtrig \ldots \mtrig}_{x_+} },
\label{eq:rhoopTM}
\end{align}
where I introduced the ``light cone coordinates'' $x_\pm = t \pm x$ and the transfer matrices 
\begin{align}
R_x &= 
\fineq[-2.4ex][.65][1]{
\begin{scope}[rotate around={45:(0,0)}]
\foreach \j in {0,...,4}{
\roundgate[\j][-\j][1][bottomright][orange][4]
}
\trigstatel[-.5][.5]	
\cstate[4.5][-4.5]	
\end{scope}
\draw [decorate, thick, decoration = {brace}]   (-0.15,.75)--++(5.9,0);
\node[scale=1.5] at (2.8,1.15) {$x$};
},\\
R_{\bcirc,x} &= 
\fineq[-2.4ex][.65][1]{
\begin{scope}[rotate around={45:(0,0)}]
\foreach \j in {0,...,4}{
\roundgate[\j][-\j][1][bottomright][orange][4]
}
\trigstatel[-.5][.5]	
\cstate[4.5][-4.5][][black]	
\end{scope}
\draw [decorate, thick, decoration = {brace}]   (-0.15,.75)--++(5.9,0);
\node[scale=1.5] at (2.8,1.15) {$x$};
\node[scale=1.25] at (6.75,0) {$\mathcal O$};
}\!\!\!,\\
L_x &= 
\fineq[-2.4ex][.65][1]{
\begin{scope}[rotate around={-45:(0,0)}]
\foreach \j in {0,...,4}{
\roundgate[\j][\j][1][bottomright][orange][4]
}
\cstate[-.5][-.5]	
\crossstate[4.5][4.5]	
\end{scope}
\draw [decorate, thick, decoration = {brace}]   (-0.15,.75)--++(5.9,0);
\node[scale=1.5] at (2.8,1.15) {$x$};
}.
\end{align}

I now state and prove two simple lemmas that help characterising these matrices
\begin{lemma}
For all unitary local gates $\{U^{(x)}\}$ the transfer matrix $T_x=R_x, R_{\bcirc,x}, L_x$ fulfils 
\be
\| T_x\| \leq d^2. 
\ee
\end{lemma}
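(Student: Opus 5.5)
The plan is to write each transfer matrix as ``a string of unitaries, capped by one pairing state at each end'', and then to bound its norm using only the unitarity relations in Eq.~\eqref{eq:unitaritynfoldeddiagramperm}. Concretely, let $W=(U\otimes U^*)^{\otimes 4}$ be the folded gate. I would first note that $W$ is itself unitary on $(\mathbb C^{d})^{\otimes 16}$, since it is a tensor product of unitaries and their complex conjugates. Reading the diagram of $T_x$ along the diagonal, we can write
\begin{equation*}
T_x=\big(\mathds 1^{\otimes x}\otimes \bra{a}\big)\,V_x\,\big(\ket{b}\otimes \mathds 1^{\otimes x}\big),\qquad V_x=W_{x,x+1}\cdots W_{2,3}W_{1,2}.
\end{equation*}
Here $V_x$ is a staircase product of folded gates acting on $x+1$ folded legs, so it is unitary. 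The vectors $\ket{a},\ket{b}$ are the two boundary pairing states: $\ket{\mtrig}$ and $\ket{\mcirc}$ for $R_x$, $\ket{\mtrig}$ and $\ket{\mathcal O}$ for $R_{\bcirc,x}$, and $\ket{\mcirc}$ and $\ket{\mcross}$ for $L_x$.

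The naive estimate $\|T_x\|\le \|a\|\,\|b\|$ is not good enough. Every pairing state on four replicas has $\braket{\sigma}{\sigma}=d^4$, so this estimate only gives $d^4$. The key step is to absorb one of the two boundary norms using unitarity. I would do this through an unfolded (channel) reading of $T_x$. Each pairing state is $d^2$ times a normalised product of maximally entangled vectors between pairs of replicas. Contracting one end of the staircase with such a normalised vector and the opposite end with the other pairing state turns $T_x$, in the unfolded picture, into a map of the form
\begin{equation*}
X\mapsto d^{2}\,\tr_{\rm aux}\!\big[\mathcal V\,(X\otimes \tfrac{\mathds 1}{d^{2}})\,\mathcal V^\dag\big]\cdot d^{2},
\end{equation*}
up to a reshuffling of replica indices. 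Here $\mathcal V$ is unitary on the replicated space. The intermediate claim I would aim for is that the normalised part is a \emph{unital, trace-preserving} map on operators on the $x$ legs. Unitality should follow from the relations \eqref{eq:unitaritynfoldeddiagramperm} applied to the pairing state at the free end: the state passes through the whole staircase, as in the simplification leading to Eq.~\eqref{eq:lightcone}. Trace preservation follows from unitarity of $\mathcal V$. A unital trace-preserving map is a contraction in the Hilbert--Schmidt norm, which is the Euclidean norm of the folded vector. It then remains to count the powers of $d$ correctly, which should leave exactly $d^2$.

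For $R_{\bcirc,x}$ the same argument applies with $\ket{\mcirc}$ replaced by $\ket{\mathcal O}=(\mathcal O\otimes\mathds 1)\ket{\mcirc}$. I would assume the normalisation $\tr[\mathcal O^\dag\mathcal O]=d$, or $\|\mathcal O\|\le 1$, which is implicit in the paper. With this, $\|\ket{\mathcal O}\|\le\|\ket{\mcirc}\|$ and the bound is unchanged. For $L_x$ the argument is the mirror image, with left and right exchanged and $\ket{\mtrig}$ replaced by $\ket{\mcross}$.

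The main obstacle is the power counting in the second step. The reduction to a Hilbert--Schmidt contraction must be set up so that only one boundary contributes a factor $d^{2}$, while the other is absorbed by unitarity rather than by Cauchy--Schwarz. The first thing I would try is to check the smallest case, $x=1$ with $n=1$ replica in place of $4$. There the analogous statement reads $\|\bra{\mcirc}W\ket{\mcirc}\|\le d$, which is exactly the Hilbert--Schmidt contractivity of the channel $A\mapsto \tfrac1d\tr_2[U(A\otimes\mathds 1)U^\dag]$. I would then verify that tensoring over replicas, with the $\ket{\mtrig}$ and $\ket{\mcross}$ reshufflings, preserves this structure. If the reshuffling spoils unitality, the fallback is to bound $\|T_x\|^2=\|T_x^\dag T_x\|$. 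Writing $T_x^\dag T_x$ as a doubled diagram, the unitarity relations \eqref{eq:unitaritynfoldeddiagram} collapse the middle, leaving a single boundary overlap $\braket{a}{a}=d^4$ times a contraction.
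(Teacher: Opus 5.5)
\textbf{There is a genuine gap, and it is the step you flag as the main obstacle. Moreover, no argument can close it: for general unitary gates the bound is false.}

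\emph{Counterexample.} Take $U^{(x)}=\mathds{1}$. Each folded gate then joins every lower leg to the upper leg on the same side. In $R_1$ the $\ket{\mtrig}$ cap is therefore carried to the input leg and the $\ket{\mcirc}$ cap to the output leg, so $R_1=\ketbra{\mcirc}{\mtrig}$ and $\|R_1\|=\|\ket{\mcirc}\|\,\|\ket{\mtrig}\|=d^4$. For larger $x$ the same rank-one factor is tensored with a shift. Your naive estimate $d^4$ is thus sharp.

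\emph{Why your steps fail.} If $\|A\|\le 1$ and $Av=v$, then $\|A^\dagger v-v\|^2\le 0$. So $\|R_x\|\le d^2$ together with $R_x\ket{r_0}=d^2\ket{r_0}$ would force $\bra{r_0}R_x=d^2\bra{r_0}$. Without dual-unitarity, only $\bra{r_x}=\bra{\mtrig\cdots\mtrig}$ is guaranteed to be a left eigenvector; for $U=\mathds{1}$ one has $\bra{\mcirc}R_1=d^4\bra{\mtrig}$.
\begin{itemize}
\item Your ``unital and trace-preserving'' claim fails for this reason. The operator fixed on the right and the functional preserved on the left are different pairings, so no normalisation makes the map a Hilbert--Schmidt contraction.
\item Your $n=1$ sanity check cannot detect this, because at $n=1$ both caps are the same pairing.
\item The fallback also fails. $T_x^\dagger T_x$ contains $V^\dagger(\mathds{1}\otimes\ketbra{a}{a})V$, not $V^\dagger V$. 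The folded unitarity relations need the same pairing on both legs of a gate, so they do not apply, and you only recover $\|T_x\|^2\le\|a\|^2\|b\|^2=d^8$.
\end{itemize}

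\textbf{Comparison with the paper's proof.} The paper's proof is the explicit version of your ``reshuffling''. It unfolds the four-replica chain into a two-replica chain of $2x$ gates, rotates half of them by $180^\circ$, caps both ends with $\ket{\mcirc}$, and bounds the result by $\braket{\mcirc}{\mcirc}=d^2$. This step has the same caveat, whichever way it is read:
\begin{itemize}
\item If all $2x$ gates are read as time-direction unitaries, the rotation swaps the inputs and outputs of the rotated copies. What is bounded is then a partial transpose of $R_x$, which indeed has norm $d^2$ at $U=\mathds{1}$, whereas $R_x$ does not.
\item If one keeps $R_x$'s own inputs and outputs and multiplies along the unfolded chain, one half is traversed in the space direction. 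Its unitarity is exactly dual-unitarity.
\end{itemize}

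\textbf{What your route does prove.} Both caps pair copies only within $\{1,\dots,4\}$ and within $\{5,\dots,8\}$, so $R_x=R^{(2)}_x\otimes R^{(2)}_x$. Regroup each block as $(1,4)$ and $(2,3)$; this is a norm-preserving permutation of tensor factors. It gives $R^{(2)}_x\simeq\sum_{b,b'}M_{bb'}\otimes\overline{M_{bb'}}$ with $M_{bb'}=\sum_a V_{ab}\otimes\overline{V_{ab'}}$. Here $V_{ab}=(\bra{a}\otimes\mathds{1})V(\ket{b}\otimes\mathds{1})$, where $V$ is the single-copy staircase read in time, $a$ is its chain leg at the $\mtrig$ end and $b$ its chain leg at the $\mcirc$ end. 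One has $\|\sum_k M_k\otimes\overline{M_k}\|^2\le\|\sum_k M_kM_k^\dagger\|\,\|\sum_k M_k^\dagger M_k\|$.
\begin{itemize}
\item Unitarity gives $\sum_b V_{ab}V_{cb}^\dagger=\delta_{ac}\mathds{1}$, hence $\sum_k M_kM_k^\dagger=d\,\mathds{1}$. This is your unitality.
\item Trace preservation, $\sum_k M_k^\dagger M_k=d\,\mathds{1}$, needs $\sum_b V_{ab}^\dagger V_{cb}=\delta_{ac}\mathds{1}$. That is unitarity of the staircase read in space, i.e.\ dual-unitarity.
\end{itemize}
So your strategy proves $\|R_x\|\le d^2$, and likewise $\|L_x\|\le d^2$, for dual-unitary gates, but not for arbitrary unitary gates.
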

\begin{proof}
Without loss of generality I prove it for $R_x$. First note that, by appropriate unfolding, the matrix can be written as 
\be
R_x = 
\fineq[-2.4ex][.65][1]{
\begin{scope}[rotate around={45:(0,0)}]
\foreach \j in {0,...,4}{
\roundgate[\j][-\j][1][bottomright][orange][2]
}
\foreach \j in {5,...,9}{
\begin{scope}[rotate around={180:(\j,-\j)}]
\roundgate[\j][-\j][1][bottomright][green1][2]
\end{scope}
}
\cstate[-.5][.5]	
\cstate[9.5][-9.5]	
\end{scope}
\draw [decorate, thick, decoration = {brace}]   (-0.15,.75)--++(5.9,0);
\node[scale=1.5] at (2.8,1.15) {$x$};
\draw [decorate, thick, decoration = {brace}]   (7-0.15,.75)--++(5.9,0);
\node[scale=1.5] at (7+2.8,1.15) {$x$};
},
\ee
where in green I depicted the hermitian conjugate of the gate, i.e.,  
\be
\fineq{\roundgate[0][0][1][topright][green1][n]} =\fineq{
\foreach \i in {0,...,-2}
{\roundgate[\i*0.3][\i*0.15][1][topright][red6]
\roundgate[\i*0.3-.15][\i*0.15-.075][1][topright][blue6]
}
\draw [decorate, decoration = {brace}]   (-1.35,0.1)--++(1*0.9,.5*0.9);
\node[scale=1.25] at (-1.135,.5) {${}_{2n}$};
}= \left(U^*\otimes U\right)^{\otimes n}.
\label{eq:conjfoldedgatepicture}
\ee
Next, using the sub-multiplicativity of the operator norm and the unitarity of the gates we have
\be
\|R_x\| \leq \|\fineq[-0.8ex][.65][1]{
\begin{scope}[rotate around={0:(0,0)}]
\roundgate[0][0][1][bottomright][orange][2]
\end{scope}
}\|^x \|\fineq[-0.8ex][.65][1]{
\begin{scope}[rotate around={180:(0,0)}]
\roundgate[0][0][1][bottomright][green1][2]
\end{scope}
}\|^x \braket{\mcirc} = d^2.
\ee
\end{proof}

\begin{lemma}
\label{lemma:maximalev}
For dual-unitary local gates $\{U^{(x)}\}$ there are at least $x+1$ eigenvectors (left and right) of $R_x$ and $L_x$ corresponding to eigenvalue $d^2$. They are explicitly given by
\begin{align}
\{\ket{r_y} &=\ket*{\underbrace{\mtrig\cdots\mtrig}_{y}\underbrace{\mcirc\cdots\mcirc}_{x-y}}, \quad y=0,\ldots,x\}  & & \text{for $R_x$},\\
\{\ket{\ell_y} &=\ket*{\underbrace{\mcirc\cdots\mcirc}_{y}\underbrace{\mcross\cdots\mcross}_{x-y}}, \quad y=0,\ldots,x\} & & \text{for $L_x$}. 
\end{align}
When the local gates are not dual unitary $\ket{r_0}$ and $\ket{\ell_0}$ continue to be right eigenvectors and $\bra{r_x}$ and $\bra{\ell_x}$ left eigenvectors. 
\end{lemma}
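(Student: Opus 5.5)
The idea is to push the boundary pairing states through the diagonal strip of folded gates defining $R_x$ (and $L_x$), one gate at a time. At each step we use either the unitarity relations \eqref{eq:unitaritynfoldeddiagram} or the dual-unitarity relations \eqref{eq:spaceunitaritynfoldeddiagram}, both of which hold for any pairing state $\ket{\sigma}$ with $n=4$. Two facts make this work: $\ket{\mtrig}$ and $\ket{\mcross}$ are pairing states in the sense of Eq.~\eqref{eq:permutation_states}, and so is $\ket{\mcirc}$. The only scalars that appear are the contractions of a pairing state with itself, $\braket{\mtrig}{\mtrig}=\braket{\mcirc}{\mcirc}=d^4$. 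Once the prefactors coming from the unfolding are fixed, these should produce exactly the eigenvalue $d^2$. This matches the norm bound of the previous Lemma, so $d^2$ is indeed the maximal eigenvalue.

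\textbf{Step 1: right eigenvectors of $R_x$.} Act with $R_x$ on $\ket{r_y}$. In the diagram of $R_x$ each of the $x$ gates has one leg attached to the input and one to the output. Neighbouring gates are connected along the diagonal, with a $\mtrig$ cap at one end and a $\mcirc$ cap at the other.

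\begin{itemize}
\item[(a)] Start at the $\mtrig$ end. The first gate has its diagonal leg capped by $\mtrig$ and its input leg carrying $\mtrig$ (for $y\ge1$). This is a pair of equal pairing states on two adjacent legs of one gate, so a dual-unitarity relation of the type \eqref{eq:spaceunitaritynfoldeddiagram} removes the gate. The $\mtrig$ is transported to the output leg and to the next diagonal leg.
\item[(b)] Iterating this removes the first $y$ gates.
\item[(c)] Start again from the other end, where the cap is $\mcirc$ and the inputs are $\mcirc$. The same argument with $\sigma=\mcirc$ removes the last $x-y$ gates.
\item[(d)] What remains in the middle is one bond where $\mtrig$ meets $\mcirc$. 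I need to check that this gives the correct prefactor. The cleaner bookkeeping is probably this: the caps propagate so that the output is exactly $\ket{r_y}$, and the final scalar is a contraction normalised as in the previous Lemma's unfolding, namely the $d^2$ that appears there.
\end{itemize}

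For $y=0$ or $y=x$ only one kind of relation is used. I expect that for $\ket{r_0}$ the gates are removed using only the ordinary unitarity \eqref{eq:unitaritynfoldeddiagram}: the $\mcirc$ cap and the $\mcirc$ inputs sit on the two legs of each gate that unitarity contracts in the rotated picture. This is why $\ket{r_0}$ remains a right eigenvector for generic unitary gates. Symmetrically, for $\bra{r_x}$ the left action uses the $\mtrig$ cap together with the $\mtrig$ outputs, again through plain unitarity.

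\textbf{Step 2: left eigenvectors and $L_x$.} For the left eigenvectors I repeat the argument with the diagram flipped, contracting $\bra{r_y}$ on the output legs. For $L_x$ the same argument applies with the rôles of $\mtrig$ and $\mcirc$ replaced by $\mcirc$ and $\mcross$, and with the reflected geometry. Finally, the $x+1$ vectors are linearly independent, because pairing states for distinct permutations are linearly independent for $d\ge2$ at fixed $n=4$, which I will argue suffices here.

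The main obstacle I expect is the careful orientation bookkeeping. For each gate in the rotated diagram I must identify which pair of legs is contracted, so as to decide whether \eqref{eq:unitaritynfoldeddiagram} or \eqref{eq:spaceunitaritynfoldeddiagram} applies. This same bookkeeping is what separates the statements valid for generic unitaries from those requiring dual-unitarity, and it also fixes the precise normalisation $d^2$. I would settle this first on the $x=1$ case and then induct on $x$.
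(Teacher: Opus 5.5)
Your strategy matches the paper's. Its proof is a single sentence: the eigenvector equations follow from \eqref{eq:unitaritynfoldeddiagram} and \eqref{eq:spaceunitaritynfoldeddiagram} with $\sigma\in\{\mcirc,\mtrig,\mcross\}$. However, the two points you defer are the whole content of that sentence, and one of them you state wrongly: the eigenvalue.

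\textbf{The eigenvalue.} The folded relations carry no scalar factor. No self-overlap such as $\braket{\mtrig}{\mtrig}=d^4$ ever appears, and no unfolding prefactor enters; the unfolding in the previous Lemma only serves the norm bound. Each gate is removed at no cost. The two fronts propagated from the two caps meet on exactly one internal bond. For $y=0$ or $y=x$, that bond is instead where the single front reaches the opposite cap. This one bond gives the entire eigenvalue: $\braket{\mtrig}{\mcirc}=d^2$ for $R_x$ and $\braket{\mcirc}{\mcross}=d^2$ for $L_x$. In each case, loop counting shows the two pairings close into exactly two loops of four layers.

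\textbf{Orientation: the two ends are not handled by the same argument.} At each end of the strip, the cap occupies one leg of the end gate and the eigenvector occupies another.
\begin{itemize}
\item At one end these two legs are both inputs or both outputs of the gate, so plain unitarity \eqref{eq:unitaritynfoldeddiagram} applies.
\item At the other end they lie on the same spatial side of the gate, so only \eqref{eq:spaceunitaritynfoldeddiagram} applies.
\item Moving the vector from the free legs on one side of the strip to the other (right versus left eigenvector) swaps which end is which.
\end{itemize}
This asymmetry is exactly what produces the non-dual-unitary clause. So if step (a) uses \eqref{eq:spaceunitaritynfoldeddiagram} at the $\mtrig$ end, step (c) at the $\mcirc$ end must use \eqref{eq:unitaritynfoldeddiagram}. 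As you wrote it, with the ``same argument'' at both ends, your later claim that $\ket{r_0}$ needs only unitarity contradicts your own steps (a)--(c).

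\textbf{Linear independence.} Your justification is false as stated: for $n=4$ the $24$ permutation states are linearly dependent whenever $d<4$. You only need $\ket{\mtrig}$ and $\ket{\mcirc}$ (respectively $\ket{\mcirc}$ and $\ket{\mcross}$) to be non-parallel, which holds for every $d\ge 2$. Choose functionals $\phi,\psi$ with $\phi(\mtrig)=\psi(\mcirc)=1$ and $\phi(\mcirc)=\psi(\mtrig)=0$. Then $\phi^{\otimes k}\otimes\psi^{\otimes(x-k)}$ picks out the coefficient of $\ket{r_k}$ in any vanishing linear combination, so every coefficient is zero.
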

\begin{proof}
The proof follows immediately from the (dual) unitarity relations in Eqs.~\eqref{eq:unitaritynfoldeddiagram} and~\eqref{eq:spaceunitaritynfoldeddiagram} specialised to $\ket{\sigma}=\ket{\mcirc}, \ket{\mtrig}, \ket{\mcross}$. 
\end{proof}

These lemmas imply that the leading eigenvalue of both $R_x$ and $L_x$ is $d^2$ for all unitary gates~\footnote{The operator norm bounds the spectral radius from above.}, and provide a characterisation of the corresponding eigenspace. Therefore, we can hope to use them to at least evaluate Eq.~\eqref{eq:rhoopTM} in the limit where one of $x_\pm$ is kept fixed and the other is large --- which means focussing on bipartitions $A\bar A$ where the boundary between $A$ and $\bar A$ is close to either the left or right edge of the causal light cone. Indeed, this limit can be characterised by making the replacement 
\be
\label{eq:leadingpartofTM}
T^x \longmapsto d^{2x} \sum_{y} \ketbra{v_y}{w_y}, 
\ee
where $\{v_y\}$ and $\{w_y\}$ are bi-orthonormal bases of the leading eigenvalue eigenspace of $T=R,L$.  

The question then is whether the eigenvectors identified in Lemma~\ref{lemma:maximalev} exhaust the leading-eigenvalue eigenspace. They do not do so in all cases: for example, when all the local gates are SWAPs the matrices $R_x$ and $L_x$ become proportional to the identity matrix that has $d^{8x}$ ``leading'' eigenvectors. {\color{blue}More generally, a similar exponential scaling (but with a smaller exponent) is expected in the case of integrable dual-unitary circuits with one-local charge densities~\cite{foligno2025nonequilibrium}.} However, for generic enough dual unitary circuits --- for example when $U$ is picked at random from the family in Eq.~\eqref{eq:parameterisation} --- numerics indicates that the family of Lemma~\ref{lemma:maximalev} is indeed exhaustive. This statement is still unproven but I suspect that a proof can be found reasoning along the lines of \cite{foligno2024entanglement}, where it is shown that for randomly picked dual unitary gates the matrix 
\be
\label{eq:TMn1}
\frac{1}{d} \fineq[-2.4ex][.65][1]{
\begin{scope}[rotate around={45:(0,0)}]
\foreach \j in {0,...,4}{
\roundgate[\j][-\j][1][bottomright][orange][1]
}
\cstate[-.5][.5]	
\cstate[4.5][-4.5]	
\end{scope}
\draw [decorate, thick, decoration = {brace}]   (-0.15,.75)--++(5.9,0);
\node[scale=1.5] at (2.8,1.15) {$x$};
},
\ee
has a unique leading eigenvalue for all $x$. 

The above observations motivate us to formulate the following definition. 
\begin{definition}
A family of dual-unitary local gates $\{U^{(x)}\}$ is \emph{completely chaotic} if the only eigenvectors of $R_x$ and $L_x$ corresponding to eigenvalue $d^2$ are those listed in Lemma~\ref{lemma:maximalev}. The corresponding circuit is a \emph{completely chaotic dual-unitary circuit}. 
\end{definition}

Using Eq.~\eqref{eq:leadingpartofTM} we then have that for completely chaotic dual-unitary circuits the operator purity fulfils 
\begin{align}
\lim_{x_-\to\infty}\tr[\rho^2_{\mathcal O, A}(t)] & = \frac{1}{d^{2 x_+}}\sum_{y=0}^{x_+} \braket*{\bcirc\underbrace{\mcirc \cdots \mcirc}_{x_+-1}}{\tilde \ell_y}\braket*{\tilde\ell_y}{\underbrace{\mtrig \cdots \mtrig}_{x_+}} \\
& = \frac{d^{2-2x_+}}{d^2-1}\,,
\label{eq:LOEleftlightcone}
\end{align}
where I introduced the orthonormalised family of vectors 
\be
\label{eq:orthonormalfamily}
\ket*{\tilde \ell_0} = \frac{1}{d^{2x}} \ket*{ \ell_0}, \qquad \ket*{\tilde \ell_{y\geq 1}} =  \frac{d^2 \ket*{ \ell_y}-\ket*{ \ell_{y-1}}}{d^{2x} \sqrt{d^2-1}}, 
\ee
used $\braket*{\bcirc{\mcirc \cdots \mcirc}}{\tilde \ell_{y}}=\delta_{y,0}$, and $\braket*{\tilde \ell_0}{\mtrig \cdots \mtrig}=1$. 

The result in Eq.~\eqref{eq:LOEleftlightcone} shows that --- up to the order one contribution $d^2/(d^2-1)$ --- the purity is minimal in this limit. In other words, the operator state $\ket*{\mathcal O_{L/2}(t)}$ almost maximally entangled if one takes a bipartition that is close to the left edge of the causal light cone (see Fig.~\ref{fig:LOElightcone}). The fact that the entanglement is not exactly maximal can be understood from the behaviour of dynamical two-point functions in dual-unitary circuits~\cite{bertini2025exactly}. Indeed, for an operator prepared in an integer site, the correlation is non-zero only if computed exactly at the left edge of the causal light cone. This means that expanding $\mathcal O_{L/2}(t)$ in a product-operator basis there can be no contribution with the identity operator at the left edge of the light cone. This constraint limits the entanglement (increasing the purity by a factor $d^2/(d^2-1)$). I also remark that if one uses the same logic --- assuming that the eigenvector of Lemma~\ref{lemma:maximalev} are exhaustive --- also for non-dual-unitary circuits, one immediately has $\tr\smash{[\rho^2_{\mathcal O, A}(t)]}=1$. This result is boring but makes sense: for non-dual-unitary circuits the spreading of correlations is slower and the regions close to the light cone edge are not entangled. To find a non-trivial result in this case one would need to place the cut deeper into the light cone.  

An analogous (but slightly more involved) calculation gives the following result for the other edge of the light cone 
\begin{align}
\lim_{x_+\to\infty}&\tr[\rho^2_{\mathcal O, A}(t)]  = \frac{1}{d^{2 x_-}}\sum_{y=0}^{x_-} \mel*{\underbrace{\mcirc \ldots \mcirc}_{x_-} }{R_{\bcirc,x_-}}{\tilde r_y}\braket*{\tilde r_y}{\underbrace{\mcross \ldots \mcross}_{x_-}} \\
& = \sum_{y=1}^{x_-} (c_k(\mathcal O)^2-c_{k-1}(\mathcal O)^2) d^{2-2k}+ c_0(\mathcal O)^2\,,
\label{eq:LOErightlightcone}
\end{align}
where I again introduced the orthonormalised family of eigenvectors (same definition as in Eq.~\eqref{eq:orthonormalfamily} with $\ell \mapsto r$) and set 
\be
\hspace{-0.25cm}c_k (\mathcal O) = \frac{1}{d^{2(x-k)}}
\fineq[-2.4ex][.65][1]{
\begin{scope}[rotate around={45:(0,0)}]
\foreach \j in {0,...,3}{
\roundgate[\j][-\j][1][bottomright][orange][1]
}	
\foreach \j in {4,...,7}{
\begin{scope}[rotate around={180:(\j,-\j)}]
\roundgate[\j][-\j][1][bottomright][green1][1]
\end{scope}
}
\cstate[-.5][.5][][black]	
\cstate[7.5][-7.5][][black]	
\end{scope}
\draw [decorate, thick, decoration = {brace}]   (-0.15,.9)--++(4.5,0);
\node[scale=1.5] at (2,1.3) {$x-k$};
\draw [decorate, thick, decoration = {brace}]   (5.7-0.15,.9)--++(4.5,0);
\node[scale=1.5] at (5.7+2,1.3) {$x-k$};
\node[scale=1.25] at (10.75,0.5) {$\mathcal O$};
\node[scale=1.25] at (-.75,0.5) {$\mathcal O$};
\foreach \i in{0,..., 7}{
\cstate[1.41421*\i][.65]	
\cstate[1.41421*\i][-.65]}
}.
\ee
When $x_-$ becomes large this result simplifies further to
\be
\label{eq:LOErightlightconeasy}
\lim_{x_+\to\infty} \tr[\rho^2_{\mathcal O, A}(t)]  \simeq 
\begin{cases}
|\lambda|^{4 x_-} & |\lambda| \geq d^{-1/2} \\
d^{-2 x_-} & |\lambda| \leq d^{-1/2}
\end{cases},
\ee
where $\lambda$ is the leading sub-leading eigenvalue of the matrix in Eq.~\eqref{eq:TMn1} for $x=1$. Interestingly, in this case the state is maximally mixed only if $|\lambda|$ is small enough. This phenomenon can again be connected to the behaviour of two-point functions, indeed, it turns out that $|\lambda|$ is the factor controlling their exponential decay~\cite{bertini2025exactly}. Therefore, Eq.~\eqref{eq:LOErightlightcone} implies that for to cuts close to the right light-cone edge $\ket*{\mathcal O_{L/2}(t)}$ is maximally mixed only when correlations decay sufficiently fast (cf.~Fig.~\ref{fig:LOElightcone}). 

\begin{figure} [t]
\centering
    \begin{tikzpicture}[baseline={([yshift=-0.6ex]current bounding box.center)},scale=1]
      \draw[thick,black,->] (0,0) -- (0,4) node at (0.5,3.75) {$t$}; 
       \draw[color=gray, fill = gray!60, opacity=0.5, thick, domain = -3.5: 3.5]    plot (\x,{abs(\x)});
     \draw[thick,black,->] (-4,0) -- (4,0)  node at (3.5,-.5) {$x$};
     \node at (0,-.5) {$\mathcal O$};
      \draw[color=gray, fill = gray] (2,3.5) -- (3.5,3.5) -- (3.4,3.4) -- (2,3.4) -- cycle; 
      \draw[color=gray, fill = gray] (-2,3.5) -- (-3.5,3.5) -- (-3.4,3.4) -- (-2,3.4) -- cycle; 
      \draw[thick,black,->] (-3.75,2.7) -- (-3.25,3.3); 
      \draw[thick,black,->] (3.75,2.7) -- (3.25,3.3); 
      \node at (-4.25,2.5) {maximally mixed};
      \node at (4,2.5) {generically not}; 
      \node at (4,2) {maximally mixed};
  \end{tikzpicture}
  \caption{Causal light cone.}
  \label{fig:LOElightcone}
\end{figure}
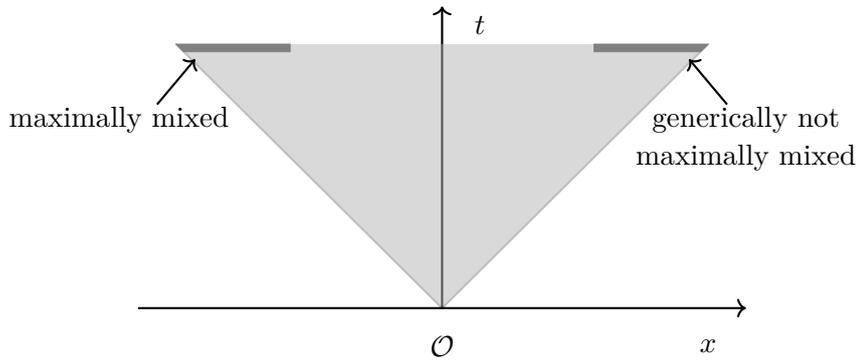

Putting all together, we that the R\'enyi-2 entropy of LOE fulfils 
\be
S^{(2)}_{\mathcal O, A}(t) \simeq 
\begin{cases}
2x_+ \log d & x_-\to\infty \\
4 x_- \log(1/|\lambda|) &x_+\to\infty\,\, \&\,\, \lambda \geq d^{-1/2}\\
2x_- \log d &x_+\to\infty\,\, \&\,\,\lambda \leq d^{-1/2}
\end{cases}.
\ee
Namely it has volume law scaling in the limits considered. In fact, \cite{bertini2020operatori} have argued that Eqs.~\eqref{eq:LOEleftlightcone} and~\eqref{eq:LOErightlightconeasy} describe the leading order scaling of $S^{(2)}_{\mathcal O, A}(t)$ also away from these limits. More precisely, one has  
\be
S^{(2)}_{\mathcal O, A}(t)  \simeq -\log[{\rm Eq.}~\eqref{eq:LOEleftlightcone}+{\rm Eq.}~\eqref{eq:LOErightlightconeasy}] = O(t). 
\ee
This volume law scaling should be compared to the result for integrable dual-unitary circuits, for which \cite{bertini2020operatorii} have shown that $S^{(2)}_{\mathcal O, A}(t)$ is bounded by a constant. 

An analogous picture is observed for all LOE R\'enyi entropies with $n\geq 2$. The only difference is that the change of behaviour in Eq.~\eqref{eq:LOErightlightconeasy} occurs for $\lambda=d^{(1-n)/n}$. This means that the LOE entanglement entropy ($n\to1$) is always maximal~\cite{bertini2020operatori}.

\section{Spectral chaos via spectral form factor}

The first investigations of ``quantum chaos'' date back to the end of the 1970s with an intense research effort continuing throughout 1980s. In those days, by studying numerous specific examples, researchers observed that, in the semiclassical limit, few particle systems that are classically chaotic show strong correlations among the eigenvalues of their evolution operator. These correlations were seen to coincide with those occurring in a random matrix with the same anti-unitary symmetries (e.g.\ time-reversal). At the same time, it was noted that the semiclassical spectrum of few particle classically integrable models showed no correlations, i.e., it was a Poisson process. These behaviours were then taken to be the defining feature of integrability and chaoticity in the quantum realm~\cite{casati1980on, bohigas1984characterisation, berry1977level} and their occurrence in the semiclassical regime was partially explained by means of periodic orbit theory~\cite{muller2004semiclassical, muller2004semiclassical}. 

With the improvement of computers, however, numerical investigations started to show that such a sharp difference in the spectral correlations is also present between integrable and non-integrable quantum many-body systems far from any semiclassical limit --- like spin-$1/2$ chains. In this case the origin of this difference remained long unexplained. As a final part of this lecture I want to show how BQCs allow us to resolve the impasse and give an analytical characterisation to this behaviour.

I will consider a simple measure of ``spectral correlations'' --- known as \emph{spectral form factor} (SFF) --- which is defined as  
\be
K(t) =  \mathbb E\left[\sum_{j,j'=1}^{\mathcal N} e^{i (\varphi_j-\varphi_{j'}) t}\right]= \mathbb E\!\left[\,|{\rm tr}\, \mathbb U^t|^2\,\right]\,,
\label{eq:SFF}
\ee
where $\mathcal N$ is the size of the evolution operator (dimension of the Hilbert space). The phases $\{\varphi_j\}$ appearing in the above equation are the eigenvalues of the time evolution operator $\mathbb U$ (sometimes called quasi-energies), while the average $\mathbb E[\cdot]$ is over an ensemble of similar systems, or a moving time average. The explicit average is needed to extract the universal part of the SFF as the latter is not self-averaging~\cite{prange1997the}. 

\begin{figure}[tb!]
\centering
  \begin{tikzpicture}[baseline={([yshift=-0.6ex]current bounding box.center)},scale=1]
      \draw[thick,black,->] (0,-0.25) -- (0,5) node at (-0.75,4.5) {$K(t)$}; 
      \draw[black, dashed] (3.5,-0.1) -- (3.5,3.5); 
       \draw[color=green1, thick, domain = 0: 10]    plot (\x,3.5);
       \draw[color=red, thick, domain = 0: 3.5]    plot (\x,{\x});
       \draw[color=red, thick, domain = 3.5: 10]    plot (\x,3.5);
        \draw[color=blue1, thick, domain = 0: 1.25, samples=100]    plot (\x,{2*\x});
        \draw[color=blue1, thick, domain = 1.25: 10, samples=100]    plot (\x,{3.5-(1.25/\x)^2});
     \draw[thick,black,->] (-.5,0) -- (10,0)  node at (9.5,-.25) {$t$};
      \draw[thick,black,->] (2.5,1.5) -- (2,2); 
      \draw[thick,black,->] (2,3.85) -- (2,3.55); 
       \draw[thick,black,->] (4.65,2.85) -- (4.65,3.4); 
      \node at (2.65,1.25) {CUE}; 
      \node at (4.65,2.65) {COE}; 
      \node at (2,4.15) {Poisson}; 
      \node at (3.5,-.25) {$\mathcal N$}; 
      \node at (-.25,3.5) {$\mathcal N$}; 
  \end{tikzpicture}
\caption{Schematic behaviour of the SFF for $\mathcal N \times \mathcal N$ random unitary matrices in two of the circular ensembles (COE/CUE) and a collection of $\mathcal N$ uncorrelated levels (Poisson).
\label{fig:SFF}}
\end{figure}
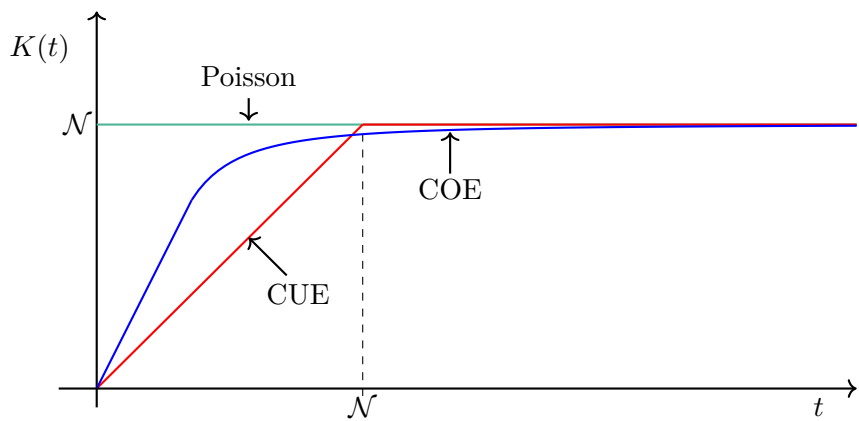

The SFF can be directly computed when the average is taken over the standard ensembles of random unitary matrices~\cite{mehta2004random}. In particular, I report three relevant examples in Fig.~\ref{fig:SFF}. The red curve corresponds to the SFF averaged over random unitary matrices distributed according to the flat Haar measure --- this ensemble is not time reversal invariant and is known as \emph{circular unitary ensemble} (CUE). For the blue curve the average is taken over symmetric unitary matrices $U^T U$, with $U$ distributed according to the Haar measure --- this is a time reversal invariant ensemble known as circular orthogonal ensemble (COE). Finally, the green line corresponds to an average over uncorrelated quasi-energies (Poisson process). The main feature signalling spectral correlations is the presence of the initial slope (``the ramp'') in the first two curves --- this is a hallmark of ``level repulsion''. Specifically, for large $\mathcal N$ we have 
\be
K(t)=
\begin{cases}
t + O(t/\mathcal N),   &{\rm CUE}    \\
2t + O(t/\mathcal N), &{\rm COE}    \\
\mathcal N,  &{\rm Poisson}
\end{cases}. 
\ee

To compare with this result, let us now compute the SFF in a BQC. Considering a $\mathbb U$ of the form in Eq.~\eqref{eq:floquetoperator} we can write Eq.~\eqref{eq:SFF} as  
\begin{align}
\!\!\!\!K(t) \!=  \mathbb E\!\!\left[\fineq[-1.8ex][.7][1]{
\foreach \i in {1,...,5}{
\draw[thick, dotted] (2*\i+2-12.5+0.255,-1.75-0.1) -- (2*\i+2-12.5+0.255,4.25-0.1);
\draw[thick, dotted] (2*\i+2-11.5-0.255,-1.75-0.1) -- (2*\i+2-11.5-0.255,4.25-0.1);}
\foreach \i in {1,...,5}{
\draw[thick] (2*\i+2-11.5,4) arc (-45:175:0.15);
\draw[thick] (2*\i+2-11.5,-2) arc (315:180:0.15);
\draw[thick] (2*\i+2-0.5-12,-2) arc (-135:0:0.15);
}
\foreach \i in {2,...,6}{
\draw[thick] (2*\i+2-2.5-12,4) arc (225:0:0.15);
}
\foreach \j in {0,1,2}{
\foreach \i in{0.5}
{
\draw[ thick] (0.5,1+2*\i-0.5-3.5+2*\j) arc (-45:90:0.17);
\draw[ thick] (-10+0.5,1+2*\i-0.5-3.5+2*\j) arc (270:90:0.15);
}
\foreach \i in{1.5}{
\draw[thick] (0.5,2*\i-0.5-3.5+2*\j) arc (45:-90:0.17);
\draw[thick] (-10+0.5+0,2*\i-0.5-3.5+2*\j) arc (90:270:0.15);
}
\roundgate[0][-.5+2*\j][1][bottomright][yellow1][1]
\roundgate[-2][-.5+2*\j][1][bottomright][yellow2][1]
\roundgate[-4][-.5+2*\j][1][bottomright][yellow3][1]
\roundgate[-6][-.5+2*\j][1][bottomright][yellow4][1]
\roundgate[-8][-.5+2*\j][1][bottomright][yellow5][1]
\roundgate[-1][-1.5+2*\j][1][bottomright][yellow6][1]
\roundgate[-3][-1.5+2*\j][1][bottomright][yellow7][1]
\roundgate[-5][-1.5+2*\j][1][bottomright][yellow8][1]
\roundgate[-7][-1.5+2*\j][1][bottomright][yellow9][1]
\roundgate[-9][-1.5+2*\j][1][bottomright][yellow10][1]
}
\draw [decorate, thick, decoration = {brace}]   (-8.5,4.35)--++(9,0);
\node[scale=1.5] at (-4,5) {$2L$};
\draw [decorate, thick, decoration = {brace}]   (.75,4.25)--++(0,-6.25);
\node[scale=1.5] at (1.25,1.2) {$2t$};
}\right]\!\!\!, \label{eq:SFFdiag}
\end{align}
where I am again using the folded representation discussed in the previous lecture (Sec.~\ref{sec:folding}) and I conveniently defined the average $\mathbb E[\cdot]$ to be over \emph{random} local gates $\{U^{(x)}\}$ (not necessarily distributed according to the Haar measure) with two important properties
\begin{itemize}
\item[a.] Gates at different spatial sites are independent; 
\item[b.] Gates are correlated in time, i.e., each vertical column has \emph{the same matrix} (same shade of colour in the above diagram);   
\end{itemize}
The second property ensures that this random BQC has a well defined time-evolution operator. The first, instead, implies that the average factorises in space and we can write the SFF as 
\be
K(t) = \tr[\mathbb T^L], 
\ee
where I introduced the left-to-right operating space transfer matrix 
\be
\label{eq:TM}
\mathbb T = \mathbb E\!\!\left[\fineq[-0.6ex][.7][1]{
\foreach \i in {5}{
\draw[thick, dotted] (2*\i+2-12.5+0.255,-1.75-0.1) -- (2*\i+2-12.5+0.255,4.25-0.1);
\draw[thick, dotted] (2*\i+2-11.5-0.255,-1.75-0.1) -- (2*\i+2-11.5-0.255,4.25-0.1);}
\foreach \i in {5}{
\draw[thick] (2*\i+2-11.5,4) arc (-45:175:0.15);
\draw[thick] (2*\i+2-11.5,-2) arc (315:180:0.15);
\draw[thick] (2*\i+2-0.5-12,-2) arc (-135:0:0.15);
}
\foreach \i in {6}{
\draw[thick] (2*\i+2-2.5-12,4) arc (225:0:0.15);
}
\foreach \j in {0,1,2}{
\foreach \i in {0}{
\roundgate[-2*\i][-.5+2*\j][1][bottomright][yellow10][1]
}
\foreach \i in {1}
{
\roundgate[-2*\i+1][-1.5+2*\j][1][bottomright][yellow6][1]
}}
}\right]. 
\ee
The problem of computing the SFF is then mapped into that of characterising the spectrum of $\mathbb T$. This was done for different choices of random gates in the large $d$ limit (see, e.g.,~\cite{chan2018solution, chan2018spectral}) and exactly for dual unitary circuits~\cite{bertini2018exact, bertini2021random}. In the following I will briefly discuss the latter case.

\subsection{Spectral form factor in dual-unitary circuits}

In the case of dual unitary circuits one considers random local gates of the form  
\be
\fineq[-0.6ex][.85][1]{
 \tsfmatV[0][-0.5][r][1][][][red6]
  \circgate[-.35][.85][blue6][left]
  \circgate[-.35][0.15][blue7][right]	} = (u^{(x)} \otimes \mathds{1})\, W (w^{(x)} \otimes \mathds{1}), 
\label{eq:SFFDUgates}
\ee
where $W$ is an arbitrary dual-unitary gate parameterised as in Eq.~\eqref{eq:parameterisation}, while $u^{(x)}$ and $w^{(x)}$ are random and independently distributed according to some smooth, but arbitrary distribution in $U(q)$. The dual-unitary gate $W$ can be taken to be position dependent but, for simplicity, in the following I will consider the spatially homogeneous case. 

For this choice \cite{bertini2021random} proved the following Theorem. 
\begin{theorem}
For any (smooth) distribution of $u_x,w_x$ and any dual-unitary gate $W$ in Eq.~\eqref{eq:parameterisation} with $J\neq 0$ 
\be
\lim_{L\to\infty} K(t) = t\,.
\ee
\end{theorem}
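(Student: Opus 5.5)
Since $K(t)=\tr[\mathbb T^L]$, the plan is to show three things. First, $\mathbb T$ has exactly $t$ eigenvalues of modulus one, all equal to $1$. Second, every other eigenvalue has modulus bounded by some $\lambda_*<1$. Third, the unimodular part is diagonalisable. Then $\lim_{L\to\infty}\tr[\mathbb T^L]=t$ follows directly, since the subleading part contributes at most $\mathcal N_{\mathbb T}\lambda_*^L\to0$ at fixed $t$.

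The first ingredient is to rewrite $\mathbb T$ as an operator on the $2t$ time-like legs of the folded forward/backward copies. Reading the diagram in Eq.~\eqref{eq:TM} from left to right, dual-unitarity makes each folded gate, viewed as $\tilde U\otimes\tilde U^*$, a unitary operator in the space direction. Hence, before averaging, $\mathbb T$ is a product of unitaries. Averaging the single-site random unitaries $u,w$ produces contractions, so $\mathbb T$ becomes a product of unitaries and averaging maps. From this I would derive $\|\mathbb T\|\le1$ by the same sub-multiplicativity argument as in the lemma bounding $\|T_x\|\le d^2$. Because of this norm bound, every eigenvalue of modulus one is semisimple, i.e.\ $\mathbb T$ is diagonalisable on the unimodular eigenspace, and the left and right eigenvectors coincide.

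Next I would construct $t$ explicit eigenvectors with eigenvalue $1$, in the spirit of Lemma~\ref{lemma:maximalev}. These are the states pairing the forward time-circle with the backward one shifted by $\tau\in\{0,\ldots,t-1\}$, i.e.\ the $t$ cyclic-shift pairing states $\ket{\sigma_\tau}$ on the $2t$-site periodic time lattice. They are preserved because the averaged single-site unitaries act as $u\otimes u^*$ on paired legs and cancel, while $W$ composed with its space-time swap maps the shift pairings into themselves, using the brickwork structure and the unitarity/dual-unitarity relations in Eqs.~\eqref{eq:unitaritynfoldeddiagram} and~\eqref{eq:spaceunitaritynfoldeddiagram}. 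I would then check that these $t$ states are linearly independent, which holds for $d\ge2$ and $t$ below the Hilbert-space scale. This yields $\lim K\ge t$.

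The main obstacle is the converse: showing that there are no other unimodular eigenvectors. Since $\|\mathbb T\|\le1$, an eigenvector $v$ with $|\lambda|=1$ must saturate every contraction step in the product. Saturation of the averages over $u$ and $w$ forces $v$ to be invariant under $u\otimes u^*$ on each leg pair, for all $u$ in the support of a smooth distribution. Such invariance reduces $v$, locally, to the span of pairing states, i.e.\ to the two-dimensional span of identity and swap on each forward–backward pair of legs. The remaining task is to show that invariance under the interaction $e^{iJS_3\otimes S_3}$ with $J\ne0$, combined with the SWAP structure, forces these local pairings to be globally consistent. That leaves only the cyclic shifts, and it rules out "domain-wall" configurations in which the pairing changes along the time circle.

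To attack this, I would expand $v$ in products of local pairings and exploit the fact that a domain wall between different shifts passes through a gate where the pairing is not preserved. There, $J\ne0$ produces a strict norm loss, with a factor like $|\langle e^{iJ\cdot}\rangle|<1$. This is where smoothness of the distribution and $J\neq0$ enter. A compactness argument over the finitely many configurations at fixed $t$ then gives the uniform gap $\lambda_*<1$.
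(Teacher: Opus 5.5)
Your outline follows the paper's skeleton. You have the contraction bound from the unitary $\tilde{\mathbb W}$ and the averages $\mathbb O$, and the $t$ explicit eigenvectors. These must be shifts by whole time steps, i.e.\ the vectorised $\Pi^{2n}$, $n=0,\ldots,t-1$. They are fixed because $\Pi^2$ commutes with $\tilde{\mathbb W}$ and with $(v\otimes w)^{\otimes t}$ (each column carries the same gate at all times). The local relations in Eqs.~\eqref{eq:unitaritynfoldeddiagram} and~\eqref{eq:spaceunitaritynfoldeddiagram} do not give this, since they only cover the unshifted pairing. You also have the idea that a unimodular eigenvector must saturate every contraction. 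But the converse step, which is the real content of the theorem, fails as you set it up. Because $u^{(x)}$ and $w^{(x)}$ are the same at all times within a column, saturating $\mathbb O$ only forces invariance under the \emph{collective} action $(v\otimes w)^{\otimes t}\otimes_r(v^*\otimes w^*)^{\otimes t}$. In operator language this is $[A,M_{a,\iota}]=0$. It does not give invariance under independent $u\otimes u^*$ on each forward--backward leg pair: that would leave only the unshifted pairing and contradict the eigenvectors you just built. Also, with one forward and one backward copy a leg pair carries no ``identity and swap''; that is the two-replica structure of the purity calculations. The fixed space of the collective action is spanned by products of permutation operators of the time sites of each sublattice (Schur--Weyl). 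For $d=2$ and $t\ge3$ these are linearly dependent, so expanding in pairing configurations and counting domain walls is not well defined at finite $d$; it is a large-$d$ heuristic.

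The role of $J\neq0$ is also misplaced. It does not act through a strict norm loss at the gate, because $\tilde{\mathbb W}$ is unitary and the only lossy maps are the averages. Its role is algebraic. $A$ must be fixed by the average both before and after conjugation by the (shifted) $\tilde{\mathbb W}$, so it must also commute with the conjugated magnetisations. For $J\neq0$ these, together with the $M_{a,\iota}$, yield the nearest-neighbour sums $M_{ab,\iota}$. This is how the paper recasts the unimodular eigenproblem as $[A,M_{a,\iota}]=[A,M_{ab,\iota}]=0$, and in doing so shows that only $\alpha=0$ occurs. You assert this but never derive it. Likewise, your intermediate claim $\lim K\ge t$ presupposes that no other unimodular eigenvalues exist. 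The decisive missing lemma is that $\{M_{a,\iota},M_{ab,\iota}\}$ generate the full algebra of $\Pi^2$-invariant operators on the $2t$-site temporal lattice. By the double-commutant theorem, the unimodular eigenspace is then exactly $\mathrm{span}\{\Pi^{2n}\}_{n=0}^{t-1}$. Once this is established, the gap $\lambda_*<1$ is automatic, since $\mathbb T$ is a finite matrix at fixed $t$, and no compactness argument is needed. Without it, none of your domain-wall estimates can be made rigorous.
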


This theorem guarantees that, in the thermodynamic limit, the dual-unitary BQC recovers the result obtained for non-T-symmetric random unitary matrices (CUE) of infinite size. The thermodynamic limit is considered to aid the analytical derivation (see below) while the fact that we recover the CUE result makes sense since the circuit defined by the gates in Eq.~\eqref{eq:SFFDUgates} is not T-symmetric. In fact, \cite{bertini2018exact} proved an analogous theorem for a particular example of T-symmetric dual-unitary circuit (the self-dual kicked Ising model). In that case $t \to 2t$ in agreement with the COE result. One can also constrain the setting described above to generate a $T$-symmetric circuit (cf.~\cite{bertini2021random}) but for this setting the analogue theorem is still unproven (see Remark 1.\ below).  

The proof of the above theorem is rather convoluted and I will not discuss it in detail here. However, I will conclude this lecture giving an idea of the proof strategy focussing on the case $d=2$. In essence, one should proceed in three main steps. \\

\noindent {\bf Step 1.} Bound the spectral radius~\footnote{The maximal magnitude of the eigenvalues.} of $\mathbb T$. This is done by writing the matrix in Eq.~\eqref{eq:TM} explicitly as 
\be
\mathbb T = (\tilde{\mathbb W} \otimes_r \tilde{\mathbb W}^*) \mathbb O^\dag (\Pi \tilde{\mathbb W} \Pi^\dag \otimes_r  \Pi^* \tilde{\mathbb W}^* \Pi^T)  \mathbb O^{\phantom{\dag}},
\label{eq:TMexplicit}
\ee
where $\Pi$ is the one-site shift operator of the $2t$-site temporal lattice, $\otimes_r$ is the tensor product between forward and backward time sheet and I set 
\be
\tilde{\mathbb{W}} = \tilde{W}^{\otimes t}, \qquad \mathbb O = \mathbb E\left[ (v \otimes w)^{\otimes t}\otimes_r (v^* \otimes w^*)^{\otimes t} \right]. 
\ee
Writing $v$ and $w$ in terms of the $SU(2)$ generators, $\mathbb O$ can be expressed explicitly as 
\be
\mathbb O \!=\!\int\!\!{\rm d}^{3}\boldsymbol{\theta}{\rm d}^{3}\boldsymbol{\varphi}\,f(\boldsymbol{\theta})g(\boldsymbol{\varphi})
e^{i\boldsymbol\theta\cdot (\boldsymbol{M}_{0}\otimes \mathds{1}-\mathds{1}\otimes \boldsymbol{M}^*_{0})} e^{i\boldsymbol\varphi\cdot (\boldsymbol{M}_{1}\otimes \mathds{1}-\mathds{1}\otimes \boldsymbol{M}^*_{1})}\!,
\ee
where I introduced the temporal sub-lattice magnetisations 
\begin{align}
[\boldsymbol M_{\iota}]_a \equiv  M_{a, \iota}=\sum_{\tau=0}^{t-1}\!\sigma^a_{\tau+\frac{\iota}{2}},
\end{align}
with $\{\sigma^a\}$ Pauli matrices, $\iota=0$ denoting to the integer sublattice, and $\iota=1$ the half-odd-integer sublattice. From the form in Eq.~\eqref{eq:TMexplicit} we can then conclude the spectral radius of $\mathbb T$ is bounded by one. \\

\noindent {\bf Step 2.} Reformulate the problem. Show that maximal-eigenvalue problem for $\mathbb T$, i.e., 
\be
\mathbb T\ket{A} =e^{i \alpha} \ket{A}, 
\label{eq:eigprob}
\ee
can be recast into a set of commutation identities. Namely, show that $\ket{A}\in \mathbb C^{d^{4t}}$ solving Eq.~\eqref{eq:eigprob} is in 1-1 correspondence with $A\in {\rm End}(\mathbb C^{d^{2t}})$ solving 
\be
[A, M_{a,\iota}]=0, \quad [A, M_{ab,\iota}]=0\,, \quad a,b\in\{x,y,z\}, \; \iota\in\{0, 1\}\,,
\ee
where I introduced  the temporal sub-lattice double magnetisations
\be
M_{ab,\iota}=\sum_{\tau=0}^{t-1}\!\sigma^a_{\tau+\frac{\iota}{2}}\sigma^b_{\tau+\frac{\iota+1}{2}}\,. 
\ee
In doing so one also shows that Eq.~\eqref{eq:eigprob} has solutions only for $\alpha=0$. \\

\noindent {\bf Third.} Show that $\{M_{a,\iota}, M_{ab,\iota}\}$ generate the full algebra of $2$-site shit invariant operators on the temporal lattice. Therefore, the only operators commuting with them are the powers of $\Pi^2$, i.e.   
\be
\Pi^{2n}, \qquad n=0,\ldots,t-1. 
\ee
This shows that we indeed have $t$ eigenvalues. \\ 

\noindent Let me conclude with two final remarks:
\begin{itemize}
\item[1.] One can repeat the first two steps for a T-symmetric circuit, showing that $M_{a,\iota}$ and $M_{ab,\iota}$ are replaced by $M_{a,\iota}+RM_{a,\iota}R$ and $M_{ab,\iota}+RM_{ab,\iota}R$, where $R$ is the operator implementing a reflection of the temporal lattice about its centre. The third step, however, is still open. 
\item[2.] A similar approach can be used to compute the higher moments of the SFF, i.e.
\be
K_n(t) = \mathbb E\!\left[\,|{\rm tr}\, \mathbb U^t|^{2n}\,\right].
\ee
which also show a universal behaviour in agreement with random matrix theory. In this case one needs to find the commutant of $\mathcal K^{\otimes n}$ (where $\mathcal K$ algebra generated by $\{M_{a,\iota}, M_{ab,\iota}\}$). This problem is also still open. 
\end{itemize}

\bibliographystyle{apacite}
\bibliography{./bibliography}

\begin{thebibliography}{}

\bibitem [\protect \citeauthoryear {%
Alba%
\ \BBA {} Calabrese%
}{%
Alba%
\ \BBA {} Calabrese%
}{%
{\protect \APACyear {2017}}%
}]{%
alba2017entanglement}
\APACinsertmetastar {%
alba2017entanglement}%
\begin{APACrefauthors}%
Alba, V.%
\BCBT {}\ \BBA {} Calabrese, P.%
\end{APACrefauthors}%
\unskip\
\newblock
\APACrefYearMonthDay{2017}{}{}.
\newblock
{\BBOQ}\APACrefatitle {Entanglement and thermodynamics after a quantum quench
  in integrable systems} {Entanglement and thermodynamics after a quantum
  quench in integrable systems}.{\BBCQ}
\newblock
\APACjournalVolNumPages{PNAS}{114}{30}{7947--7951}.
\newblock
\begin{APACrefDOI} \doi{10.1073/pnas.1703516114} \end{APACrefDOI}
\PrintBackRefs{\CurrentBib}

\bibitem [\protect \citeauthoryear {%
Anderson%
}{%
Anderson%
}{%
{\protect \APACyear {1972}}%
}]{%
anderson1972more}
\APACinsertmetastar {%
anderson1972more}%
\begin{APACrefauthors}%
Anderson, P\BPBI W.%
\end{APACrefauthors}%
\unskip\
\newblock
\APACrefYearMonthDay{1972}{}{}.
\newblock
{\BBOQ}\APACrefatitle {More Is Different} {More is different}.{\BBCQ}
\newblock
\APACjournalVolNumPages{Science}{177}{4047}{393-396}.
\newblock
\begin{APACrefURL}
  \url{https://www.science.org/doi/abs/10.1126/science.177.4047.393}
  \end{APACrefURL}
\newblock
\begin{APACrefDOI} \doi{10.1126/science.177.4047.393} \end{APACrefDOI}
\PrintBackRefs{\CurrentBib}

\bibitem [\protect \citeauthoryear {%
Berry%
\ \BBA {} Tabor%
}{%
Berry%
\ \BBA {} Tabor%
}{%
{\protect \APACyear {1977}}%
}]{%
berry1977level}
\APACinsertmetastar {%
berry1977level}%
\begin{APACrefauthors}%
Berry, M\BPBI V.%
\BCBT {}\ \BBA {} Tabor, M.%
\end{APACrefauthors}%
\unskip\
\newblock
\APACrefYearMonthDay{1977}{}{}.
\newblock
{\BBOQ}\APACrefatitle {Level clustering in the regular spectrum} {Level
  clustering in the regular spectrum}.{\BBCQ}
\newblock
\APACjournalVolNumPages{Proc R Soc Lond A Math Phys Sci}{356}{1686}{375--394}.
\newblock
\begin{APACrefDOI} \doi{10.1098/rspa.1977.0140} \end{APACrefDOI}
\PrintBackRefs{\CurrentBib}

\bibitem [\protect \citeauthoryear {%
Bertini%
, Claeys%
\BCBL {}\ \BBA {} Prosen%
}{%
Bertini%
\ \protect \BOthers {.}}{%
{\protect \APACyear {2025}}%
}]{%
bertini2025exactly}
\APACinsertmetastar {%
bertini2025exactly}%
\begin{APACrefauthors}%
Bertini, B.%
, Claeys, P\BPBI W.%
\BCBL {}\ \BBA {} Prosen, T.%
\end{APACrefauthors}%
\unskip\
\newblock
\APACrefYearMonthDay{2025}{}{}.
\newblock
\APACrefbtitle {Exactly solvable many-body dynamics from space-time duality.}
  {Exactly solvable many-body dynamics from space-time duality.}
\newblock
\begin{APACrefURL} \url{https://arxiv.org/abs/2505.11489} \end{APACrefURL}
\PrintBackRefs{\CurrentBib}

\bibitem [\protect \citeauthoryear {%
Bertini%
, Klobas%
, Alba%
, Lagnese%
\BCBL {}\ \BBA {} Calabrese%
}{%
Bertini%
\ \protect \BOthers {.}}{%
{\protect \APACyear {2022}}%
}]{%
bertini2022growth}
\APACinsertmetastar {%
bertini2022growth}%
\begin{APACrefauthors}%
Bertini, B.%
, Klobas, K.%
, Alba, V.%
, Lagnese, G.%
\BCBL {}\ \BBA {} Calabrese, P.%
\end{APACrefauthors}%
\unskip\
\newblock
\APACrefYearMonthDay{2022}{Jul}{}.
\newblock
{\BBOQ}\APACrefatitle {Growth of {R\'enyi} Entropies in Interacting Integrable
  Models and the Breakdown of the Quasiparticle Picture} {Growth of {R\'enyi}
  entropies in interacting integrable models and the breakdown of the
  quasiparticle picture}.{\BBCQ}
\newblock
\APACjournalVolNumPages{Phys. Rev. X}{12}{}{031016}.
\newblock
\begin{APACrefURL} \url{https://link.aps.org/doi/10.1103/PhysRevX.12.031016}
  \end{APACrefURL}
\newblock
\begin{APACrefDOI} \doi{10.1103/PhysRevX.12.031016} \end{APACrefDOI}
\PrintBackRefs{\CurrentBib}

\bibitem [\protect \citeauthoryear {%
Bertini%
, Kos%
\BCBL {}\ \BBA {} Prosen%
}{%
Bertini%
\ \protect \BOthers {.}}{%
{\protect \APACyear {2018}}%
}]{%
bertini2018exact}
\APACinsertmetastar {%
bertini2018exact}%
\begin{APACrefauthors}%
Bertini, B.%
, Kos, P.%
\BCBL {}\ \BBA {} Prosen, T.%
\end{APACrefauthors}%
\unskip\
\newblock
\APACrefYearMonthDay{2018}{}{}.
\newblock
{\BBOQ}\APACrefatitle {Exact {Spectral} {Form} {Factor} in a {Minimal} {Model}
  of {Many}-{Body} {Quantum} {Chaos}} {Exact {Spectral} {Form} {Factor} in a
  {Minimal} {Model} of {Many}-{Body} {Quantum} {Chaos}}.{\BBCQ}
\newblock
\APACjournalVolNumPages{Phys. Rev. Lett.}{121}{26}{264101}.
\newblock
\begin{APACrefDOI} \doi{10.1103/PhysRevLett.121.264101} \end{APACrefDOI}
\PrintBackRefs{\CurrentBib}

\bibitem [\protect \citeauthoryear {%
Bertini%
, Kos%
\BCBL {}\ \BBA {} Prosen%
}{%
Bertini%
\ \protect \BOthers {.}}{%
{\protect \APACyear {2019}}%
{\protect \APACexlab {{\protect \BCnt {1}}}}}]{%
bertini2019entanglement}
\APACinsertmetastar {%
bertini2019entanglement}%
\begin{APACrefauthors}%
Bertini, B.%
, Kos, P.%
\BCBL {}\ \BBA {} Prosen, T.%
\end{APACrefauthors}%
\unskip\
\newblock
\APACrefYearMonthDay{2019{\protect \BCnt {1}}}{May}{}.
\newblock
{\BBOQ}\APACrefatitle {Entanglement Spreading in a Minimal Model of Maximal
  Many-Body Quantum Chaos} {Entanglement spreading in a minimal model of
  maximal many-body quantum chaos}.{\BBCQ}
\newblock
\APACjournalVolNumPages{Phys. Rev. X}{9}{2}{021033}.
\newblock
\begin{APACrefDOI} \doi{10.1103/PhysRevX.9.021033} \end{APACrefDOI}
\PrintBackRefs{\CurrentBib}

\bibitem [\protect \citeauthoryear {%
Bertini%
, Kos%
\BCBL {}\ \BBA {} Prosen%
}{%
Bertini%
\ \protect \BOthers {.}}{%
{\protect \APACyear {2019}}%
{\protect \APACexlab {{\protect \BCnt {2}}}}}]{%
bertini2019exact}
\APACinsertmetastar {%
bertini2019exact}%
\begin{APACrefauthors}%
Bertini, B.%
, Kos, P.%
\BCBL {}\ \BBA {} Prosen, T.%
\end{APACrefauthors}%
\unskip\
\newblock
\APACrefYearMonthDay{2019{\protect \BCnt {2}}}{Nov}{}.
\newblock
{\BBOQ}\APACrefatitle {Exact Correlation Functions for Dual-Unitary Lattice
  Models in $1+1$ Dimensions} {Exact correlation functions for dual-unitary
  lattice models in $1+1$ dimensions}.{\BBCQ}
\newblock
\APACjournalVolNumPages{Phys. Rev. Lett.}{123}{}{210601}.
\newblock
\begin{APACrefDOI} \doi{10.1103/PhysRevLett.123.210601} \end{APACrefDOI}
\PrintBackRefs{\CurrentBib}

\bibitem [\protect \citeauthoryear {%
Bertini%
, Kos%
\BCBL {}\ \BBA {} Prosen%
}{%
Bertini%
\ \protect \BOthers {.}}{%
{\protect \APACyear {2020}}%
{\protect \APACexlab {{\protect \BCnt {1}}}}}]{%
bertini2020operatori}
\APACinsertmetastar {%
bertini2020operatori}%
\begin{APACrefauthors}%
Bertini, B.%
, Kos, P.%
\BCBL {}\ \BBA {} Prosen, T.%
\end{APACrefauthors}%
\unskip\
\newblock
\APACrefYearMonthDay{2020{\protect \BCnt {1}}}{}{}.
\newblock
{\BBOQ}\APACrefatitle {Operator {Entanglement} in {Local} {Quantum} {Circuits}
  {I}: {Chaotic} {Dual}-{Unitary} {Circuits}} {Operator {Entanglement} in
  {Local} {Quantum} {Circuits} {I}: {Chaotic} {Dual}-{Unitary}
  {Circuits}}.{\BBCQ}
\newblock
\APACjournalVolNumPages{SciPost Phys.}{8}{4}{067}.
\newblock
\begin{APACrefDOI} \doi{10.21468/SciPostPhys.8.4.067} \end{APACrefDOI}
\PrintBackRefs{\CurrentBib}

\bibitem [\protect \citeauthoryear {%
Bertini%
, Kos%
\BCBL {}\ \BBA {} Prosen%
}{%
Bertini%
\ \protect \BOthers {.}}{%
{\protect \APACyear {2020}}%
{\protect \APACexlab {{\protect \BCnt {2}}}}}]{%
bertini2020operatorii}
\APACinsertmetastar {%
bertini2020operatorii}%
\begin{APACrefauthors}%
Bertini, B.%
, Kos, P.%
\BCBL {}\ \BBA {} Prosen, T.%
\end{APACrefauthors}%
\unskip\
\newblock
\APACrefYearMonthDay{2020{\protect \BCnt {2}}}{}{}.
\newblock
{\BBOQ}\APACrefatitle {Operator {Entanglement} in {Local} {Quantum} {Circuits}
  {II}: {Solitons} in {Chains} of {Qubits}} {Operator {Entanglement} in {Local}
  {Quantum} {Circuits} {II}: {Solitons} in {Chains} of {Qubits}}.{\BBCQ}
\newblock
\APACjournalVolNumPages{SciPost Phys.}{8}{4}{068}.
\newblock
\begin{APACrefDOI} \doi{10.21468/SciPostPhys.8.4.068} \end{APACrefDOI}
\PrintBackRefs{\CurrentBib}

\bibitem [\protect \citeauthoryear {%
Bertini%
, Kos%
\BCBL {}\ \BBA {} Prosen%
}{%
Bertini%
\ \protect \BOthers {.}}{%
{\protect \APACyear {2021}}%
}]{%
bertini2021random}
\APACinsertmetastar {%
bertini2021random}%
\begin{APACrefauthors}%
Bertini, B.%
, Kos, P.%
\BCBL {}\ \BBA {} Prosen, T.%
\end{APACrefauthors}%
\unskip\
\newblock
\APACrefYearMonthDay{2021}{}{}.
\newblock
{\BBOQ}\APACrefatitle {Random matrix spectral form factor of dual-unitary
  quantum circuits} {Random matrix spectral form factor of dual-unitary quantum
  circuits}.{\BBCQ}
\newblock
\APACjournalVolNumPages{Comm. Math. Phys.}{387}{1}{597--620}.
\newblock
\begin{APACrefDOI} \doi{10.1007/s00220-021-04139-2} \end{APACrefDOI}
\PrintBackRefs{\CurrentBib}

\bibitem [\protect \citeauthoryear {%
Bohigas%
, Giannoni%
\BCBL {}\ \BBA {} Schmit%
}{%
Bohigas%
\ \protect \BOthers {.}}{%
{\protect \APACyear {1984}}%
}]{%
bohigas1984characterisation}
\APACinsertmetastar {%
bohigas1984characterisation}%
\begin{APACrefauthors}%
Bohigas, O.%
, Giannoni, M\BPBI J.%
\BCBL {}\ \BBA {} Schmit, C.%
\end{APACrefauthors}%
\unskip\
\newblock
\APACrefYearMonthDay{1984}{{\APACmonth{01}}}{}.
\newblock
{\BBOQ}\APACrefatitle {{Characterization of Chaotic Quantum Spectra and
  Universality of Level Fluctuation Laws}} {{Characterization of Chaotic
  Quantum Spectra and Universality of Level Fluctuation Laws}}.{\BBCQ}
\newblock
\APACjournalVolNumPages{Phys. Rev. Lett.}{52}{}{1--4}.
\newblock
\begin{APACrefURL} \url{https://link.aps.org/doi/10.1103/PhysRevLett.52.1}
  \end{APACrefURL}
\newblock
\begin{APACrefDOI} \doi{10.1103/PhysRevLett.52.1} \end{APACrefDOI}
\PrintBackRefs{\CurrentBib}

\bibitem [\protect \citeauthoryear {%
Bravyi%
, Hastings%
\BCBL {}\ \BBA {} Verstraete%
}{%
Bravyi%
\ \protect \BOthers {.}}{%
{\protect \APACyear {2006}}%
}]{%
bravyi2006lieb}
\APACinsertmetastar {%
bravyi2006lieb}%
\begin{APACrefauthors}%
Bravyi, S.%
, Hastings, M\BPBI B.%
\BCBL {}\ \BBA {} Verstraete, F.%
\end{APACrefauthors}%
\unskip\
\newblock
\APACrefYearMonthDay{2006}{Jul}{}.
\newblock
{\BBOQ}\APACrefatitle {Lieb-Robinson Bounds and the Generation of Correlations
  and Topological Quantum Order} {Lieb-robinson bounds and the generation of
  correlations and topological quantum order}.{\BBCQ}
\newblock
\APACjournalVolNumPages{Phys. Rev. Lett.}{97}{}{050401}.
\newblock
\begin{APACrefURL} \url{https://link.aps.org/doi/10.1103/PhysRevLett.97.050401}
  \end{APACrefURL}
\newblock
\begin{APACrefDOI} \doi{10.1103/PhysRevLett.97.050401} \end{APACrefDOI}
\PrintBackRefs{\CurrentBib}

\bibitem [\protect \citeauthoryear {%
Calabrese%
}{%
Calabrese%
}{%
{\protect \APACyear {2020}}%
}]{%
calabrese2020entanglement}
\APACinsertmetastar {%
calabrese2020entanglement}%
\begin{APACrefauthors}%
Calabrese, P.%
\end{APACrefauthors}%
\unskip\
\newblock
\APACrefYearMonthDay{2020}{}{}.
\newblock
{\BBOQ}\APACrefatitle {{Entanglement spreading in non-equilibrium integrable
  systems}} {{Entanglement spreading in non-equilibrium integrable
  systems}}.{\BBCQ}
\newblock
\APACjournalVolNumPages{SciPost Phys. Lect. Notes}{}{}{20}.
\newblock
\begin{APACrefURL} \url{https://scipost.org/10.21468/SciPostPhysLectNotes.20}
  \end{APACrefURL}
\newblock
\begin{APACrefDOI} \doi{10.21468/SciPostPhysLectNotes.20} \end{APACrefDOI}
\PrintBackRefs{\CurrentBib}

\bibitem [\protect \citeauthoryear {%
Calabrese%
\ \BBA {} Cardy%
}{%
Calabrese%
\ \BBA {} Cardy%
}{%
{\protect \APACyear {2005}}%
}]{%
calabrese2005evolution}
\APACinsertmetastar {%
calabrese2005evolution}%
\begin{APACrefauthors}%
Calabrese, P.%
\BCBT {}\ \BBA {} Cardy, J.%
\end{APACrefauthors}%
\unskip\
\newblock
\APACrefYearMonthDay{2005}{apr}{}.
\newblock
{\BBOQ}\APACrefatitle {Evolution of entanglement entropy in one-dimensional
  systems} {Evolution of entanglement entropy in one-dimensional
  systems}.{\BBCQ}
\newblock
\APACjournalVolNumPages{J. Stat. Mech.}{2005}{04}{P04010}.
\newblock
\begin{APACrefDOI} \doi{10.1088/1742-5468/2005/04/p04010} \end{APACrefDOI}
\PrintBackRefs{\CurrentBib}

\bibitem [\protect \citeauthoryear {%
Calabrese%
\ \BBA {} Cardy%
}{%
Calabrese%
\ \BBA {} Cardy%
}{%
{\protect \APACyear {2006}}%
}]{%
calabrese2006time}
\APACinsertmetastar {%
calabrese2006time}%
\begin{APACrefauthors}%
Calabrese, P.%
\BCBT {}\ \BBA {} Cardy, J.%
\end{APACrefauthors}%
\unskip\
\newblock
\APACrefYearMonthDay{2006}{}{}.
\newblock
{\BBOQ}\APACrefatitle {Time Dependence of Correlation Functions Following a
  Quantum Quench} {Time dependence of correlation functions following a quantum
  quench}.{\BBCQ}
\newblock
\APACjournalVolNumPages{Phys. Rev. Lett.}{96}{}{136801}.
\newblock
\begin{APACrefDOI} \doi{10.1103/PhysRevLett.96.136801} \end{APACrefDOI}
\PrintBackRefs{\CurrentBib}

\bibitem [\protect \citeauthoryear {%
Casati%
, Valz-Gris%
\BCBL {}\ \BBA {} Guarnieri%
}{%
Casati%
\ \protect \BOthers {.}}{%
{\protect \APACyear {1980}}%
}]{%
casati1980on}
\APACinsertmetastar {%
casati1980on}%
\begin{APACrefauthors}%
Casati, G.%
, Valz-Gris, F.%
\BCBL {}\ \BBA {} Guarnieri, I.%
\end{APACrefauthors}%
\unskip\
\newblock
\APACrefYearMonthDay{1980}{}{}.
\newblock
{\BBOQ}\APACrefatitle {On the connection between quantization of nonintegrable
  systems and statistical theory of spectra} {On the connection between
  quantization of nonintegrable systems and statistical theory of
  spectra}.{\BBCQ}
\newblock
\APACjournalVolNumPages{Lettere al Nuovo Cimento (1971-1985)}{28}{}{279--282}.
\newblock
\begin{APACrefURL} \url{https://link.springer.com/article/10.1007/BF02798790}
  \end{APACrefURL}
\PrintBackRefs{\CurrentBib}

\bibitem [\protect \citeauthoryear {%
Chan%
, De~Luca%
\BCBL {}\ \BBA {} Chalker%
}{%
Chan%
\ \protect \BOthers {.}}{%
{\protect \APACyear {2018}}%
{\protect \APACexlab {{\protect \BCnt {1}}}}}]{%
chan2018solution}
\APACinsertmetastar {%
chan2018solution}%
\begin{APACrefauthors}%
Chan, A.%
, De~Luca, A.%
\BCBL {}\ \BBA {} Chalker, J\BPBI T.%
\end{APACrefauthors}%
\unskip\
\newblock
\APACrefYearMonthDay{2018{\protect \BCnt {1}}}{Nov}{}.
\newblock
{\BBOQ}\APACrefatitle {Solution of a Minimal Model for Many-Body Quantum Chaos}
  {Solution of a minimal model for many-body quantum chaos}.{\BBCQ}
\newblock
\APACjournalVolNumPages{Phys. Rev. X}{8}{4}{041019}.
\newblock
\begin{APACrefDOI} \doi{10.1103/PhysRevX.8.041019} \end{APACrefDOI}
\PrintBackRefs{\CurrentBib}

\bibitem [\protect \citeauthoryear {%
Chan%
, De~Luca%
\BCBL {}\ \BBA {} Chalker%
}{%
Chan%
\ \protect \BOthers {.}}{%
{\protect \APACyear {2018}}%
{\protect \APACexlab {{\protect \BCnt {2}}}}}]{%
chan2018spectral}
\APACinsertmetastar {%
chan2018spectral}%
\begin{APACrefauthors}%
Chan, A.%
, De~Luca, A.%
\BCBL {}\ \BBA {} Chalker, J\BPBI T.%
\end{APACrefauthors}%
\unskip\
\newblock
\APACrefYearMonthDay{2018{\protect \BCnt {2}}}{{\APACmonth{08}}}{}.
\newblock
{\BBOQ}\APACrefatitle {{Spectral Statistics in Spatially Extended Chaotic
  Quantum Many-Body Systems}} {{Spectral Statistics in Spatially Extended
  Chaotic Quantum Many-Body Systems}}.{\BBCQ}
\newblock
\APACjournalVolNumPages{Phys. Rev. Lett.}{121}{}{060601}.
\newblock
\begin{APACrefURL}
  \url{https://link.aps.org/doi/10.1103/PhysRevLett.121.060601}
  \end{APACrefURL}
\newblock
\begin{APACrefDOI} \doi{10.1103/PhysRevLett.121.060601} \end{APACrefDOI}
\PrintBackRefs{\CurrentBib}

\bibitem [\protect \citeauthoryear {%
Chan%
, Shivam%
, Huse%
\BCBL {}\ \BBA {} De~Luca%
}{%
Chan%
\ \protect \BOthers {.}}{%
{\protect \APACyear {2022}}%
}]{%
chan2022many}
\APACinsertmetastar {%
chan2022many}%
\begin{APACrefauthors}%
Chan, A.%
, Shivam, S.%
, Huse, D\BPBI A.%
\BCBL {}\ \BBA {} De~Luca, A.%
\end{APACrefauthors}%
\unskip\
\newblock
\APACrefYearMonthDay{2022}{}{}.
\newblock
{\BBOQ}\APACrefatitle {Many-body quantum chaos and space-time translational
  invariance} {Many-body quantum chaos and space-time translational
  invariance}.{\BBCQ}
\newblock
\APACjournalVolNumPages{Nat. Comm.}{13}{1}{7484}.
\newblock
\begin{APACrefDOI} \doi{10.1038/s41467-022-34318-1} \end{APACrefDOI}
\PrintBackRefs{\CurrentBib}

\bibitem [\protect \citeauthoryear {%
Collins%
, Matsumoto%
\BCBL {}\ \BBA {} Novak%
}{%
Collins%
\ \protect \BOthers {.}}{%
{\protect \APACyear {2022}}%
}]{%
collins2022weingarten}
\APACinsertmetastar {%
collins2022weingarten}%
\begin{APACrefauthors}%
Collins, B.%
, Matsumoto, S.%
\BCBL {}\ \BBA {} Novak, J.%
\end{APACrefauthors}%
\unskip\
\newblock
\APACrefYearMonthDay{2022}{}{}.
\newblock
{\BBOQ}\APACrefatitle {The weingarten calculus} {The weingarten
  calculus}.{\BBCQ}
\newblock
\APACjournalVolNumPages{Notices of the American Mathematical
  Society}{69}{05}{1}.
\newblock
\begin{APACrefDOI} \doi{10.1090/noti2474} \end{APACrefDOI}
\PrintBackRefs{\CurrentBib}

\bibitem [\protect \citeauthoryear {%
Cuiper%
, Wiesiolek%
\BCBL {}\ \BBA {} Verstraete%
}{%
Cuiper%
\ \protect \BOthers {.}}{%
{\protect \APACyear {2026}}%
}]{%
cuiper2026leshouches}
\APACinsertmetastar {%
cuiper2026leshouches}%
\begin{APACrefauthors}%
Cuiper, B\BPBI V\BHBI D.%
, Wiesiolek, W.%
\BCBL {}\ \BBA {} Verstraete, F.%
\end{APACrefauthors}%
\unskip\
\newblock
\APACrefYearMonthDay{2026}{}{}.
\newblock
\APACrefbtitle {Les Houches Lecture Notes on Tensor Networks.} {Les houches
  lecture notes on tensor networks.}
\newblock
\begin{APACrefURL} \url{https://arxiv.org/abs/2512.24390} \end{APACrefURL}
\PrintBackRefs{\CurrentBib}

\bibitem [\protect \citeauthoryear {%
Essler%
\ \BBA {} Fagotti%
}{%
Essler%
\ \BBA {} Fagotti%
}{%
{\protect \APACyear {2016}}%
}]{%
essler2016quench}
\APACinsertmetastar {%
essler2016quench}%
\begin{APACrefauthors}%
Essler, F\BPBI H\BPBI L.%
\BCBT {}\ \BBA {} Fagotti, M.%
\end{APACrefauthors}%
\unskip\
\newblock
\APACrefYearMonthDay{2016}{jun}{}.
\newblock
{\BBOQ}\APACrefatitle {Quench dynamics and relaxation in isolated integrable
  quantum spin chains} {Quench dynamics and relaxation in isolated integrable
  quantum spin chains}.{\BBCQ}
\newblock
\APACjournalVolNumPages{Journal of Statistical Mechanics: Theory and
  Experiment}{2016}{6}{064002}.
\newblock
\begin{APACrefURL} \url{https://dx.doi.org/10.1088/1742-5468/2016/06/064002}
  \end{APACrefURL}
\newblock
\begin{APACrefDOI} \doi{10.1088/1742-5468/2016/06/064002} \end{APACrefDOI}
\PrintBackRefs{\CurrentBib}

\bibitem [\protect \citeauthoryear {%
Fagotti%
\ \BBA {} Calabrese%
}{%
Fagotti%
\ \BBA {} Calabrese%
}{%
{\protect \APACyear {2008}}%
}]{%
fagotti2008evolution}
\APACinsertmetastar {%
fagotti2008evolution}%
\begin{APACrefauthors}%
Fagotti, M.%
\BCBT {}\ \BBA {} Calabrese, P.%
\end{APACrefauthors}%
\unskip\
\newblock
\APACrefYearMonthDay{2008}{Jul}{}.
\newblock
{\BBOQ}\APACrefatitle {Evolution of entanglement entropy following a quantum
  quench: Analytic results for the {XY} chain in a transverse magnetic field}
  {Evolution of entanglement entropy following a quantum quench: Analytic
  results for the {XY} chain in a transverse magnetic field}.{\BBCQ}
\newblock
\APACjournalVolNumPages{Phys. Rev. A}{78}{}{010306}.
\newblock
\begin{APACrefDOI} \doi{10.1103/PhysRevA.78.010306} \end{APACrefDOI}
\PrintBackRefs{\CurrentBib}

\bibitem [\protect \citeauthoryear {%
Feynman%
}{%
Feynman%
}{%
{\protect \APACyear {2018}}%
}]{%
feynman2018simulating}
\APACinsertmetastar {%
feynman2018simulating}%
\begin{APACrefauthors}%
Feynman, R\BPBI P.%
\end{APACrefauthors}%
\unskip\
\newblock
\APACrefYearMonthDay{2018}{}{}.
\newblock
{\BBOQ}\APACrefatitle {Simulating physics with computers} {Simulating physics
  with computers}.{\BBCQ}
\newblock
\BIn{} \APACrefbtitle {Feynman and computation} {Feynman and computation}\
  (\BPGS\ 133--153).
\newblock
\APACaddressPublisher{}{cRc Press}.
\PrintBackRefs{\CurrentBib}

\bibitem [\protect \citeauthoryear {%
Fisher%
, Khemani%
, Nahum%
\BCBL {}\ \BBA {} Vijay%
}{%
Fisher%
\ \protect \BOthers {.}}{%
{\protect \APACyear {2023}}%
}]{%
fisher2023random}
\APACinsertmetastar {%
fisher2023random}%
\begin{APACrefauthors}%
Fisher, M\BPBI P.%
, Khemani, V.%
, Nahum, A.%
\BCBL {}\ \BBA {} Vijay, S.%
\end{APACrefauthors}%
\unskip\
\newblock
\APACrefYearMonthDay{2023}{}{}.
\newblock
{\BBOQ}\APACrefatitle {Random {Quantum} {Circuits}} {Random {Quantum}
  {Circuits}}.{\BBCQ}
\newblock
\APACjournalVolNumPages{Annu. Rev. Conden. Ma. P.}{14}{1}{335-379}.
\newblock
\begin{APACrefURL}
  [{2023-01-11}]\url{https://www.annualreviews.org/content/journals/10.1146/annurev-conmatphys-031720-030658}
  \end{APACrefURL}
\newblock
\begin{APACrefDOI} \doi{10.1146/annurev-conmatphys-031720-030658}
  \end{APACrefDOI}
\PrintBackRefs{\CurrentBib}

\bibitem [\protect \citeauthoryear {%
Foligno%
\ \BBA {} Bertini%
}{%
Foligno%
\ \BBA {} Bertini%
}{%
{\protect \APACyear {2023}}%
}]{%
foligno2023growth}
\APACinsertmetastar {%
foligno2023growth}%
\begin{APACrefauthors}%
Foligno, A.%
\BCBT {}\ \BBA {} Bertini, B.%
\end{APACrefauthors}%
\unskip\
\newblock
\APACrefYearMonthDay{2023}{May}{}.
\newblock
{\BBOQ}\APACrefatitle {Growth of entanglement of generic states under
  dual-unitary dynamics} {Growth of entanglement of generic states under
  dual-unitary dynamics}.{\BBCQ}
\newblock
\APACjournalVolNumPages{Phys. Rev. B}{107}{}{174311}.
\newblock
\begin{APACrefURL} \url{https://link.aps.org/doi/10.1103/PhysRevB.107.174311}
  \end{APACrefURL}
\newblock
\begin{APACrefDOI} \doi{10.1103/PhysRevB.107.174311} \end{APACrefDOI}
\PrintBackRefs{\CurrentBib}

\bibitem [\protect \citeauthoryear {%
Foligno%
\ \BBA {} Bertini%
}{%
Foligno%
\ \BBA {} Bertini%
}{%
{\protect \APACyear {2025}}%
}]{%
foligno2024entanglement}
\APACinsertmetastar {%
foligno2024entanglement}%
\begin{APACrefauthors}%
Foligno, A.%
\BCBT {}\ \BBA {} Bertini, B.%
\end{APACrefauthors}%
\unskip\
\newblock
\APACrefYearMonthDay{2025}{}{}.
\newblock
{\BBOQ}\APACrefatitle {Entanglement of {D}isjoint {I}ntervals in
  {D}ual-{U}nitary {C}ircuits: {E}xact {R}esults} {Entanglement of {D}isjoint
  {I}ntervals in {D}ual-{U}nitary {C}ircuits: {E}xact {R}esults}.{\BBCQ}
\newblock
\APACjournalVolNumPages{{Quantum}}{9}{}{1678}.
\newblock
\begin{APACrefURL} \url{https://doi.org/10.22331/q-2025-03-26-1678}
  \end{APACrefURL}
\newblock
\begin{APACrefDOI} \doi{10.22331/q-2025-03-26-1678} \end{APACrefDOI}
\PrintBackRefs{\CurrentBib}

\bibitem [\protect \citeauthoryear {%
Foligno%
, Calabrese%
\BCBL {}\ \BBA {} Bertini%
}{%
Foligno%
\ \protect \BOthers {.}}{%
{\protect \APACyear {2025}}%
}]{%
foligno2025nonequilibrium}
\APACinsertmetastar {%
foligno2025nonequilibrium}%
\begin{APACrefauthors}%
Foligno, A.%
, Calabrese, P.%
\BCBL {}\ \BBA {} Bertini, B.%
\end{APACrefauthors}%
\unskip\
\newblock
\APACrefYearMonthDay{2025}{Feb}{}.
\newblock
{\BBOQ}\APACrefatitle {Nonequilibrium Dynamics of Charged Dual-Unitary
  Circuits} {Nonequilibrium dynamics of charged dual-unitary circuits}.{\BBCQ}
\newblock
\APACjournalVolNumPages{PRX Quantum}{6}{}{010324}.
\newblock
\begin{APACrefURL} \url{https://link.aps.org/doi/10.1103/PRXQuantum.6.010324}
  \end{APACrefURL}
\newblock
\begin{APACrefDOI} \doi{10.1103/PRXQuantum.6.010324} \end{APACrefDOI}
\PrintBackRefs{\CurrentBib}

\bibitem [\protect \citeauthoryear {%
Gopalakrishnan%
\ \BBA {} Lamacraft%
}{%
Gopalakrishnan%
\ \BBA {} Lamacraft%
}{%
{\protect \APACyear {2019}}%
}]{%
gopalakrishnan2019unitary}
\APACinsertmetastar {%
gopalakrishnan2019unitary}%
\begin{APACrefauthors}%
Gopalakrishnan, S.%
\BCBT {}\ \BBA {} Lamacraft, A.%
\end{APACrefauthors}%
\unskip\
\newblock
\APACrefYearMonthDay{2019}{Aug}{}.
\newblock
{\BBOQ}\APACrefatitle {Unitary circuits of finite depth and infinite width from
  quantum channels} {Unitary circuits of finite depth and infinite width from
  quantum channels}.{\BBCQ}
\newblock
\APACjournalVolNumPages{Phys. Rev. B}{100}{}{064309}.
\newblock
\begin{APACrefDOI} \doi{10.1103/PhysRevB.100.064309} \end{APACrefDOI}
\PrintBackRefs{\CurrentBib}

\bibitem [\protect \citeauthoryear {%
Haah%
, Hastings%
, Kothari%
\BCBL {}\ \BBA {} Low%
}{%
Haah%
\ \protect \BOthers {.}}{%
{\protect \APACyear {2023}}%
}]{%
haah2023quantum}
\APACinsertmetastar {%
haah2023quantum}%
\begin{APACrefauthors}%
Haah, J.%
, Hastings, M\BPBI B.%
, Kothari, R.%
\BCBL {}\ \BBA {} Low, G\BPBI H.%
\end{APACrefauthors}%
\unskip\
\newblock
\APACrefYearMonthDay{2023}{}{}.
\newblock
{\BBOQ}\APACrefatitle {Quantum Algorithm for Simulating Real Time Evolution of
  Lattice Hamiltonians} {Quantum algorithm for simulating real time evolution
  of lattice hamiltonians}.{\BBCQ}
\newblock
\APACjournalVolNumPages{SIAM Journal on
  Computing}{52}{6}{FOCS18-250-FOCS18-284}.
\newblock
\begin{APACrefURL} \url{https://doi.org/10.1137/18M1231511} \end{APACrefURL}
\newblock
\begin{APACrefDOI} \doi{10.1137/18M1231511} \end{APACrefDOI}
\PrintBackRefs{\CurrentBib}

\bibitem [\protect \citeauthoryear {%
Harrow%
\ \BBA {} Low%
}{%
Harrow%
\ \BBA {} Low%
}{%
{\protect \APACyear {2009}}%
}]{%
harrow2009random}
\APACinsertmetastar {%
harrow2009random}%
\begin{APACrefauthors}%
Harrow, A\BPBI W.%
\BCBT {}\ \BBA {} Low, R\BPBI A.%
\end{APACrefauthors}%
\unskip\
\newblock
\APACrefYearMonthDay{2009}{}{}.
\newblock
{\BBOQ}\APACrefatitle {Random quantum circuits are approximate 2-designs}
  {Random quantum circuits are approximate 2-designs}.{\BBCQ}
\newblock
\APACjournalVolNumPages{Communications in Mathematical
  Physics}{291}{1}{257--302}.
\newblock
\begin{APACrefDOI} \doi{10.1007/s00220-009-0873-6} \end{APACrefDOI}
\PrintBackRefs{\CurrentBib}

\bibitem [\protect \citeauthoryear {%
Huang%
}{%
Huang%
}{%
{\protect \APACyear {2020}}%
}]{%
huang2020dynamics}
\APACinsertmetastar {%
huang2020dynamics}%
\begin{APACrefauthors}%
Huang, Y.%
\end{APACrefauthors}%
\unskip\
\newblock
\APACrefYearMonthDay{2020}{}{}.
\newblock
{\BBOQ}\APACrefatitle {Dynamics of {R\'enyi} entanglement entropy in diffusive
  qudit systems} {Dynamics of {R\'enyi} entanglement entropy in diffusive qudit
  systems}.{\BBCQ}
\newblock
\APACjournalVolNumPages{IOP SciNotes}{1}{3}{035205}.
\newblock
\begin{APACrefURL} \url{https://dx.doi.org/10.1088/2633-1357/abd1e2}
  \end{APACrefURL}
\newblock
\begin{APACrefDOI} \doi{10.1088/2633-1357/abd1e2} \end{APACrefDOI}
\PrintBackRefs{\CurrentBib}

\bibitem [\protect \citeauthoryear {%
Ilievski%
, Medenjak%
, Prosen%
\BCBL {}\ \BBA {} Zadnik%
}{%
Ilievski%
\ \protect \BOthers {.}}{%
{\protect \APACyear {2016}}%
}]{%
ilievski2016quasi}
\APACinsertmetastar {%
ilievski2016quasi}%
\begin{APACrefauthors}%
Ilievski, E.%
, Medenjak, M.%
, Prosen, T.%
\BCBL {}\ \BBA {} Zadnik, L.%
\end{APACrefauthors}%
\unskip\
\newblock
\APACrefYearMonthDay{2016}{jun}{}.
\newblock
{\BBOQ}\APACrefatitle {Quasilocal charges in integrable lattice systems}
  {Quasilocal charges in integrable lattice systems}.{\BBCQ}
\newblock
\APACjournalVolNumPages{Journal of Statistical Mechanics: Theory and
  Experiment}{2016}{6}{064008}.
\newblock
\begin{APACrefURL} \url{https://dx.doi.org/10.1088/1742-5468/2016/06/064008}
  \end{APACrefURL}
\newblock
\begin{APACrefDOI} \doi{10.1088/1742-5468/2016/06/064008} \end{APACrefDOI}
\PrintBackRefs{\CurrentBib}

\bibitem [\protect \citeauthoryear {%
Kim%
\ \BBA {} Huse%
}{%
Kim%
\ \BBA {} Huse%
}{%
{\protect \APACyear {2013}}%
}]{%
kim2013ballistic}
\APACinsertmetastar {%
kim2013ballistic}%
\begin{APACrefauthors}%
Kim, H.%
\BCBT {}\ \BBA {} Huse, D\BPBI A.%
\end{APACrefauthors}%
\unskip\
\newblock
\APACrefYearMonthDay{2013}{Sep}{}.
\newblock
{\BBOQ}\APACrefatitle {Ballistic Spreading of Entanglement in a Diffusive
  Nonintegrable System} {Ballistic spreading of entanglement in a diffusive
  nonintegrable system}.{\BBCQ}
\newblock
\APACjournalVolNumPages{Phys. Rev. Lett.}{111}{}{127205}.
\newblock
\begin{APACrefDOI} \doi{10.1103/PhysRevLett.111.127205} \end{APACrefDOI}
\PrintBackRefs{\CurrentBib}

\bibitem [\protect \citeauthoryear {%
Klobas%
, Rylands%
\BCBL {}\ \BBA {} Bertini%
}{%
Klobas%
\ \protect \BOthers {.}}{%
{\protect \APACyear {2025}}%
}]{%
klobas2024translation}
\APACinsertmetastar {%
klobas2024translation}%
\begin{APACrefauthors}%
Klobas, K.%
, Rylands, C.%
\BCBL {}\ \BBA {} Bertini, B.%
\end{APACrefauthors}%
\unskip\
\newblock
\APACrefYearMonthDay{2025}{Apr}{}.
\newblock
{\BBOQ}\APACrefatitle {Translation symmetry restoration under random unitary
  dynamics} {Translation symmetry restoration under random unitary
  dynamics}.{\BBCQ}
\newblock
\APACjournalVolNumPages{Phys. Rev. B}{111}{}{L140304}.
\newblock
\begin{APACrefURL} \url{https://link.aps.org/doi/10.1103/PhysRevB.111.L140304}
  \end{APACrefURL}
\newblock
\begin{APACrefDOI} \doi{10.1103/PhysRevB.111.L140304} \end{APACrefDOI}
\PrintBackRefs{\CurrentBib}

\bibitem [\protect \citeauthoryear {%
Kuwahara%
\ \BBA {} Saito%
}{%
Kuwahara%
\ \BBA {} Saito%
}{%
{\protect \APACyear {2021}}%
}]{%
kuwahara2021lieb}
\APACinsertmetastar {%
kuwahara2021lieb}%
\begin{APACrefauthors}%
Kuwahara, T.%
\BCBT {}\ \BBA {} Saito, K.%
\end{APACrefauthors}%
\unskip\
\newblock
\APACrefYearMonthDay{2021}{Aug}{}.
\newblock
{\BBOQ}\APACrefatitle {Lieb-Robinson Bound and Almost-Linear Light Cone in
  Interacting Boson Systems} {Lieb-robinson bound and almost-linear light cone
  in interacting boson systems}.{\BBCQ}
\newblock
\APACjournalVolNumPages{Phys. Rev. Lett.}{127}{}{070403}.
\newblock
\begin{APACrefURL}
  \url{https://link.aps.org/doi/10.1103/PhysRevLett.127.070403}
  \end{APACrefURL}
\newblock
\begin{APACrefDOI} \doi{10.1103/PhysRevLett.127.070403} \end{APACrefDOI}
\PrintBackRefs{\CurrentBib}

\bibitem [\protect \citeauthoryear {%
L\"auchli%
\ \BBA {} Kollath%
}{%
L\"auchli%
\ \BBA {} Kollath%
}{%
{\protect \APACyear {2008}}%
}]{%
laeuchli2008spreading}
\APACinsertmetastar {%
laeuchli2008spreading}%
\begin{APACrefauthors}%
L\"auchli, A\BPBI M.%
\BCBT {}\ \BBA {} Kollath, C.%
\end{APACrefauthors}%
\unskip\
\newblock
\APACrefYearMonthDay{2008}{may}{}.
\newblock
{\BBOQ}\APACrefatitle {Spreading of correlations and entanglement after a
  quench in the one-dimensional Bose--Hubbard model} {Spreading of correlations
  and entanglement after a quench in the one-dimensional bose--hubbard
  model}.{\BBCQ}
\newblock
\APACjournalVolNumPages{J. Stat. Mech.}{2008}{05}{P05018}.
\newblock
\begin{APACrefDOI} \doi{10.1088/1742-5468/2008/05/p05018} \end{APACrefDOI}
\PrintBackRefs{\CurrentBib}

\bibitem [\protect \citeauthoryear {%
Lieb%
\ \BBA {} Robinson%
}{%
Lieb%
\ \BBA {} Robinson%
}{%
{\protect \APACyear {1972}}%
}]{%
lieb1972finite}
\APACinsertmetastar {%
lieb1972finite}%
\begin{APACrefauthors}%
Lieb, E\BPBI H.%
\BCBT {}\ \BBA {} Robinson, D\BPBI W.%
\end{APACrefauthors}%
\unskip\
\newblock
\APACrefYearMonthDay{1972}{}{}.
\newblock
{\BBOQ}\APACrefatitle {The finite group velocity of quantum spin systems} {The
  finite group velocity of quantum spin systems}.{\BBCQ}
\newblock
\APACjournalVolNumPages{Communications in mathematical
  physics}{28}{3}{251--257}.
\newblock
\begin{APACrefDOI} \doi{10.1007/BF01645779} \end{APACrefDOI}
\PrintBackRefs{\CurrentBib}

\bibitem [\protect \citeauthoryear {%
Liu%
\ \BBA {} Suh%
}{%
Liu%
\ \BBA {} Suh%
}{%
{\protect \APACyear {2014}}%
}]{%
liu2014entanglement}
\APACinsertmetastar {%
liu2014entanglement}%
\begin{APACrefauthors}%
Liu, H.%
\BCBT {}\ \BBA {} Suh, S\BPBI J.%
\end{APACrefauthors}%
\unskip\
\newblock
\APACrefYearMonthDay{2014}{Jan}{}.
\newblock
{\BBOQ}\APACrefatitle {Entanglement Tsunami: Universal Scaling in Holographic
  Thermalization} {Entanglement tsunami: Universal scaling in holographic
  thermalization}.{\BBCQ}
\newblock
\APACjournalVolNumPages{Phys. Rev. Lett.}{112}{}{011601}.
\newblock
\begin{APACrefURL}
  \url{https://link.aps.org/doi/10.1103/PhysRevLett.112.011601}
  \end{APACrefURL}
\newblock
\begin{APACrefDOI} \doi{10.1103/PhysRevLett.112.011601} \end{APACrefDOI}
\PrintBackRefs{\CurrentBib}

\bibitem [\protect \citeauthoryear {%
Mehta%
}{%
Mehta%
}{%
{\protect \APACyear {2004}}%
}]{%
mehta2004random}
\APACinsertmetastar {%
mehta2004random}%
\begin{APACrefauthors}%
Mehta, M.%
\end{APACrefauthors}%
\unskip\
\newblock
\APACrefYear{2004}.
\newblock
\APACrefbtitle {Random Matrices} {Random matrices}.
\newblock
\APACaddressPublisher{}{Academic Press}.
\newblock
\begin{APACrefURL} \url{https://books.google.co.uk/books?id=Kp3Nx03_gMwC}
  \end{APACrefURL}
\PrintBackRefs{\CurrentBib}

\bibitem [\protect \citeauthoryear {%
M\"uller%
, Heusler%
, Braun%
, Haake%
\BCBL {}\ \BBA {} Altland%
}{%
M\"uller%
\ \protect \BOthers {.}}{%
{\protect \APACyear {2004}}%
}]{%
muller2004semiclassical}
\APACinsertmetastar {%
muller2004semiclassical}%
\begin{APACrefauthors}%
M\"uller, S.%
, Heusler, S.%
, Braun, P.%
, Haake, F.%
\BCBL {}\ \BBA {} Altland, A.%
\end{APACrefauthors}%
\unskip\
\newblock
\APACrefYearMonthDay{2004}{{\APACmonth{07}}}{}.
\newblock
{\BBOQ}\APACrefatitle {Semiclassical Foundation of Universality in Quantum
  Chaos} {Semiclassical foundation of universality in quantum chaos}.{\BBCQ}
\newblock
\APACjournalVolNumPages{Phys. Rev. Lett.}{93}{}{014103}.
\newblock
\begin{APACrefURL} \url{https://link.aps.org/doi/10.1103/PhysRevLett.93.014103}
  \end{APACrefURL}
\newblock
\begin{APACrefDOI} \doi{10.1103/PhysRevLett.93.014103} \end{APACrefDOI}
\PrintBackRefs{\CurrentBib}

\bibitem [\protect \citeauthoryear {%
Nahum%
, Ruhman%
, Vijay%
\BCBL {}\ \BBA {} Haah%
}{%
Nahum%
\ \protect \BOthers {.}}{%
{\protect \APACyear {2017}}%
}]{%
nahum2017quantum}
\APACinsertmetastar {%
nahum2017quantum}%
\begin{APACrefauthors}%
Nahum, A.%
, Ruhman, J.%
, Vijay, S.%
\BCBL {}\ \BBA {} Haah, J.%
\end{APACrefauthors}%
\unskip\
\newblock
\APACrefYearMonthDay{2017}{Jul}{}.
\newblock
{\BBOQ}\APACrefatitle {Quantum Entanglement Growth under Random Unitary
  Dynamics} {Quantum entanglement growth under random unitary dynamics}.{\BBCQ}
\newblock
\APACjournalVolNumPages{Phys. Rev. X}{7}{}{031016}.
\newblock
\begin{APACrefDOI} \doi{10.1103/PhysRevX.7.031016} \end{APACrefDOI}
\PrintBackRefs{\CurrentBib}

\bibitem [\protect \citeauthoryear {%
Osborne%
}{%
Osborne%
}{%
{\protect \APACyear {2006}}%
}]{%
osborne2006efficient}
\APACinsertmetastar {%
osborne2006efficient}%
\begin{APACrefauthors}%
Osborne, T\BPBI J.%
\end{APACrefauthors}%
\unskip\
\newblock
\APACrefYearMonthDay{2006}{Oct}{}.
\newblock
{\BBOQ}\APACrefatitle {Efficient Approximation of the Dynamics of
  One-Dimensional Quantum Spin Systems} {Efficient approximation of the
  dynamics of one-dimensional quantum spin systems}.{\BBCQ}
\newblock
\APACjournalVolNumPages{Phys. Rev. Lett.}{97}{}{157202}.
\newblock
\begin{APACrefURL} \url{https://link.aps.org/doi/10.1103/PhysRevLett.97.157202}
  \end{APACrefURL}
\newblock
\begin{APACrefDOI} \doi{10.1103/PhysRevLett.97.157202} \end{APACrefDOI}
\PrintBackRefs{\CurrentBib}

\bibitem [\protect \citeauthoryear {%
Piroli%
, Bertini%
, Cirac%
\BCBL {}\ \BBA {} Prosen%
}{%
Piroli%
\ \protect \BOthers {.}}{%
{\protect \APACyear {2020}}%
}]{%
piroli2020exact}
\APACinsertmetastar {%
piroli2020exact}%
\begin{APACrefauthors}%
Piroli, L.%
, Bertini, B.%
, Cirac, J\BPBI I.%
\BCBL {}\ \BBA {} Prosen, T.%
\end{APACrefauthors}%
\unskip\
\newblock
\APACrefYearMonthDay{2020}{Mar}{}.
\newblock
{\BBOQ}\APACrefatitle {Exact dynamics in dual-unitary quantum circuits} {Exact
  dynamics in dual-unitary quantum circuits}.{\BBCQ}
\newblock
\APACjournalVolNumPages{Phys. Rev. B}{101}{}{094304}.
\newblock
\begin{APACrefDOI} \doi{10.1103/PhysRevB.101.094304} \end{APACrefDOI}
\PrintBackRefs{\CurrentBib}

\bibitem [\protect \citeauthoryear {%
Potter%
\ \BBA {} Vasseur%
}{%
Potter%
\ \BBA {} Vasseur%
}{%
{\protect \APACyear {2022}}%
}]{%
potter2022entanglement}
\APACinsertmetastar {%
potter2022entanglement}%
\begin{APACrefauthors}%
Potter, A\BPBI C.%
\BCBT {}\ \BBA {} Vasseur, R.%
\end{APACrefauthors}%
\unskip\
\newblock
\APACrefYearMonthDay{2022}{}{}.
\newblock
{\BBOQ}\APACrefatitle {Entanglement Dynamics in Hybrid Quantum Circuits}
  {Entanglement dynamics in hybrid quantum circuits}.{\BBCQ}
\newblock
\BIn{} A.~Bayat, S.~Bose\BCBL {}\ \BBA {} H.~Johannesson\ (\BEDS),
  \APACrefbtitle {Entanglement in Spin Chains: From Theory to Quantum
  Technology Applications} {Entanglement in spin chains: From theory to quantum
  technology applications}\ (\BPGS\ 211--249).
\newblock
\APACaddressPublisher{Cham}{Springer International Publishing}.
\newblock
\begin{APACrefURL} \url{https://doi.org/10.1007/978-3-031-03998-0_9}
  \end{APACrefURL}
\newblock
\begin{APACrefDOI} \doi{10.1007/978-3-031-03998-0_9} \end{APACrefDOI}
\PrintBackRefs{\CurrentBib}

\bibitem [\protect \citeauthoryear {%
Prange%
}{%
Prange%
}{%
{\protect \APACyear {1997}}%
}]{%
prange1997the}
\APACinsertmetastar {%
prange1997the}%
\begin{APACrefauthors}%
Prange, R\BPBI E.%
\end{APACrefauthors}%
\unskip\
\newblock
\APACrefYearMonthDay{1997}{{\APACmonth{03}}}{}.
\newblock
{\BBOQ}\APACrefatitle {The Spectral Form Factor Is Not Self-Averaging} {The
  spectral form factor is not self-averaging}.{\BBCQ}
\newblock
\APACjournalVolNumPages{Phys. Rev. Lett.}{78}{}{2280--2283}.
\newblock
\begin{APACrefURL} \url{https://link.aps.org/doi/10.1103/PhysRevLett.78.2280}
  \end{APACrefURL}
\newblock
\begin{APACrefDOI} \doi{10.1103/PhysRevLett.78.2280} \end{APACrefDOI}
\PrintBackRefs{\CurrentBib}

\bibitem [\protect \citeauthoryear {%
Prosen%
\ \BBA {} Pi{\v z}orn%
}{%
Prosen%
\ \BBA {} Pi{\v z}orn%
}{%
{\protect \APACyear {2007}}%
}]{%
prosen2007operator}
\APACinsertmetastar {%
prosen2007operator}%
\begin{APACrefauthors}%
Prosen, T.%
\BCBT {}\ \BBA {} Pi{\v z}orn, I.%
\end{APACrefauthors}%
\unskip\
\newblock
\APACrefYearMonthDay{2007}{}{}.
\newblock
{\BBOQ}\APACrefatitle {Operator space entanglement entropy in a transverse
  {Ising} chain} {Operator space entanglement entropy in a transverse {Ising}
  chain}.{\BBCQ}
\newblock
\APACjournalVolNumPages{Phys. Rev. A}{76}{3}{032316}.
\newblock
\begin{APACrefDOI} \doi{10.1103/physreva.76.032316} \end{APACrefDOI}
\PrintBackRefs{\CurrentBib}

\bibitem [\protect \citeauthoryear {%
Prosen%
\ \BBA {} {\v Z}nidari{\v c}%
}{%
Prosen%
\ \BBA {} {\v Z}nidari{\v c}%
}{%
{\protect \APACyear {2007}}%
}]{%
prosen2007is}
\APACinsertmetastar {%
prosen2007is}%
\begin{APACrefauthors}%
Prosen, T.%
\BCBT {}\ \BBA {} {\v Z}nidari{\v c}, M.%
\end{APACrefauthors}%
\unskip\
\newblock
\APACrefYearMonthDay{2007}{}{}.
\newblock
{\BBOQ}\APACrefatitle {Is the efficiency of classical simulations of quantum
  dynamics related to integrability?} {Is the efficiency of classical
  simulations of quantum dynamics related to integrability?}{\BBCQ}
\newblock
\APACjournalVolNumPages{Phys. Rev. E}{75}{}{015202}.
\newblock
\begin{APACrefDOI} \doi{10.1103/PhysRevE.75.015202} \end{APACrefDOI}
\PrintBackRefs{\CurrentBib}

\bibitem [\protect \citeauthoryear {%
Rakovszky%
, Pollmann%
\BCBL {}\ \BBA {} von Keyserlingk%
}{%
Rakovszky%
\ \protect \BOthers {.}}{%
{\protect \APACyear {2019}}%
}]{%
rakovszky2019sub}
\APACinsertmetastar {%
rakovszky2019sub}%
\begin{APACrefauthors}%
Rakovszky, T.%
, Pollmann, F.%
\BCBL {}\ \BBA {} von Keyserlingk, C\BPBI W.%
\end{APACrefauthors}%
\unskip\
\newblock
\APACrefYearMonthDay{2019}{Jun}{}.
\newblock
{\BBOQ}\APACrefatitle {Sub-ballistic Growth of {R}\'enyi Entropies due to
  Diffusion} {Sub-ballistic growth of {R}\'enyi entropies due to
  diffusion}.{\BBCQ}
\newblock
\APACjournalVolNumPages{Phys. Rev. Lett.}{122}{25}{250602}.
\newblock
\begin{APACrefDOI} \doi{10.1103/PhysRevLett.122.250602} \end{APACrefDOI}
\PrintBackRefs{\CurrentBib}

\bibitem [\protect \citeauthoryear {%
Suzuki%
}{%
Suzuki%
}{%
{\protect \APACyear {1990}}%
}]{%
suzuki1990fractal}
\APACinsertmetastar {%
suzuki1990fractal}%
\begin{APACrefauthors}%
Suzuki, M.%
\end{APACrefauthors}%
\unskip\
\newblock
\APACrefYearMonthDay{1990}{}{}.
\newblock
{\BBOQ}\APACrefatitle {Fractal decomposition of exponential operators with
  applications to many-body theories and Monte Carlo simulations} {Fractal
  decomposition of exponential operators with applications to many-body
  theories and monte carlo simulations}.{\BBCQ}
\newblock
\APACjournalVolNumPages{Physics Letters A}{146}{6}{319-323}.
\newblock
\begin{APACrefURL}
  \url{https://www.sciencedirect.com/science/article/pii/037596019090962N}
  \end{APACrefURL}
\newblock
\begin{APACrefDOI} \doi{https://doi.org/10.1016/0375-9601(90)90962-N}
  \end{APACrefDOI}
\PrintBackRefs{\CurrentBib}

\bibitem [\protect \citeauthoryear {%
Trotter%
}{%
Trotter%
}{%
{\protect \APACyear {1959}}%
}]{%
trotter1959product}
\APACinsertmetastar {%
trotter1959product}%
\begin{APACrefauthors}%
Trotter, H\BPBI F.%
\end{APACrefauthors}%
\unskip\
\newblock
\APACrefYearMonthDay{1959}{}{}.
\newblock
{\BBOQ}\APACrefatitle {On the product of semi-groups of operators} {On the
  product of semi-groups of operators}.{\BBCQ}
\newblock
\APACjournalVolNumPages{Proceedings of the American Mathematical
  Society}{10}{4}{545--551}.
\PrintBackRefs{\CurrentBib}

\bibitem [\protect \citeauthoryear {%
Vasseur%
}{%
Vasseur%
}{%
{\protect \APACyear {2026}}%
}]{%
vasseur2026leshouches}
\APACinsertmetastar {%
vasseur2026leshouches}%
\begin{APACrefauthors}%
Vasseur, R.%
\end{APACrefauthors}%
\unskip\
\newblock
\APACrefYearMonthDay{2026}{}{}.
\newblock
\APACrefbtitle {Les Houches lectures on random quantum circuits and monitored
  quantum dynamics.} {Les houches lectures on random quantum circuits and
  monitored quantum dynamics.}
\newblock
\begin{APACrefURL} \url{https://arxiv.org/abs/2602.17258} \end{APACrefURL}
\PrintBackRefs{\CurrentBib}

\bibitem [\protect \citeauthoryear {%
Vernier%
, Bertini%
, Giudici%
\BCBL {}\ \BBA {} Piroli%
}{%
Vernier%
\ \protect \BOthers {.}}{%
{\protect \APACyear {2023}}%
}]{%
vernier2023integrable}
\APACinsertmetastar {%
vernier2023integrable}%
\begin{APACrefauthors}%
Vernier, E.%
, Bertini, B.%
, Giudici, G.%
\BCBL {}\ \BBA {} Piroli, L.%
\end{APACrefauthors}%
\unskip\
\newblock
\APACrefYearMonthDay{2023}{Jun}{}.
\newblock
{\BBOQ}\APACrefatitle {Integrable Digital Quantum Simulation: Generalized Gibbs
  Ensembles and Trotter Transitions} {Integrable digital quantum simulation:
  Generalized gibbs ensembles and trotter transitions}.{\BBCQ}
\newblock
\APACjournalVolNumPages{Phys. Rev. Lett.}{130}{}{260401}.
\newblock
\begin{APACrefURL}
  \url{https://link.aps.org/doi/10.1103/PhysRevLett.130.260401}
  \end{APACrefURL}
\newblock
\begin{APACrefDOI} \doi{10.1103/PhysRevLett.130.260401} \end{APACrefDOI}
\PrintBackRefs{\CurrentBib}

\bibitem [\protect \citeauthoryear {%
Zhou%
\ \BBA {} Harrow%
}{%
Zhou%
\ \BBA {} Harrow%
}{%
{\protect \APACyear {2022}}%
}]{%
zhou2022maximal}
\APACinsertmetastar {%
zhou2022maximal}%
\begin{APACrefauthors}%
Zhou, T.%
\BCBT {}\ \BBA {} Harrow, A\BPBI W.%
\end{APACrefauthors}%
\unskip\
\newblock
\APACrefYearMonthDay{2022}{Nov}{}.
\newblock
{\BBOQ}\APACrefatitle {Maximal entanglement velocity implies dual unitarity}
  {Maximal entanglement velocity implies dual unitarity}.{\BBCQ}
\newblock
\APACjournalVolNumPages{Phys. Rev. B}{106}{}{L201104}.
\newblock
\begin{APACrefURL} \url{https://link.aps.org/doi/10.1103/PhysRevB.106.L201104}
  \end{APACrefURL}
\newblock
\begin{APACrefDOI} \doi{10.1103/PhysRevB.106.L201104} \end{APACrefDOI}
\PrintBackRefs{\CurrentBib}

\bibitem [\protect \citeauthoryear {%
Zhou%
\ \BBA {} Nahum%
}{%
Zhou%
\ \BBA {} Nahum%
}{%
{\protect \APACyear {2019}}%
}]{%
zhou2019emergent}
\APACinsertmetastar {%
zhou2019emergent}%
\begin{APACrefauthors}%
Zhou, T.%
\BCBT {}\ \BBA {} Nahum, A.%
\end{APACrefauthors}%
\unskip\
\newblock
\APACrefYearMonthDay{2019}{May}{}.
\newblock
{\BBOQ}\APACrefatitle {Emergent statistical mechanics of entanglement in random
  unitary circuits} {Emergent statistical mechanics of entanglement in random
  unitary circuits}.{\BBCQ}
\newblock
\APACjournalVolNumPages{Phys. Rev. B}{99}{}{174205}.
\newblock
\begin{APACrefURL} \url{https://link.aps.org/doi/10.1103/PhysRevB.99.174205}
  \end{APACrefURL}
\newblock
\begin{APACrefDOI} \doi{10.1103/PhysRevB.99.174205} \end{APACrefDOI}
\PrintBackRefs{\CurrentBib}

\bibitem [\protect \citeauthoryear {%
Zhou%
\ \BBA {} Nahum%
}{%
Zhou%
\ \BBA {} Nahum%
}{%
{\protect \APACyear {2020}}%
}]{%
zhou2020entanglement}
\APACinsertmetastar {%
zhou2020entanglement}%
\begin{APACrefauthors}%
Zhou, T.%
\BCBT {}\ \BBA {} Nahum, A.%
\end{APACrefauthors}%
\unskip\
\newblock
\APACrefYearMonthDay{2020}{Sep}{}.
\newblock
{\BBOQ}\APACrefatitle {Entanglement Membrane in Chaotic Many-Body Systems}
  {Entanglement membrane in chaotic many-body systems}.{\BBCQ}
\newblock
\APACjournalVolNumPages{Phys. Rev. X}{10}{}{031066}.
\newblock
\begin{APACrefDOI} \doi{10.1103/PhysRevX.10.031066} \end{APACrefDOI}
\PrintBackRefs{\CurrentBib}

\bibitem [\protect \citeauthoryear {%
{\v Z}nidari{\v c}%
}{%
{\v Z}nidari{\v c}%
}{%
{\protect \APACyear {2008}}%
}]{%
znidaric2008exact}
\APACinsertmetastar {%
znidaric2008exact}%
\begin{APACrefauthors}%
{\v Z}nidari{\v c}, M.%
\end{APACrefauthors}%
\unskip\
\newblock
\APACrefYearMonthDay{2008}{Sep}{}.
\newblock
{\BBOQ}\APACrefatitle {Exact convergence times for generation of random
  bipartite entanglement} {Exact convergence times for generation of random
  bipartite entanglement}.{\BBCQ}
\newblock
\APACjournalVolNumPages{Phys. Rev. A}{78}{}{032324}.
\newblock
\begin{APACrefURL} \url{https://link.aps.org/doi/10.1103/PhysRevA.78.032324}
  \end{APACrefURL}
\newblock
\begin{APACrefDOI} \doi{10.1103/PhysRevA.78.032324} \end{APACrefDOI}
\PrintBackRefs{\CurrentBib}

\end{thebibliography}

\end{document}